\newtheorem{theorem}{Theorem}[section]
\newtheorem{lemma}[theorem]{Lemma}
\newtheorem{observation}[theorem]{Observation}
\begin{document}

\title{{\bf Pushable chromatic number of graphs \\ with degree constraints\thanks{The authors were partly supported by ANR project HOSIGRA (ANR-17-CE40-0022) and by IFCAM project ``Applications of graph homomorphisms'' (MA/IFCAM/18/39).}}}
\author{
{\sc Julien Bensmail}$^{a}$, {\sc Sandip Das}$\,^b$, {\sc Soumen Nandi}$\,^{c}$, {\sc Th\'eo Pierron}$\,^{d,f}$,\\
{\sc Soumyajit Paul}$^{c}$, {\sc Sagnik Sen}$^{e}$, {\sc \'Eric Sopena}$^{f}$\\
\mbox{}\\
{\small $(a)$ Universit\'e C\^ote d'Azur, Inria, CNRS, I3S, France}\\
{\small $(b)$ Indian Statistical Institute, Kolkata, India}\\
{\small $(c)$ Institute of Engineering \& Management, Kolkata, India}\\
{\small $(d)$ Masaryk University, Brno, Czech Republic}\\
{\small $(e)$ Indian Institute of Technology - Dharwad, Dharwad, India}\\
{\small $(f)$ Univ. Bordeaux, CNRS, Bordeaux INP, LaBRI, UMR 5800, F-33400 Talence, France}\\}

\date{\today}

\maketitle

\begin{abstract}
Pushable homomorphisms and the pushable chromatic number $\chi_p$ of oriented graphs were introduced by Klostermeyer and MacGillivray in 2004. 
They notably observed that,
for any oriented graph $\overrightarrow{G}$, we have 
$\chi_p(\overrightarrow{G}) \leq \chi_o(\overrightarrow{G}) \leq 2 \chi_p(\overrightarrow{G})$, where $\chi_o(\overrightarrow{G})$ denotes the oriented chromatic number of $\overrightarrow{G}$.
This stands as first general bounds on $\chi_p$. This parameter was further studied in later works.

This work is dedicated to the pushable chromatic number of oriented graphs fulfilling particular degree conditions. For all $\Delta \geq 29$, we first prove that the maximum value of the pushable chromatic number of an oriented graph with maximum degree $\Delta$ lies between $2^{\frac{\Delta}{2}-1}$ and $(\Delta-3) \cdot (\Delta-1) \cdot  2^{\Delta-1} + 2$ which implies an improved bound on the oriented chromatic number of the same family of graphs. For subcubic oriented graphs, that is, when $\Delta \leq 3$, we then prove that the maximum value of the pushable chromatic number is~$6$ or~$7$.
We also prove that the maximum value of the pushable chromatic number of oriented graphs with maximum average degree less than~$3$ lies between~$5$ and~$6$. The former upper bound of~$7$ also holds as an upper bound on the pushable chromatic number of planar oriented graphs with girth at least~$6$.

\medskip

\noindent \textit{Keywords:} oriented coloring, push operation, 
graph homomorphism, maximum degree, subcubic graph, 
maximum average degree  

\end{abstract}

\section{Introduction and main results}


An \textit{oriented graph} is a loopless directed graph without opposite arcs. Equivalently, an oriented graph $\overrightarrow{G}$ can be seen as an \textit{orientation} of a simple undirected graph $G$. Throughout this paper, we stick to the notation from the previous sentence. That is, we always refer to an oriented graph $\overrightarrow{G}$ using an arrow symbol, which makes apparent that $\overrightarrow{G}$ is an orientation of $G$. We denote by $V(G)$ and $E(G)$ the sets of vertices and edges of $G$, respectively, while we denote by $V(\overrightarrow{G})$ and $A(\overrightarrow{G})$ the sets of vertices and arcs of $\overrightarrow{G}$, respectively. Also, when referring to a notation, notion or term for $\overrightarrow{G}$ that is usually defined for undirected graphs, we implicitly refer to the corresponding notation, notion or term regarding $G$.

The notions of \textit{oriented coloring} and \textit{oriented chromatic number} of oriented graphs were introduced by 
Courcelle~\cite{courcelle-monadic} in 1994,
and have been intensively studied since then 
(see the recent survey~\cite{sopena_survey} for more details). 
One way of defining these notions is through the notion of \textit{graph homomorphisms}. 
For two oriented graphs $\overrightarrow{G}$ and $\overrightarrow{H}$,
a \textit{homomorphism} from $ \overrightarrow{G}$ to $ \overrightarrow{H}$ is a
 mapping $\phi: V(\overrightarrow{G}) \rightarrow V(\overrightarrow{H})$ 
 such that $uv \in A(\overrightarrow{G})$ implies 
 $\phi(u)\phi(v) \in A(\overrightarrow{H})$. 
 We write $ \overrightarrow{G} \rightarrow  \overrightarrow{H}$ whenever 
 a homomorphism from $\overrightarrow{G}$ to $ \overrightarrow{H}$ exists.   
 The \textit{oriented chromatic number} $\chi_o( \overrightarrow{G})$ of  
 $\overrightarrow{G}$ is 
  the minimum order (number of vertices) of an oriented graph $\overrightarrow{H}$ such that
$\overrightarrow{G} \rightarrow \overrightarrow{H}$.
  
\medskip

In 2004, Klostermeyer and
 MacGillivray~\cite{push} introduced the \textit{pushable chromatic number} of oriented graphs. \textit{Pushing} a vertex $v$ of an oriented graph $\overrightarrow{G}$ means changing the  orientation of all arcs incident with $v$, i.e., replacing every arc $vu$ by the arc $uv$, and vice versa.
   Two  oriented graphs $\overrightarrow{G}$
 and $\overrightarrow{G}'$ are \textit{in a push relationship} if $\overrightarrow{G}'$ can be obtained from $\overrightarrow{G}$ by pushing some vertices 
 of $\overrightarrow{G}$.
 Note that  being in push relationship is  an
equivalence relation. The class of the oriented graphs that are in a push relationship with $\overrightarrow{G}$ is denoted by  $[\overrightarrow{G}]$. 
Observe that any two oriented graphs from $[\overrightarrow{G}]$ have the same underlying graph, which is $G$.
 
For two oriented graphs $\overrightarrow{G}$ and $\overrightarrow{H}$, a \textit{pushable homomorphism} from $ \overrightarrow{G}$ to $ \overrightarrow{H}$ is a
 mapping $\phi: V(\overrightarrow{G}) \rightarrow V(\overrightarrow{H})$ such that 
 there exists  
$\overrightarrow{G}' \in [\overrightarrow{G}]$ 
for which $\phi$ is a homomorphism from $\overrightarrow{G}'$ to $\overrightarrow{H}$.  
 We write $\overrightarrow{G} \xrightarrow{{\rm push}}  \overrightarrow{H}$ whenever there exists a 
 pushable homomorphism from $ \overrightarrow{G}$ to $ \overrightarrow{H}$.
 The \textit{pushable chromatic number} $\chi_p( \overrightarrow{G})$ of  $\overrightarrow{G}$ is 
  the minimum order of an oriented graph $\overrightarrow{H}$ such that
$\overrightarrow{G} \xrightarrow{{\rm push}} \overrightarrow{H}$.
 
The seminal work of Klostermeyer and MacGillivray on these notions opened the way to more works on the topic. For instance, results on the pushable chromatic number can be found in ~\cite{nandi-bensmail-push1,ochem-push,sen-push}, while the  push operation was further studied in~\cite{fisher-push, klostermeyer1, klostermeyer2, Garyandwood, Mosesian, Pretzel3, Pretzel2, Pretzel1}. Some complexity issues related to pushable homomorphisms were studied 
in~\cite{ochem-push,push}. Regarding our investigations in this paper, an important result from the seminal work~\cite{push} of Klostermeyer and MacGillivray is the following general relation between $\chi_o$ and $\chi_p$.
 
\begin{theorem}[Klostermeyer, MacGillivray~\cite{push}]\label{theorem:relation-chio-chip}
For every oriented graph $\overrightarrow{G}$, we have 
$$\chi_p(\overrightarrow{G}) \leq \chi_o(\overrightarrow{G}) \leq 2 \chi_p(\overrightarrow{G}).$$
\end{theorem}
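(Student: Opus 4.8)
The plan is to establish the two inequalities separately, the left one being essentially immediate and the right one requiring a doubling construction on the target graph.

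First I would handle $\chi_p(\overrightarrow{G}) \leq \chi_o(\overrightarrow{G})$. The key observation is that an ordinary homomorphism is the special case of a pushable one in which no vertex is pushed: if $\overrightarrow{G} \to \overrightarrow{H}$ via some $\phi$, then taking $\overrightarrow{G}' = \overrightarrow{G} \in [\overrightarrow{G}]$ shows that $\phi$ witnesses $\overrightarrow{G} \xrightarrow{{\rm push}} \overrightarrow{H}$. Hence every target realizing $\chi_o(\overrightarrow{G})$ also realizes a pushable homomorphism, and minimizing the order yields $\chi_p(\overrightarrow{G}) \leq \chi_o(\overrightarrow{G})$.

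For the right inequality, set $k = \chi_p(\overrightarrow{G})$ and fix an oriented graph $\overrightarrow{H}$ on $k$ vertices together with a pushable homomorphism from $\overrightarrow{G}$ to $\overrightarrow{H}$. By definition this provides a set $S \subseteq V(\overrightarrow{G})$ of pushed vertices yielding $\overrightarrow{G}' \in [\overrightarrow{G}]$, and a homomorphism $\phi \colon \overrightarrow{G}' \to \overrightarrow{H}$. The idea is to record, alongside the image $\phi(w)$, whether $w$ was pushed. Thus I would build a target $\overrightarrow{H}'$ on vertex set $V(\overrightarrow{H}) \times \{0,1\}$ and define $\psi(w) = (\phi(w), s_w)$, where $s_w = 1$ if $w \in S$ and $s_w = 0$ otherwise. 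The arcs of $\overrightarrow{H}'$ are dictated by how pushing reorients arcs: since pushing a vertex reverses each incident arc exactly once, an arc $uv$ of $\overrightarrow{G}$ survives as $uv$ in $\overrightarrow{G}'$ when $u$ and $v$ agree on membership in $S$ (zero or two pushed endpoints) and becomes $vu$ when they disagree (exactly one pushed endpoint). Accordingly I would declare $(a,i)(b,j) \in A(\overrightarrow{H}')$ to hold precisely when either $i = j$ and $ab \in A(\overrightarrow{H})$, or $i \neq j$ and $ba \in A(\overrightarrow{H})$. A short parity check on $(s_u, s_v)$ then confirms that $\psi$ sends every arc of $\overrightarrow{G}$ to an arc of $\overrightarrow{H}'$.

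The step I expect to require the most care is verifying that $\overrightarrow{H}'$ is a genuine oriented graph, that is, that it has neither loops nor pairs of opposite arcs; only then is its order $2k$ a legitimate witness for $\chi_o$. Loops are excluded because both defining conditions would force a loop in $\overrightarrow{H}$. Likewise, a pair of opposite arcs $(a,i)(b,j)$ and $(b,j)(a,i)$ would, in either parity case, force both $ab \in A(\overrightarrow{H})$ and $ba \in A(\overrightarrow{H})$, contradicting that $\overrightarrow{H}$ is itself an oriented graph. With $\overrightarrow{H}'$ valid and $\psi$ a homomorphism, we obtain $\overrightarrow{G} \to \overrightarrow{H}'$, whence $\chi_o(\overrightarrow{G}) \leq 2k = 2\chi_p(\overrightarrow{G})$, completing the proof.
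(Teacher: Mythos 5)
Your proof is correct: the left inequality follows exactly as you say from the fact that pushing no vertex realizes an ordinary homomorphism as a pushable one, and your doubled target $\overrightarrow{H}'$ on $V(\overrightarrow{H})\times\{0,1\}$, with the parity-twisted arc set and the verification that it is loopless and has no opposite arcs, soundly yields $\chi_o(\overrightarrow{G})\leq 2\chi_p(\overrightarrow{G})$. Note that the paper itself gives no proof of this statement --- it is imported from Klostermeyer and MacGillivray's cited work --- and your construction, which records the image together with a bit indicating whether the vertex was pushed, is essentially the standard argument from that reference, so there is nothing to flag.
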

 
Theorem~\ref{theorem:relation-chio-chip} yields another point for studying the pushable chromatic number of oriented graphs, as it is a way to get bounds on the oriented chromatic number. Sen, in~\cite{sen-push}, also established a strong connection between pushable homomorphisms and oriented homomorphisms of oriented graphs.

\medskip

The notions of oriented chromatic number and pushable chromatic number can also be extended to undirected graphs $G$ by setting 
 $$\chi_{o}(G)= \max\{\chi_o(\overrightarrow{G}): \overrightarrow{G} \text{ is an orientation of } G\}$$ and
 $$\chi_{p}(G)= \max\{\chi_p(\overrightarrow{G}): \overrightarrow{G} \text{ is an orientation of } G\}.$$
 
\medskip

A natural question is, given a family $\mathcal{F}$ of undirected graphs, how large can the oriented chromatic number and the pushable chromatic number of its members be? In other words, we are interested in the two parameters $\chi_{o}(\mathcal{F})= \max\{\chi_o(G) : G \in \mathcal{F}\}$ and $\chi_{p}(\mathcal{F})= \max\{\chi_p(G) : G \in \mathcal{F}\}$. Regarding the pushable chromatic number, partial results were obtained for the families of outerplanar graphs, $2$-trees, planar graphs, planar graphs with girth restrictions, and graphs with bounded acyclic chromatic number (see~\cite{ochem-push, push, sen-push}).
However, to the best of our knowledge, nothing general is known regarding the family $\mathcal{G}_{\Delta}$ of graphs with maximum degree $\Delta $ and the family 
 $\mathcal{G}^{\rm c}_{\Delta}$ of connected graphs with maximum degree $\Delta$. 
 Unlike the ordinary chromatic number, 
 the oriented and pushable chromatic number for the families $\mathcal{G}_{\Delta}$ and 
 $\mathcal{G}^{\rm c}_{\Delta}$  can be different.  
 Finding the oriented and pushable chromatic number $\mathcal{G}^{\rm c}_{\Delta}$ is our main concern in this paper.

\bigskip
We thus initiate the study of the pushable chromatic number of  $\mathcal{G}^{\rm c}_{\Delta}$. Adapting a probabilistic proof used by 
Kostochka, Sopena and Zhu in~\cite{Kostochka97acyclicand}, we first provide general bounds for large enough $\Delta$.

\begin{theorem}\label{Push_chrom_th main}
For all $\Delta \geq 29$, we have $$2^{\frac{\Delta}{2} - 1} \leq \chi_p(\mathcal{G}^{\rm c}_{\Delta}) \leq (\Delta-3) \cdot (\Delta-1) \cdot 2^{\Delta-1} +2.$$
\end{theorem}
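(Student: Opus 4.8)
The statement packages two independent bounds, and the plan is to prove them by separate techniques: the lower bound by an entropy/counting argument, and the upper bound by a Lovász Local Lemma construction in the spirit of Kostochka, Sopena and Zhu. Note that the hypothesis $\Delta\ge 29$ should be needed only to make the Local Lemma inequalities go through for the upper bound; the lower bound argument will work for all $\Delta$ (with $n$ large).

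For the lower bound $2^{\Delta/2-1}\le \chi_p(\mathcal{G}^{\rm c}_\Delta)$, I would fix a connected $\Delta$-regular graph $G$ on $n$ vertices with $n$ large (and $\Delta n$ even), so that $G$ has exactly $2^{n\Delta/2}$ orientations, and then bound the number of orientations $\overrightarrow{G}$ of $G$ with $\chi_p(\overrightarrow{G})\le k$. Any such orientation is certified by a triple consisting of a target oriented graph $\overrightarrow{H}$ of order $k$, a push-set $S\subseteq V(G)$, and a map $\phi\colon V(G)\to V(\overrightarrow{H})$; crucially, since $G$ is fixed, the triple $(\overrightarrow{H},S,\phi)$ determines $\overrightarrow{G}$ uniquely, because the orientation of each edge $uv$ in the pushed graph $\overrightarrow{G}'\in[\overrightarrow{G}]$ is forced to be the arc $\phi(u)\phi(v)$, after which $\overrightarrow{G}$ is recovered by pushing $S$ back. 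Hence the number of such orientations is at most $C(k)\cdot 2^n\cdot k^n = C(k)\,(2k)^n$, where $C(k)$ counts the oriented graphs of order $k$ and is independent of $n$. Comparing with $2^{n\Delta/2}=(2^{\Delta/2})^n$, whenever $2k<2^{\Delta/2}$, that is $k<2^{\Delta/2-1}$, the count $C(k)\,(2k)^n$ is eventually (in $n$) smaller than $2^{n\Delta/2}$, so some orientation of $G$ must have $\chi_p>k$. I want to stress the single point that makes the bound come out as $2^{\Delta/2-1}$ rather than $2^{\Delta/2}$: it is exactly the extra factor $2^n$ for the choice of push-set $S$, which turns the oriented-coloring count $k^n$ into $(2k)^n$ and shifts the threshold down by one in the exponent.

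For the upper bound, I would fix $\overrightarrow{G}\in\mathcal{G}^{\rm c}_\Delta$, since it suffices to produce a pushable homomorphism from $\overrightarrow{G}$ to some oriented graph of order $N=(\Delta-3)(\Delta-1)2^{\Delta-1}+2$. The idea is to choose, independently for each vertex $v$, both a random sign $\sigma(v)\in\{+,-\}$ (deciding whether $v$ is pushed) and a random color, where a color carries two components: a vector in $\mathbb{Z}_2^{\Delta-1}$ and an index from a set of size about $(\Delta-1)(\Delta-3)$, matching the shape of $N$. After pushing the negative vertices one obtains $\overrightarrow{G}'\in[\overrightarrow{G}]$, and the goal is that the color-digraph induced by $\phi$ on $\overrightarrow{G}'$ be loopless and free of opposite arcs, so that $\phi$ is a genuine homomorphism of $\overrightarrow{G}'$ into that digraph of order $\le N$. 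The bad events are a monochromatic edge, and two arcs of $\overrightarrow{G}'$ inducing a pair of opposite arcs between two color classes. I would then apply the symmetric Local Lemma, bounding the probability $p$ of each bad event and the dependency degree $d$ in terms of $\Delta$, and check that $\mathrm{e}\,p\,(d+1)\le 1$ precisely when $\Delta\ge 29$.

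The main obstacle is that the second type of bad event is a priori non-local: two arcs creating opposite arcs between color classes may sit arbitrarily far apart in $\overrightarrow{G}$, so the Local Lemma does not apply directly. This is exactly the difficulty that the vector component $\mathbb{Z}_2^{\Delta-1}$ of the colors is designed to kill, as in the Kostochka--Sopena--Zhu argument: by recording in a vertex's color enough information about the arc pattern towards its neighbors, one forces any two vertices that could generate an opposite-arc conflict to share a common neighbor, thereby confining every bad event to a bounded neighborhood and making the dependency degree a function of $\Delta$ only. Carrying out this localization in the presence of the additional per-vertex sign variable $\sigma$, and tuning the index set so that the final number of colors is $(\Delta-3)(\Delta-1)2^{\Delta-1}+2$, is where I expect the real work to lie; once the events are local, the Local Lemma computation is routine and yields the threshold $\Delta\ge 29$.
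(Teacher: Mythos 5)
Your lower bound argument is correct, and it takes a different (more self-contained) route than the paper: the paper gets $2^{\Delta/2-1}$ in one line by combining the Kostochka--Sopena--Zhu bound $2^{\Delta/2}\le\chi_o(\mathcal{G}_{\Delta})$ with Theorem~\ref{theorem:relation-chio-chip}, whereas you redo the underlying counting argument directly for pushable homomorphisms, the extra $2^n$ choices of a push-set playing exactly the role of the factor $2$ lost in $\chi_o\le 2\chi_p$. Your certificate map $(\overrightarrow{H},S,\phi)\mapsto\overrightarrow{G}$ is indeed injective (an oriented target has no opposite arcs, so $\phi$ forces the orientation of every edge of the pushed graph), and your version has the small additional merit of working inside the class of connected graphs from the start.

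The upper bound, however, has a genuine gap, located precisely at the step you defer as ``where the real work lies''. First, a factual point: the Kostochka--Sopena--Zhu argument you invoke, and the paper's adaptation of it, are not Local Lemma arguments at all. The paper proves by a first-moment/union-bound computation (Lemma~\ref{Push_chrom_key-lemma}) that a \emph{random tournament} on $(\Delta-3)(\Delta-1)2^{\Delta-1}$ vertices has Property $P(\Delta-1,\Delta-1)$, propagates this to Property $P(j,f_t(j))$ for all smaller $j$ (Lemma~\ref{Push_chrom_key-lemma-recursive}), and then maps any non-regular (hence $(\Delta-1)$-degenerate) connected graph into this \emph{fixed} target greedily along a degeneracy order (Lemma~\ref{Push_chrom_key-lemma2}); the ``$+2$'' comes from a patch for $\Delta$-regular graphs (delete one arc, color, reinsert with two extra target vertices). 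Fixing the target in advance is what makes every decision local, so no localization issue ever arises. Your plan instead randomly colors $\overrightarrow{G}$ itself and must rule out opposite-arc conflicts between arbitrarily distant pairs of edges; you assert that a vector component in $\mathbb{Z}_2^{\Delta-1}$ ``recording the arc pattern towards the neighbors'' forces conflicting vertices to share a common neighbor, but no mechanism is given, and the one you gesture at cannot work as stated: a coordinate of a length-$(\Delta-1)$ vector attached to $u$ has no meaning comparable across distant vertices, since which coordinate corresponds to which neighbor is vertex-dependent, and indexing coordinates by the neighbors' colors is circular (the colors are the random objects being chosen). The known ways to give such coordinates a global meaning index them by the classes of an auxiliary coloring (a proper coloring of $G^2$, or an acyclic coloring), and then the vector length becomes the number of those classes, of order $\Delta^2$ or $\Delta^{4/3}$, yielding bounds of the shape $\Delta^2\cdot 2^{\Delta^2}$ --- hopelessly far from $(\Delta-3)(\Delta-1)2^{\Delta-1}+2$. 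This obstruction is exactly why the universal-tournament route is used in the literature. Finally, the claim that your LLL computation would ``yield the threshold $\Delta\ge29$'' is unsupported: in the paper the constant $29$ comes from the final numerical inequality in the union bound of Lemma~\ref{Push_chrom_key-lemma}, a computation your approach never performs.
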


Note that the lower bound and the upper bound in Theorem~\ref{Push_chrom_th main} are both exponential in $\Delta$. Also, it is worth mentioning that the upper bound established in Theorem~\ref{Push_chrom_th main} is better than the upper bound that one would directly get from Theorem~\ref{theorem:relation-chio-chip} and the best upper bound on $\chi_o(\mathcal{G}_{\Delta})$ to date, 
which is that $\chi_o(\mathcal{G}_{\Delta}) \leq 2\Delta^2 \cdot 2^{\Delta}$ (see~\cite{Kostochka97acyclicand}). Actually, employing another trick used by Duffy in~\cite{duffy}, Theorem~\ref{Push_chrom_th main} also yields the following improved upper bound on $\chi_o(\mathcal{G}_{\Delta}^c)$ as a side result.

\begin{theorem}\label{Push_chrom_th main2}
For all $\Delta \geq 29$, we have
$$2^{\frac{\Delta}{2}} \leq \chi_o(\mathcal{G}^{\rm c}_{\Delta}) \leq (\Delta-3) \cdot (\Delta-1) \cdot 2^{\Delta} +2.$$
\end{theorem}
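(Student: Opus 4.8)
The statement packages two bounds, and I would prove them separately; in each case the point is to improve by a factor~$2$ on what Theorems~\ref{theorem:relation-chio-chip} and~\ref{Push_chrom_th main} give directly. For the upper bound, the plan is to feed the pushable target of Theorem~\ref{Push_chrom_th main} into the doubling construction underlying $\chi_o \le 2\chi_p$, and then to save two vertices via Duffy's trick. Let $\overrightarrow{H}$ be an oriented graph on $n=(\Delta-3)(\Delta-1)2^{\Delta-1}+2$ vertices with $\overrightarrow{G}\xrightarrow{{\rm push}}\overrightarrow{H}$ for every orientation $\overrightarrow{G}$ of a connected graph of maximum degree $\Delta$. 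I would define $\overrightarrow{H}^\ast$ on $V(\overrightarrow{H})\times\{0,1\}$ by adding, for each arc $ab$ of $\overrightarrow{H}$ and each $s\in\{0,1\}$, the arcs $(a,s)(b,s)$ and $(b,s)(a,1-s)$. A short case analysis on the push-parities $s(u),s(v)$ of the endpoints of an arc $uv$ of $\overrightarrow{G}$ shows that sending each $v$ to $(\phi(v),s(v))$ turns a pushable homomorphism $\phi$ into an ordinary homomorphism $\overrightarrow{G}\to\overrightarrow{H}^\ast$, and that $\overrightarrow{H}^\ast$ has neither loops nor opposite arcs since $\overrightarrow{H}$ does not. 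This already yields $\chi_o(\mathcal{G}^{\rm c}_{\Delta})\le 2n=(\Delta-3)(\Delta-1)2^{\Delta}+4$.

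To replace $+4$ by $+2$, I would examine the two extra vertices responsible for the $+2$ in Theorem~\ref{Push_chrom_th main}, namely a universal source $a$ and a universal sink $b$ of $\overrightarrow{H}$. Computing in- and out-neighbourhoods in $\overrightarrow{H}^\ast$ shows that $(a,1)$ and $(b,0)$ share the same out-neighbourhood (all of level~$1$) and the same in-neighbourhood (all of level~$0$), and symmetrically for $(a,0)$ and $(b,1)$; being twins, each such pair can be merged without creating loops or opposite arcs and without affecting homomorphisms into $\overrightarrow{H}^\ast$. The two merges collapse the four special copies into two vertices, producing a valid oriented target on $2n-2=(\Delta-3)(\Delta-1)2^{\Delta}+2$ vertices, which is exactly the claimed bound. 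I expect the careful bookkeeping of all arcs incident to these four copies --- verifying that the twin relations are exact and that no opposite arc is created by the merge --- to be the main technical point, since it relies on the precise adjacencies of $a$ and $b$ in the construction behind Theorem~\ref{Push_chrom_th main}.

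For the lower bound, routing through $\chi_o\ge\chi_p$ only gives $2^{\Delta/2-1}$, so instead I would re-run the probabilistic counting argument behind the lower bound of Theorem~\ref{Push_chrom_th main}, but counting oriented rather than pushable homomorphisms. The only change is that a pushable homomorphism into a $t$-vertex target absorbs an entire push class, i.e.\ up to $2^{\,n-1}$ orientations of a fixed $n$-vertex graph at once, whereas an ordinary homomorphism absorbs a single one; removing this factor from the counting raises the threshold below which a random connected graph of maximum degree $\Delta$ fails to be colourable from $2^{\Delta/2-1}$ up to $2^{\Delta/2}$. Since the connectivity of the random instance is secured exactly as in Theorem~\ref{Push_chrom_th main}, the only genuinely new ingredient is this re-accounting of the factor~$2$, which delivers $\chi_o(\mathcal{G}^{\rm c}_{\Delta})\ge 2^{\Delta/2}$.
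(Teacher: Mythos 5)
Your upper-bound route --- double the whole $n$-vertex target of Theorem~\ref{Push_chrom_th main} and then merge twin copies to recover the ``$+2$'' --- breaks at the merging step, and its premise is unavailable in the first place. Theorem~\ref{Push_chrom_th main} does not supply a single oriented graph $\overrightarrow{H}$ to which \emph{every} orientation of every graph in $\mathcal{G}^{\rm c}_{\Delta}$ maps, let alone one whose two extra vertices form a universal source and a universal sink: in the paper's proof, the two extra vertices $x,y$ are the images of the endpoints $u,v$ of a deleted arc, and the arcs incident with $x$ and $y$ are chosen \emph{ad hoc}, depending on $\overrightarrow{G}$, because they must reproduce the pattern of arcs between $u$ (resp.\ $v$) and its already-coloured neighbours --- a pattern that cannot be normalised to ``all outgoing'' by pushing, since pushing $u$ flips all of its incident arcs simultaneously. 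Worse, even granting a universal source $a$ and universal sink $b$, these two vertices are necessarily adjacent in $\overrightarrow{H}$ (they host the two endpoints of an arc of $\overrightarrow{G}$), say $ab \in A(\overrightarrow{H})$; your doubling then creates the arcs $(b,0)(a,1)$ and $(b,1)(a,0)$, so each pair you propose to identify consists of two \emph{adjacent} vertices, and merging them creates a loop, which an oriented target cannot have. They are not exact twins either: $(b,1)$ is an out-neighbour of $(a,1)$ but not of $(b,0)$. So the reduction from $+4$ to $+2$ collapses.

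The paper avoids this entirely by applying the doubling \emph{before} the two extra vertices ever appear (this is Duffy's trick): Lemmas~\ref{Push_chrom_key-lemma} and~\ref{Push_chrom_key-lemma2} give one pushable target $\overrightarrow{C}$ of order $(\Delta-3)(\Delta-1)2^{\Delta-1}$, with no additive constant, valid for all connected graphs of maximum degree $\Delta$ that are $(\Delta-1)$-degenerate; doubling it handles that whole subfamily with $(\Delta-3)(\Delta-1)2^{\Delta}$ vertices, and a $\Delta$-regular graph is then treated by deleting one arc $uv$, colouring the resulting degenerate graph in the doubled target, and finally adding two fresh (graph-dependent) vertices for $u$ and $v$ --- the $+2$ is added once, after the doubling, so no merging is needed. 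Concerning the lower bound, you propose to modify ``the probabilistic counting argument behind the lower bound of Theorem~\ref{Push_chrom_th main}'', but no such argument exists in the paper: there, the bound $2^{\Delta/2-1} \leq \chi_p$ is \emph{deduced from} the Kostochka--Sopena--Zhu bound $2^{\Delta/2} \leq \chi_o(\mathcal{G}_{\Delta})$ via Theorem~\ref{theorem:relation-chio-chip}, and the lower bound of the present theorem is exactly that cited bound. Your sketch has the dependency backwards: it amounts to re-deriving the Kostochka--Sopena--Zhu result (the accounting of the $2^{|V|-1}$ push-class factor is the right intuition for why the two bounds differ by a factor of $2$), but as written it is circular relative to the paper's logic, and it leaves both the actual counting and the connectivity of the extremal graphs unproved.
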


When it comes to coloring graphs with given maximum degree, a natural step to make is considering graphs with low maximum degree. This concern is actually a major one regarding oriented coloring, as it is still open what the value of $\chi_o(\mathcal{G}^{\rm c}_{3})$ is. Sopena~\cite{sopena_survey} conjectured that 
$\chi_o(\mathcal{G}^{\rm c}_{3})=7$, and, to date, we know that 
$7 \leq \chi_o(\mathcal{G}_{3}) \leq 9$ and $7 \leq \chi_o(\mathcal{G}^{\rm c}_{3}) \leq 8$
hold
(see \cite{duffy-cubic,sopena_survey} and~\cite{duffy-connected-8}, respectively).

Due to the general connection between the oriented chromatic number and the pushable chromatic number, it makes sense wondering about $\chi_p(\mathcal{G}^{\rm c}_{3})$ as well. In this work, we provide the following result as a first step towards this question.

\begin{theorem}\label{th subcubic}
We have $6 \leq \chi_p(\mathcal{G}^{\rm c}_{3}) \leq \chi_p(\mathcal{G}_{3}) \leq 7$.
\end{theorem}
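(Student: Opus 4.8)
The middle inequality is immediate: every connected subcubic graph is in particular subcubic, so $\mathcal{G}^{\rm c}_3 \subseteq \mathcal{G}_3$, and taking a maximum over the smaller family gives $\chi_p(\mathcal{G}^{\rm c}_3) \leq \chi_p(\mathcal{G}_3)$. Thus two genuine tasks remain. For the lower bound it suffices to exhibit a single connected subcubic oriented graph $\overrightarrow{G}$ with $\chi_p(\overrightarrow{G}) \geq 6$, that is, one admitting no pushable homomorphism to any oriented graph on five vertices. For the upper bound it suffices to produce one fixed oriented graph $\overrightarrow{H}$ on seven vertices such that $\overrightarrow{G} \xrightarrow{{\rm push}} \overrightarrow{H}$ for every $\overrightarrow{G} \in \mathcal{G}_3$. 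Note that neither bound is reachable through Theorem~\ref{theorem:relation-chio-chip} and the known estimates on $\chi_o(\mathcal{G}_3)$, so both require direct arguments.

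My main tool for both directions is the standard reformulation of pushable homomorphisms as ordinary homomorphisms into an \emph{anti-twinned} target. Given $\overrightarrow{H}$, build $\widehat{\overrightarrow{H}}$ on twice as many vertices by adjoining to each vertex $v$ an anti-twin $v'$ whose incident arcs are exactly those of $v$ reversed, with the convention that $ab$ is an arc if and only if $a'b'$ is an arc. Pushing a vertex of the source then corresponds precisely to switching its image between $v$ and $v'$, so that $\overrightarrow{G} \xrightarrow{{\rm push}} \overrightarrow{H}$ if and only if $\overrightarrow{G} \to \widehat{\overrightarrow{H}}$. Consequently the upper bound becomes a homomorphism question into a single $14$-vertex graph, and the lower bound becomes a question about the finitely many $10$-vertex anti-twinned graphs.

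For the upper bound I would fix an explicit, highly symmetric oriented graph $\overrightarrow{H}$ on seven vertices, a circulant on $\mathbb{Z}_7$ being a natural candidate since its automorphisms sharply reduce the verification, and form $\widehat{\overrightarrow{H}}$ on $14$ vertices. The plan is to check that $\widehat{\overrightarrow{H}}$ realises every small in/out neighbourhood type: for every choice of three distinct vertices $x_1,x_2,x_3$ and every prescription $(\alpha_1,\alpha_2,\alpha_3) \in \{\text{in},\text{out}\}^3$, some vertex is joined to each $x_i$ by an arc in direction $\alpha_i$. Since a subcubic source has at most three neighbours per vertex, one then extends a homomorphism into $\widehat{\overrightarrow{H}}$ along a suitable vertex ordering, the anti-twin symmetry pairing complementary direction-patterns and roughly halving the triples that must actually be inspected. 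The hard part will be exactly this completion verification: selecting the seven-vertex circulant so that the property holds for \emph{all} triples is delicate, and even with the symmetry reduction the case analysis is sizeable, as is the bookkeeping needed to handle coincidences of colours among a vertex's already-coloured neighbours (the usual obstacle in such extension arguments); I expect to organise it around the orbit structure of $\mathbb{Z}_7$ and to support it by a finite, checkable computation.

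For the lower bound the task is to design a connected subcubic oriented graph $\overrightarrow{G}$ that maps to no anti-twinned target on ten vertices. Up to isomorphism and the push equivalence there are only finitely many oriented graphs on five vertices, hence finitely many associated targets $\widehat{\overrightarrow{H}}$, so that $\chi_p(\overrightarrow{G}) \geq 6$ is a finite assertion once $\overrightarrow{G}$ is fixed. My plan is to assemble $\overrightarrow{G}$ from small rigid gadgets, each forcing, in any pushable homomorphism, strong constraints on the images of a few of its vertices (for instance prescribed adjacency patterns), and to wire enough of them together, while respecting both the degree-$3$ bound and connectivity, that no five-vertex target can simultaneously satisfy all the forced constraints. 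I expect this to be the harder direction to present cleanly: producing a small certificate is essentially a search, and checking that it defeats every five-vertex target is naturally computer-assisted, so the write-up must isolate a compact, human-verifiable family of forced constraints that already yields the contradiction.
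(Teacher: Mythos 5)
Your middle inequality and the anti-twin reformulation are both correct, and your lower-bound plan matches the paper's in spirit (the paper simply exhibits the cubic oriented graph of Figure~\ref{orientable coloring girth 4}(i) as an example with pushable chromatic number~$6$), though you stop at a search strategy rather than producing the certificate. The genuine gap is in the upper bound: the extension property on which your whole plan rests cannot hold for \emph{any} oriented graph on seven vertices, so no amount of care in choosing the circulant will make the verification succeed. Your property, transported back from $\widehat{\overrightarrow{H}}$ to $\overrightarrow{H}$, is exactly Property $P(3,1)$ in the paper's push-closed sense: every $3$-set $J$ and every $3$-vector $\vec{a}$ must satisfy $N^{\vec{a}}(J)\neq\emptyset$. (Stated on $\widehat{\overrightarrow{H}}$ itself it is even stronger, and already fails trivially on triples containing an anti-twin pair with equal prescribed directions, since there $N^{+}(v')=N^{-}(v)$.)

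Here is why $P(3,1)$ is impossible on seven vertices. Fix two distinct vertices $x,y$ of $\overrightarrow{H}$. If some other vertex $v$ is non-adjacent to, say, $x$, then for any $3$-set $J$ containing $v$ and avoiding $x$, the vertex $x$ lies in no set $N^{\vec{a}}(J)$; hence the four pattern classes of $J$ (eight sign vectors modulo complementation, which are identified in the definition of $N^{\vec{a}}(J)$ --- this identification is precisely the push symmetry) must be realized by at most three of the four vertices outside $J$, so one class is missed. Otherwise every other vertex is adjacent to both $x$ and $y$, and each of these five vertices either relates to $x$ and $y$ in the same way or in opposite ways; by pigeonhole three of them, forming a $3$-set $J$, have the same status, and then $x$ and $y$ realize the same pattern class of $J$. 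So the four vertices outside $J$ realize at most three of the four classes, and again $P(3,1)$ fails. This obstruction is exactly why the paper does not run a one-vertex-at-a-time greedy extension into $\overrightarrow{{\rm Pal}_7}$: instead it takes a minimal counterexample, proves it is cubic and not a tournament, and then shows the four configurations of Figure~\ref{fig subcubic} are reducible by deleting \emph{several} vertices at once and re-extending over the whole deleted set simultaneously, using explicit extension matrices, Properties $P(1,6)$ and $P(2,2)$, and the vertex- and arc-transitivity of $\overrightarrow{{\rm Pal}_7}$. The needed slack comes from coordinating the images of several new vertices jointly, together with the freedom to push them --- a resource your single-vertex greedy scheme cannot access, and one that no choice of seven-vertex target can substitute for.
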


In graph coloring theory, another relevant aspect related to the vertex degrees is the maximum average degree. 
Precisely, the \textit{maximum average degree} ${\rm mad}(G)$ of a graph $G$ is 
$${\rm mad}(G) = \max\left\{\frac{2|E(H)|}{|V(H)|} \text{ : } H \text{ is a subgraph of } G\right\}.$$
In this work, we also study the pushable chromatic number of the family 
$\mathcal{G}^{{\rm mad}}_3 = \{G : {\rm mad}(G) < 3\}$ of graphs with maximum average degree less than~$3$.
Our main result reads as follows.

\begin{theorem}\label{th mad3}
We have $ 5 \leq  \chi_p(\mathcal{G}^{{\rm mad}}_{3}) \leq 7$.
\end{theorem}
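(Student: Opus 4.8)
The plan is to prove the two inequalities independently: the upper bound by a discharging argument on a minimal counterexample, and the lower bound by exhibiting a single explicit oriented graph. For the upper bound I would first fix a suitable target oriented graph $\overrightarrow{H}$ on $7$ vertices. Since, once the rest of $\overrightarrow{G}$ is fixed, a pushable homomorphism may push each remaining vertex independently, it is convenient to encode the decision for a vertex $v$ as a pair consisting of a color $\phi(v)\in V(\overrightarrow{H})$ together with a bit recording whether $v$ is pushed; with this encoding, extending a partial pushable homomorphism to $v$ amounts to finding such a pair compatible with the arcs joining $v$ to its already-handled neighbors (equivalently, finding a homomorphism of $\overrightarrow{G}$ into the anti-twinned graph on $14$ vertices associated with $\overrightarrow{H}$). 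I would then choose $\overrightarrow{H}$ so that this target enjoys strong local extension properties — informally, that for a few prescribed neighbors with prescribed arc directions there remain enough admissible pairs for $v$ — and I expect these properties to be established by a short finite, likely computer-assisted, verification.

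Next I would suppose that $\overrightarrow{G}$ is an orientation of a graph $G$ with ${\rm mad}(G)<3$ forming a counterexample with the fewest edges, and derive a list of reducible configurations. Minimality at once excludes vertices of degree $1$, and the extension properties of $\overrightarrow{H}$ should exclude the other light configurations typical of mad-$<3$ colorings, such as long chains of degree-$2$ vertices and degree-$3$ vertices incident with too many degree-$2$ vertices: in each case one deletes or contracts the configuration, invokes minimality to handle the rest, and uses the extension properties to recolor the deleted part. With these configurations forbidden, I would run a discharging phase with initial charge $\mu(v)=d(v)-3$, so that ${\rm mad}(G)<3$ forces the total charge to be negative, and design rules sending fractional charge from vertices of degree at least $3$ to nearby degree-$2$ vertices; the absence of the reducible configurations should make every final charge nonnegative, contradicting the negative total. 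Because planar graphs of girth at least $6$ satisfy ${\rm mad}<3$, the very same target and argument would also yield the bound $7$ claimed for that family.

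For the lower bound I would exhibit one explicit oriented graph $\overrightarrow{G_0}$ whose underlying graph $G_0$ has ${\rm mad}(G_0)<3$ — for instance a subcubic graph containing no $3$-regular subgraph — and prove $\chi_p(\overrightarrow{G_0})\geq 5$ by showing that $\overrightarrow{G_0}$ admits no pushable homomorphism to any oriented graph on at most $4$ vertices. Since there are only finitely many oriented graphs on at most $4$ vertices (equivalently, finitely many anti-twinned targets on at most $8$ vertices), this is a finite check that I would carry out by a structured case analysis or a short computer search. The main obstacle is the upper bound: the hypothesis ${\rm mad}<3$ is weak enough to allow many degree-$2$ vertices, so the list of reducible configurations must be rich and the discharging rules finely tuned for the count to close, while simultaneously the target $\overrightarrow{H}$ on only $7$ vertices must be strong enough that all these configurations are genuinely reducible. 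Reconciling these two competing demands — a small target against a demanding set of reductions — is where I expect the real difficulty to lie.
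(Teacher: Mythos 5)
Your proposal follows essentially the same route as the paper: for the upper bound the paper also takes a minimal counterexample, proves reducibility of exactly the light configurations you anticipate (degree-$1$ vertices, adjacent degree-$2$ vertices, and degree-$3$, $4$, $5$ vertices with too many degree-$2$ neighbors) using strong extension properties of a fixed $7$-vertex target (the Paley tournament $\overrightarrow{{\rm Pal}_7}$, with Properties $P(1,6)$ and $P(2,2)$), and then closes with a discharging argument (initial charge $d(v)$, vertices of degree at least $4$ sending $1/2$ to degree-$2$ neighbors, which is equivalent to your $d(v)-3$ normalization); the lower bound likewise comes from one explicit oriented graph with ${\rm mad}<3$ and pushable chromatic number $5$. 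Your anti-twinned $14$-vertex reformulation is only a cosmetic difference from the paper's direct use of pushable homomorphisms.
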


It was previously proved in~\cite{sen-push} that for the family $\mathcal{G}^{{\rm mad}}_{8/3} = \{G : {\rm mad}(G) < \frac{8}{3}\}$ 
we have 
$\chi_p(\mathcal{G}^{{\rm mad}}_{8/3}) =4$.  
More precisely, in that result the equality follows from the existence of planar graphs with girth $8$ and pushable chromatic number~$4$.
This, and, because planar graphs with girth at least~$6$ have maximum average degree strictly less than $3$,
Theorem~\ref{th mad3} yield the following, where $\mathcal{P}_6$ denotes the family of planar graphs with girth at least~$6$. 

\begin{theorem}\label{th planar g6}
We have  $4 \leq \chi_{p} ( \mathcal{P}_6) \leq 7$.
\end{theorem}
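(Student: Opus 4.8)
The plan is to derive both inequalities by reduction to results already available in the excerpt, so the real work is just two containment arguments plus the monotonicity of $\chi_p$ under family inclusion.

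For the upper bound, the key observation is that $\mathcal{P}_6 \subseteq \mathcal{G}^{\rm mad}_3$. To establish this, I would take any $G \in \mathcal{P}_6$ and any subgraph $H$ of $G$. Then $H$ is again planar with girth at least $6$, so the standard Euler-formula estimate $\frac{2|E(H)|}{|V(H)|} < \frac{2g}{g-2}$ with $g=6$ yields $\frac{2|E(H)|}{|V(H)|} < 3$. Maximizing over all subgraphs $H$ gives ${\rm mad}(G) < 3$, hence $G \in \mathcal{G}^{\rm mad}_3$. Combining this inclusion with Theorem~\ref{th mad3} gives $\chi_p(\mathcal{P}_6) \leq \chi_p(\mathcal{G}^{\rm mad}_3) \leq 7$.

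For the lower bound, I would invoke the fact recalled just before the statement (from~\cite{sen-push}): there exist planar graphs of girth $8$ whose pushable chromatic number equals $4$. Since girth at least $8$ implies girth at least $6$, every such graph belongs to $\mathcal{P}_6$, and therefore $\chi_p(\mathcal{P}_6) \geq 4$.

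I do not expect a genuine obstacle here, since both directions follow from monotonicity of $\chi_p$ together with the two cited facts. The only quantitative point requiring any attention is the ${\rm mad}$ estimate for planar graphs of girth $6$; the one subtlety worth flagging is that the inequality is \emph{strict}, so that even graphs whose girth is exactly $6$ satisfy ${\rm mad} < 3$ and thus genuinely lie in $\mathcal{G}^{\rm mad}_3$.
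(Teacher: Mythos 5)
Your proof is correct and takes essentially the same route as the paper's own argument: the lower bound comes from the planar girth-$8$ graphs with pushable chromatic number~$4$ of~\cite{sen-push}, and the upper bound from the containment $\mathcal{P}_6 \subseteq \mathcal{G}^{{\rm mad}}_3$ combined with Theorem~\ref{th mad3}. The only difference is that you spell out the Euler-formula estimate showing planar graphs of girth at least~$6$ have maximum average degree strictly less than~$3$, a fact the paper states without proof.
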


This paper is organized as follows. We start off by introducing, in Section~\ref{sec pre}, some notation, terminology, and preliminary results.
The next sections are devoted to proving Theorems~\ref{Push_chrom_th main} and~\ref{Push_chrom_th main2} (Section~\ref{sec 1}),
Theorem~\ref{th subcubic} (Section~\ref{sec 2}), and Theorem~\ref{th mad3} (Section~\ref{sec 3}).
Open questions and perspectives for future work 
are discussed in Section~\ref{section:conclusion}.

\section{Notation, terminology, and preliminary results}\label{sec pre}

For an arc $uv$ of an oriented graph $\overrightarrow{G}$, we say that $u$ is
a \textit{$-$-neighbor} of $v$ while $v$ is a \textit{$+$-neighbor} of $u$. 
The set of the $-$-neighbors ($+$-neighbors, respectively) of any vertex $v$ of $\overrightarrow{G}$ is denoted by $N^-(v)$ ($N^+(v)$, respectively).
For a set $S$ of vertices of $\overrightarrow{G}$ and some $\alpha \in \{-,+\}$, we define $N^{\alpha}(S)= \bigcup_{v \in S}N^{\alpha}(v)$.

\medskip

To prove that all oriented graphs from a given family admit homomorphisms to a given oriented graph $\overrightarrow{H}$,
we generally need $\overrightarrow{H}$ to have very strong properties.
In most of the proofs from the literature on the topic, and in our proofs in the current paper as well,
a strong property we consider is the possibility, given a partial homomorphism from an oriented graph $\overrightarrow{G}$ to $\overrightarrow{H}$,
to extend the partial homomorphism to another vertex $v$ of $\overrightarrow{G}$, assuming some of its neighbors (which can be in any of $N^-(v)$ and $N^+(v)$) have already been assigned an image.
A way to define this intuition is through the notion of Property $P(j,k)$, which we define formally in what follows.

A \textit{$j$-vector} $\vec{a} = (a_1, \dots, a_j)$ is a vector where $a_i \in \{-,+\}$ for every $i \in \{1,\dots,j\}$.
We denote by $\vec{a}^c = (a_1^c, \dots, a_j^c)$ the $j$-vector where $a_i^c \neq a_i$ for every $i \in \{1,\dots,j\}$.
Let $J = \{v_1, \dots, v_j\}$ be a set of $j$ vertices of $V(\overrightarrow{G})$. 
Then we define the set 
$$N^{\vec{a}}(J) = \left\{v \in V(\overrightarrow{G}) : v \in N^{a_i}(v_i) \text{ for all } 1 \leq i \leq j\right\}
\cup  \left\{v \in V(\overrightarrow{G}) : v \in N^{a_i^c}(v_i) \text{ for all } 1 \leq i \leq j\right\}.$$
Observe that $N^{\vec{a}}(J) = N^{\vec{a}^c}(J)$. 
We say that \textit{$\overrightarrow{G}$ has Property $P(j,k)$} if for every $j$-vector $\vec{a}$ and every $j$-set $J$ we have $|N^{\vec{a}}(J)| \geq k$. 

\medskip

A bijective homomorphism whose inverse is also a homomorphism is an \textit{isomorphism}.   
An oriented graph $\overrightarrow{G}$  is  \textit{vertex-transitive} 
if for every two vertices $u,v \in V(\overrightarrow{G})$ there is an isomorphism $f$ of $\overrightarrow{G}$
 such that $f(u) = v$. 
We also say that $\overrightarrow{G}$  is  \textit{arc-transitive} 
if given any two arcs $uv,xy \in A(\overrightarrow{G})$ it is possible to find an 
isomorphism $f$  of $\overrightarrow{G}$ such that $f(u) = x$ and $f(v)=y$. 

\begin{figure}
\centering
	\begin{tikzpicture}[inner sep=0.7mm]
	
	\foreach \a in {0,...,6}{
		\node[draw, circle, line width=1pt](v\a) at (\a*360/7:3cm){$\a$};	
	}
	
	\draw[-latex,line width=1pt,black] (v0) -- (v1);
	\draw[-latex,line width=1pt,black] (v1) -- (v2);
	\draw[-latex,line width=1pt,black] (v2) -- (v3);
	\draw[-latex,line width=1pt,black] (v3) -- (v4);
	\draw[-latex,line width=1pt,black] (v4) -- (v5);
	\draw[-latex,line width=1pt,black] (v5) -- (v6);
	\draw[-latex,line width=1pt,black] (v6) -- (v0);
	
	\draw[-latex,line width=1pt,black] (v0) -- (v2);
	\draw[-latex,line width=1pt,black] (v1) -- (v3);
	\draw[-latex,line width=1pt,black] (v2) -- (v4);
	\draw[-latex,line width=1pt,black] (v3) -- (v5);
	\draw[-latex,line width=1pt,black] (v4) -- (v6);
	\draw[-latex,line width=1pt,black] (v5) -- (v0);
	\draw[-latex,line width=1pt,black] (v6) -- (v1);
	
	\draw[-latex,line width=1pt,black] (v0) -- (v4);
	\draw[-latex,line width=1pt,black] (v1) -- (v5);
	\draw[-latex,line width=1pt,black] (v2) -- (v6);
	\draw[-latex,line width=1pt,black] (v3) -- (v0);
	\draw[-latex,line width=1pt,black] (v4) -- (v1);
	\draw[-latex,line width=1pt,black] (v5) -- (v2);
	\draw[-latex,line width=1pt,black] (v6) -- (v3);

 	\end{tikzpicture}
 	\caption{The Paley tournament ${\rm Pal}_7$ on seven vertices.}
 	\label{figure:pal7}
 \end{figure}
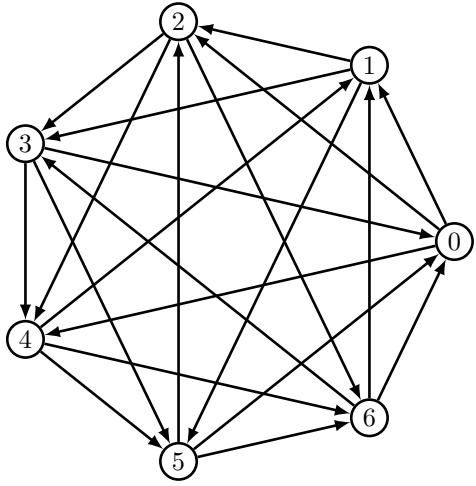

In the context of oriented homomorphisms and pushable homomorphisms, \textit{Paley tournaments} stand, due to their very regular structure,
as good candidates 
to map families of oriented graphs to.
In this work, our upper bounds in Theorems~\ref{th subcubic} and~\ref{th mad3} are actually obtained via pushable homomorphisms to $\overrightarrow{{\rm Pal}_7}$, the Paley tournament on seven vertices.
$\overrightarrow{{\rm Pal}_7}$ (depicted in Figure~\ref{figure:pal7}) is the oriented graph (tournament) with vertex set $\mathbb{Z}/7\mathbb{Z} = \{0,1,\dots, 6\}$ in which $ij$ is an arc if and only if $j-i$ is a nonzero square in $\mathbb{Z}/7\mathbb{Z}$ (where, here and further, all operations involving vertices of $\overrightarrow{{\rm Pal}_7}$ are understood modulo~$7$). 
In other words, $ij$ is an arc if and only if $j-i \in \{1,2,4\}$.
  
  In this work, we will make use of the following properties of interest of $\overrightarrow{{\rm Pal}_7}$.
   
   \begin{lemma}[Marshall~\cite{marshall17}]\label{P7 is vt-at}
   $\overrightarrow{{\rm Pal}_7}$ is vertex-transitive and arc-transitive. 
   \end{lemma}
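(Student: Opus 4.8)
The plan is to exhibit explicit automorphisms of $\overrightarrow{{\rm Pal}_7}$ coming from the affine structure of $\mathbb{Z}/7\mathbb{Z}$, and to check that they already act transitively on vertices and on arcs. The guiding observation is that the arc set is controlled entirely by the set $Q = \{1,2,4\}$ of nonzero squares of $\mathbb{Z}/7\mathbb{Z}$, and that $Q = \langle 2 \rangle$ is a multiplicative subgroup of $(\mathbb{Z}/7\mathbb{Z})^*$: it is closed under products, inverses, and hence quotients. Consequently, any map that rescales the differences $j-i$ by a fixed element of $Q$ preserves the property ``$j-i$ is a nonzero square'', and therefore preserves arcs.

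First I would introduce, for each $s \in Q$ and each $c \in \mathbb{Z}/7\mathbb{Z}$, the affine map $f_{s,c} \colon t \mapsto st + c$. Since $s \neq 0$, this is a bijection of $\mathbb{Z}/7\mathbb{Z}$ with inverse $t \mapsto s^{-1}(t-c)$, and $s^{-1} \in Q$ as well. To see that $f_{s,c}$ is an isomorphism, note that $f_{s,c}(j) - f_{s,c}(i) = s(j-i)$ for all $i,j$; if $ij$ is an arc then $j-i \in Q$, so $s(j-i) \in Q$ as a product of two squares, meaning $f_{s,c}(i)f_{s,c}(j)$ is again an arc. Applying the same reasoning to $f_{s,c}^{-1}$ (whose scaling factor $s^{-1}$ also lies in $Q$) shows the inverse preserves arcs too, so each $f_{s,c}$ is an automorphism of $\overrightarrow{{\rm Pal}_7}$.

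With this family in hand, vertex-transitivity is immediate: given vertices $u$ and $v$, the translation $f_{1,\,v-u}$ sends $u$ to $v$. For arc-transitivity, given two arcs $uv$ and $xy$, I would set $s = (y-x)(v-u)^{-1}$ and $c = x - su$. Since $uv$ and $xy$ are arcs, both $v-u$ and $y-x$ lie in $Q$, and closure of $Q$ under quotients gives $s \in Q$; hence $f_{s,c}$ is an automorphism. By construction $f_{s,c}(u) = x$, and $f_{s,c}(v) = s(v-u) + x = (y-x) + x = y$, so $f_{s,c}$ maps the arc $uv$ onto the arc $xy$, as required.

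Since the whole argument reduces to the elementary fact that $Q$ is a multiplicative group, there is essentially no obstacle to overcome; the only point deserving a line of verification is the closure of $Q$ under products and inverses, which follows at once from $Q = \{1,2,4\} = \langle 2 \rangle$ being the index-two subgroup of the cyclic group $(\mathbb{Z}/7\mathbb{Z})^* \cong \mathbb{Z}/6\mathbb{Z}$.
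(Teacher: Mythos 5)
Your proof is correct. Note, however, that the paper does not actually prove this lemma at all: it simply cites it from Marshall's paper, so you have supplied a self-contained argument where the authors rely on an external reference. Your argument is the standard one for Paley tournaments, and it is sound in every step: the set $Q=\{1,2,4\}$ of nonzero squares modulo $7$ is the index-two subgroup $\langle 2\rangle$ of $(\mathbb{Z}/7\mathbb{Z})^*$, the affine maps $f_{s,c}\colon t\mapsto st+c$ with $s\in Q$ scale differences by $s$ and hence preserve the arc relation (and their inverses, with scaling factor $s^{-1}\in Q$, do too, so they are genuine isomorphisms in the sense the paper defines), translations give vertex-transitivity, and the choice $s=(y-x)(v-u)^{-1}$, $c=x-su$ carries the arc $uv$ to $xy$, with $s\in Q$ by closure of $Q$ under quotients. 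A pleasant by-product of your route, which a bare citation does not give, is that the argument generalizes verbatim: it shows that $\overrightarrow{{\rm Pal}_q}$ is vertex- and arc-transitive for every prime $q\equiv 3 \pmod 4$, since the only facts used are that the nonzero squares form a multiplicative subgroup and that $-1$ is a non-square (which is what makes $\overrightarrow{{\rm Pal}_q}$ a tournament in the first place).
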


 \begin{lemma}[Marshall~\cite{marshall17}]\label{P7 P22}
  $\overrightarrow{{\rm Pal}_7}$ has Properties $P(1,6)$ and $P(2,2)$. 
   \end{lemma}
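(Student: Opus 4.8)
The plan is to read both properties straight off the arithmetic description of $\overrightarrow{{\rm Pal}_7}$, using the transitivity statements of Lemma~\ref{P7 is vt-at} to keep the case analysis minimal. First I would record the neighbourhoods explicitly. Since $ij$ is an arc exactly when $j-i \in \{1,2,4\}$, every vertex $v$ satisfies $N^+(v) = \{v+1, v+2, v+4\}$ and $N^-(v) = \{v+3, v+5, v+6\}$, all operations modulo~$7$. In particular $\overrightarrow{{\rm Pal}_7}$ is a tournament, so any two distinct vertices are joined by an arc.

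For Property $P(1,6)$, consider a $1$-set $J = \{v_1\}$ and a $1$-vector $\vec{a} = (a_1)$. Unwinding the definition gives $N^{\vec{a}}(J) = N^{a_1}(v_1) \cup N^{a_1^c}(v_1)$, and since $\{a_1, a_1^c\} = \{+,-\}$ this is exactly $N^+(v_1) \cup N^-(v_1)$, the set of all neighbours of $v_1$. As $\overrightarrow{{\rm Pal}_7}$ is a tournament on seven vertices, this set has size~$6$, so $|N^{\vec{a}}(J)| = 6$ and $P(1,6)$ holds.

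For Property $P(2,2)$, take a $2$-set $J = \{v_1, v_2\}$ and a $2$-vector $\vec{a} = (a_1, a_2)$; the definition then reads
$$N^{\vec{a}}(J) = \left(N^{a_1}(v_1) \cap N^{a_2}(v_2)\right) \cup \left(N^{a_1^c}(v_1) \cap N^{a_2^c}(v_2)\right).$$
Because $\overrightarrow{{\rm Pal}_7}$ is a tournament, $v_1$ and $v_2$ are joined by an arc, and by arc-transitivity (Lemma~\ref{P7 is vt-at}) I may assume this arc is $0 \to 1$, so that $\{v_1,v_2\} = \{0,1\}$; isomorphisms commute with $N^+$ and $N^-$, hence preserve $N^{\vec{a}}(J)$ and its cardinality. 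Using the identity $N^{\vec{a}}(J) = N^{\vec{a}^c}(J)$ noted in the excerpt, only the two vectors $(+,+)$ and $(+,-)$ remain to be checked. A direct computation with the neighbourhoods above then gives $N^{(+,+)}(\{0,1\}) = \{2,6\}$ and $N^{(+,-)}(\{0,1\}) = \{3,4,5\}$, both of cardinality at least~$2$, establishing $P(2,2)$.

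The computations themselves are entirely routine, so I do not expect any genuine obstacle; the only point needing care is the reduction via arc-transitivity (together with the swap of tail and head, and the pairing $\vec{a} \leftrightarrow \vec{a}^c$), which is what collapses the analysis to a single representative arc and two representative sign patterns rather than all pairs and all four vectors.
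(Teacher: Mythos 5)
Your proposal is correct, but there is nothing in the paper to compare it against: the paper does not prove this lemma at all, it simply imports it as a known result of Marshall~\cite{marshall17}. Your argument therefore supplies what the paper leaves to the literature, namely a short self-contained verification. The details check out. With the paper's convention that an arc $uv$ makes $u$ a $-$-neighbour of $v$, one indeed has $N^+(v)=\{v+1,v+2,v+4\}$ and $N^-(v)=\{v+3,v+5,v+6\}$, so for $P(1,6)$ the set $N^{\vec{a}}(J)$ is the full open neighbourhood of $v_1$, of size exactly $6$ in a tournament on seven vertices. For $P(2,2)$, your reduction is legitimate: any two distinct vertices span an arc, isomorphisms commute with $N^+$ and $N^-$ (so arc-transitivity lets you place that arc on $0\to 1$), the identity $N^{\vec{a}}(J)=N^{\vec{a}^c}(J)$ kills two of the four sign patterns, and the remaining ambiguity of which endpoint is called $v_1$ is harmless because the union $\bigl(N^{a_1}(v_1)\cap N^{a_2}(v_2)\bigr)\cup\bigl(N^{a_1^c}(v_1)\cap N^{a_2^c}(v_2)\bigr)$ is invariant under swapping the two labelled roles. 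The computed sets $N^{(+,+)}(\{0,1\})=\{2\}\cup\{6\}=\{2,6\}$ and $N^{(+,-)}(\{0,1\})=\{4\}\cup\{3,5\}=\{3,4,5\}$ are correct, so $P(2,2)$ holds (and, as the pattern $(+,+)$ shows, with the constant $2$ being best possible). The only thing your approach costs is the explicit case analysis; what it buys is that the lemma no longer rests on an external reference, which is arguably preferable given how central $\overrightarrow{{\rm Pal}_7}$ is to the proofs of Theorems~\ref{th subcubic} and~\ref{th mad3}.
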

   
   We also note the following other interesting property of $\overrightarrow{{\rm Pal}_7}$.  
   
   \begin{observation}
   For every $i \in V(\overrightarrow{{\rm Pal}_7})$ and $\alpha, \beta, \gamma \in \{+,-\}$, we have
   \begin{equation}\label{eq first neighborhood}
N^\alpha(N^{\beta}(i)) =
\begin{cases}
V(\overrightarrow{{\rm Pal}_7}) \setminus \{i\} & \text{ if } \alpha = \beta \\
V(\overrightarrow{{\rm Pal}_7}) & \text{ if } \alpha \neq \beta
\end{cases}
\end{equation}
and
\begin{equation}\label{eq second neighborhood}
N^\alpha(N^{\beta}(N^{\gamma}(i))) = V(\overrightarrow{{\rm Pal}_7}).
\end{equation}
   \end{observation}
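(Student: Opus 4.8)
The plan is to exploit the vertex-transitivity of $\overrightarrow{{\rm Pal}_7}$ (Lemma~\ref{P7 is vt-at}) to reduce both identities to the single case $i = 0$. Indeed, any automorphism $f$ of $\overrightarrow{{\rm Pal}_7}$ preserves arc directions, so $w \in N^\alpha(v)$ if and only if $f(w) \in N^\alpha(f(v))$; thus $f(N^\alpha(v)) = N^\alpha(f(v))$ and $f$ commutes with iterated neighborhoods. Since $f$ permutes $V(\overrightarrow{{\rm Pal}_7})$, sending $V(\overrightarrow{{\rm Pal}_7})$ to itself and $V(\overrightarrow{{\rm Pal}_7}) \setminus \{i\}$ to $V(\overrightarrow{{\rm Pal}_7}) \setminus \{f(i)\}$, establishing \eqref{eq first neighborhood} and \eqref{eq second neighborhood} for $i=0$ yields them for every $i$ by choosing $f$ with $f(0)=i$. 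For $i = 0$ we have $N^+(0) = \{1,2,4\}$ (the nonzero squares modulo~$7$) and $N^-(0) = \{3,5,6\}$ (the non-squares).

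For \eqref{eq first neighborhood}, I would separate the question of whether $0$ itself lies in $N^\alpha(N^\beta(0))$ from the question of whether $N^\alpha(N^\beta(0))$ contains all of $\{1,\dots,6\}$. For the first point, $0 \in N^\alpha(N^\beta(0))$ if and only if there is a vertex $v$ with $v \in N^\beta(0)$ and $0 \in N^\alpha(v)$. When $\alpha = \beta$, this forces both $0v$ and $v0$ to be arcs, which is impossible since $\overrightarrow{{\rm Pal}_7}$ is an oriented graph; hence $0 \notin N^\alpha(N^\alpha(0))$. When $\alpha \neq \beta$, the two conditions coincide (both require the same arc between $0$ and $v$), so any neighbor $v$ of $0$ of the appropriate type witnesses $0 \in N^\alpha(N^\beta(0))$. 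This already accounts for the behaviour of $0$ in both cases of \eqref{eq first neighborhood}. It then remains to check that the union $\bigcup_{v \in N^\beta(0)} N^\alpha(v)$ of three $3$-element sets covers $\{1,\dots,6\}$. This is a direct computation, and the anti-automorphism $x \mapsto 3x \pmod 7$ (multiplication by a non-square), which reverses every arc and hence swaps $N^+$ and $N^-$ while fixing~$0$, lets me deduce the cases $\alpha=\beta=-$ and $(\alpha,\beta)=(-,+)$ from the cases $\alpha=\beta=+$ and $(\alpha,\beta)=(+,-)$, halving the work.

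Finally, \eqref{eq second neighborhood} follows cleanly from \eqref{eq first neighborhood} together with a short counting fact: for any sign $\alpha$ and any set $S \subseteq V(\overrightarrow{{\rm Pal}_7})$ with $|S| \geq 6$, we have $N^\alpha(S) = V(\overrightarrow{{\rm Pal}_7})$. Indeed, each vertex $w$ has exactly three vertices $v$ with $w \in N^\alpha(v)$ (namely $v \in \{w-1,w-2,w-4\}$ if $\alpha=+$, and $v \in \{w+1,w+2,w+4\}$ if $\alpha=-$), so if $S$ omits at most one vertex, then at least two such $v$ lie in $S$, giving $w \in N^\alpha(S)$. Since \eqref{eq first neighborhood} guarantees that $N^\beta(N^\gamma(i))$ equals either $V(\overrightarrow{{\rm Pal}_7})$ or $V(\overrightarrow{{\rm Pal}_7}) \setminus \{i\}$, and hence has size at least~$6$, applying this fact with $S = N^\beta(N^\gamma(i))$ immediately yields $N^\alpha(N^\beta(N^\gamma(i))) = V(\overrightarrow{{\rm Pal}_7})$. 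The only genuinely computational part is the coverage check for \eqref{eq first neighborhood}; everything else is structural, so I expect no serious obstacle.
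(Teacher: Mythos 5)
Your proposal is correct, and its skeleton matches the paper's proof: the paper also invokes vertex-transitivity (Lemma~\ref{P7 is vt-at}) to reduce both identities to the case $i=0$, and then simply asserts that the $i=0$ case ``can easily be done by hand.'' Where you genuinely depart from the paper is in how the $i=0$ verification is organized. The paper implicitly checks all of \eqref{eq first neighborhood} and \eqref{eq second neighborhood} by raw enumeration; you instead (a) settle the membership of $0$ itself structurally, noting that $0 \in N^\alpha(N^\alpha(0))$ would force a pair of opposite arcs, impossible in an oriented graph, while for $\alpha \neq \beta$ the two conditions collapse to the same arc; (b) halve the coverage computation via the arc-reversing bijection $x \mapsto 3x$; and (c) derive \eqref{eq second neighborhood} from \eqref{eq first neighborhood} using the counting fact that every vertex of $\overrightarrow{{\rm Pal}_7}$ has exactly three $\alpha$-in-witnesses, so $N^\alpha(S) = V(\overrightarrow{{\rm Pal}_7})$ whenever $|S| \geq 6$, rather than checking the eight triple-neighborhood cases separately. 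All three shortcuts are sound (I verified the remaining union computations: e.g.\ $N^+(1) \cup N^+(2) \cup N^+(4) = \{1,\dots,6\}$ and $N^+(3) \cup N^+(5) \cup N^+(6) = V(\overrightarrow{{\rm Pal}_7})$). What your route buys is a proof that is less error-prone and partially generalizable — the argument in (a) and the counting lemma in (c) work in any regular tournament of suitable degree, not just $\overrightarrow{{\rm Pal}_7}$; what the paper's route buys is brevity, since for a fixed $7$-vertex tournament the full hand check is short enough that structure arguably adds more text than it saves.
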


\begin{proof}
 As $\overrightarrow{{\rm Pal}_7}$ is vertex-transitive, it is enough to verify the above equations for $i=0$,
 which can easily be done by hand.
\end{proof}

\section{Proofs of Theorems~\ref{Push_chrom_th main} and~\ref{Push_chrom_th main2}}\label{sec 1}

Let $t$ be a fixed integer. For a given integer $j$, we set $f_t(j) = (t-j)(t-2)+1$. 
In the next result, we show that if an oriented graph $\overrightarrow{G}$ has 
Property $P(t-1, f_t(t-1))$ for some $t$, then it also has Property $P(j, f_t(j))$ for all $j \in \{0,1,\dots,t-1\}$.

\begin{lemma}\label{Push_chrom_key-lemma-recursive}
If an oriented graph $\overrightarrow{G}$ has Property
$P(t-1, f_t(t-1))$ for some $t$, then it also has Property $P(j, f_t(j))$ for all $j \in \{0,1,\dots,t-1\}$.
\end{lemma}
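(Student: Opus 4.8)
The plan is to argue by downward induction on $j$, taking the hypothesis $P(t-1,f_t(t-1))$ as the base case (note $f_t(t-1)=t-1$) and descending to $j=0$. The arithmetic backbone is the telescoping identity $f_t(j)=f_t(j+1)+(t-2)$, so each descent must produce an extra additive term $t-2$. In fact I would prove the stronger inequality $|N^{\vec{a}}(J)|\ge 2f_t(j+1)$ at each step with $1\le j\le t-2$; this dominates the required bound since a direct computation gives $2f_t(j+1)-f_t(j)=(t-j-2)(t-2)+1\ge 1$ in that range.

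For the inductive step, assume $P(j+1,f_t(j+1))$ with $1\le j\le t-2$, and fix a $j$-vector $\vec{a}$ and a $j$-set $J=\{v_1,\dots,v_j\}$. Write $N^{\vec{a}}(J)=A\cup A^{c}$, where $A$ gathers the vertices meeting all the $a_i$-constraints and $A^{c}$ those meeting all the $a_i^{c}$-constraints. Because $j\ge 1$, no vertex is at once an in- and an out-neighbour of $v_1$, so $A\cap A^{c}=\emptyset$. Choosing any $w\in V(\overrightarrow{G})\setminus J$ and setting $J'=J\cup\{w\}$, the two $(j+1)$-vectors $(\vec{a},+)$ and $(\vec{a},-)$ yield
$$N^{(\vec{a},+)}(J')=(A\cap N^{+}(w))\cup(A^{c}\cap N^{-}(w)),\qquad N^{(\vec{a},-)}(J')=(A\cap N^{-}(w))\cup(A^{c}\cap N^{+}(w)).$$
Both are subsets of $N^{\vec{a}}(J)$, and they are disjoint, using $A\cap A^{c}=\emptyset$ together with $N^{+}(w)\cap N^{-}(w)=\emptyset$. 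Applying $P(j+1,f_t(j+1))$ to each then gives $|N^{\vec{a}}(J)|\ge |N^{(\vec{a},+)}(J')|+|N^{(\vec{a},-)}(J')|\ge 2f_t(j+1)\ge f_t(j)$, which is exactly Property $P(j,f_t(j))$.

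The genuinely separate case is $j=0$, where $N^{\vec{a}}(J)=V(\overrightarrow{G})$ and the goal is $|V(\overrightarrow{G})|\ge f_t(0)=(t-1)^2$. Here the mechanism above collapses: with $J=\emptyset$ one has $A=A^{c}=V(\overrightarrow{G})$, so the two extended sets both equal $N(w)=N^{+}(w)\cup N^{-}(w)$ and are no longer disjoint, giving only $|V(\overrightarrow{G})|\ge f_t(1)$, which is strictly too weak once $t\ge 4$. To repair this I would use one layer deeper: the step above at $j=1$ in fact shows that every vertex has at least $2f_t(2)$ neighbours, whence $|V(\overrightarrow{G})|\ge 2f_t(2)+1$, and the elementary identity $2f_t(2)+1-(t-1)^2=(t-3)^2+1>0$ finishes the case. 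Equivalently, one may apply $P(2,f_t(2))$ to two distinct vertices $v,u$ and note that $N(v)\cap N(u)$ splits as the disjoint union of $N^{(+,+)}(\{v,u\})$ and $N^{(+,-)}(\{v,u\})$.

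I expect the boundary case $j=0$ to be the main obstacle, precisely because the disjointness that drives the induction rests on $A\cap A^{c}=\emptyset$, which holds only when $J$ is nonempty; hence the vertex count cannot be read off from a single descent out of $P(1,\cdot)$ and must instead borrow from $P(2,\cdot)$. Beyond that, the work is routine: verifying the two elementary inequalities $2f_t(j+1)\ge f_t(j)$ and $2f_t(2)+1\ge(t-1)^2$, and observing that a vertex $w\notin J$ exists whenever $P(j+1,\cdot)$ is non-vacuous, i.e. whenever $|V(\overrightarrow{G})|>j$.
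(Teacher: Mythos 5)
Your proof is correct, and its engine is exactly the paper's: the descent from $P(j+1,f_t(j+1))$ to $P(j,f_t(j))$ by appending a vertex $w\notin J$, splitting into the two extended vectors $(\vec{a},+)$ and $(\vec{a},-)$, observing that the two resulting sets are disjoint subsets of $N^{\vec{a}}(J)$, and closing with the arithmetic $f_t(j)\le 2f_t(j+1)$ (the paper phrases this as $P(j,k)\Rightarrow P(j-1,2k)$, but it is the same step). Where you genuinely depart from the paper is the case $j=0$, and your instinct that this boundary case is the real obstacle is vindicated: the paper applies the same disjointness claim all the way down to the derivation of $P(0,f_t(0))$, but for $j=1$ (i.e.\ when the smaller set $J'$ is empty) disjointness fails outright, since $N^{(+)}(\{v\})$ and $N^{(-)}(\{v\})$ both equal $N(v)$ --- indeed the paper itself records $N^{\vec{a}}(J)=N^{\vec{a}^c}(J)$. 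The disjointness argument needs a surviving coordinate to separate the ``straight'' clause of one extension from the ``complemented'' clause of the other, which is precisely your observation that $A\cap A^c=\emptyset$ requires $j\ge 1$. So the paper's own proof of the $j=0$ case is flawed as written, while your repair (use the $j=1$ step to get $|N(v)|\ge 2f_t(2)$, hence $|V(\overrightarrow{G})|\ge 2f_t(2)+1\ge (t-1)^2=f_t(0)$, via $2f_t(2)+1-(t-1)^2=(t-3)^2+1>0$) is correct and establishes the lemma in full strength, including $j=0$. Two caveats, neither of which counts against you: the paper's oversight is inconsequential for its application, since Lemma~\ref{Push_chrom_key-lemma2} only invokes $P(j,f_t(j))$ for $j\in\{1,\dots,\Delta-1\}$; and, like the paper, you set aside degenerate situations where the hypothesis holds vacuously (too few vertices to form the required sets), which your closing non-vacuity remark handles at the same level of rigor the paper itself adopts.
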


\begin{proof}
Suppose $\overrightarrow{G}$ has Property $P(j, k)$. 
Now consider any $(j-1)$-vector 
$\vec{a}' = (a_1, \dots, a_{j-1})$, any $(j-1)$-set $J' = \{v_1, \dots, v_{j-1}\}$,
and a vertex $v_j$ of $\overrightarrow{G}$ not in $J'$. 
Let $\vec{a}_+ = (a_1, \dots, a_{j-1}, +)$, $\vec{a}_- = (a_1, \dots, a_{j-1}, -)$ and 
$J = \{v_1, \dots, v_{j-1}, v_j\}$. Since $\overrightarrow{G}$ has Property $P(j, k)$, we must have 
$|N^{\vec{a}_+}(J)| \geq k$ and $|N^{\vec{a}_-}(J)| \geq k$. Observe that 
$N^{\vec{a}_+}(J) \cap N^{\vec{a}_-}(J) = \emptyset$ and 
that $N^{\vec{a}_+}(J),   N^{\vec{a}_-}(J)  \subseteq N^{\vec{a}'}(J')$. 
Thus $\overrightarrow{G}$ has Property $P(j-1, 2k)$.   
We are now done by induction since $f_t(j-1) \leq 2f_t(j)$ for all $j \in \{0,1,\dots,t-1\}$.
\end{proof}

We now prove the existence of tournaments having Property $P(j,k)$ for particular values of $j$ and $k$.

\begin{lemma}\label{Push_chrom_key-lemma}
For all $t \geq 29$, there exist tournaments with Property $P(t-1, t-1)$ and order 
$$c = (t-3) \cdot (t-1) \cdot 2^{t-1}.$$
\end{lemma}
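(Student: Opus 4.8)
The plan is to use the probabilistic method, specifically a first-moment (union bound) argument, to show that a random tournament on $c$ vertices has Property $P(t-1, t-1)$ with positive probability. First I would fix $n = c = (t-3)(t-1)2^{t-1}$ and construct a random tournament $\overrightarrow{T}$ on vertex set $\{1, \dots, n\}$ by orienting each of the $\binom{n}{2}$ potential edges independently, choosing each of the two directions with probability $\tfrac{1}{2}$. The goal is to prove that with positive probability, for every $(t-1)$-vector $\vec{a}$ and every $(t-1)$-set $J$, we have $|N^{\vec{a}}(J)| \geq t-1$; a tournament realizing this event then certifies the lemma.

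The core computation is to bound the probability that Property $P(t-1,t-1)$ \emph{fails}. I would define a ``bad event'' $B_{\vec{a}, J}$ to be the event that $|N^{\vec{a}}(J)| \leq t-2$ for a particular choice of a $(t-1)$-set $J$ and $(t-1)$-vector $\vec{a}$. For a fixed vertex $v \notin J$, recall from the definition of $N^{\vec{a}}(J)$ that $v$ lands in this set precisely when its arc-orientations relative to the $t-1$ vertices of $J$ match either the pattern $\vec{a}$ or its complement $\vec{a}^c$; since the $t-1$ arcs between $v$ and $J$ are oriented independently, this happens with probability $2 \cdot 2^{-(t-1)} = 2^{-(t-2)}$. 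Writing $p = 2^{-(t-2)}$, the number of vertices outside $J$ falling in $N^{\vec{a}}(J)$ is a Binomial$(n - (t-1), p)$ variable, so $\Pr[B_{\vec{a},J}]$ is the probability this binomial is at most $t-2$. I would bound this tail by summing $\binom{n-t+1}{i} p^i (1-p)^{n-t+1-i}$ over $i \leq t-2$, using the standard estimate $(1-p)^{n-t+1} \leq e^{-p(n-t+1)}$ and crude bounds on the binomial coefficients and powers of $p$.

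Next I would apply the union bound over all choices of $(J, \vec{a})$: there are at most $\binom{n}{t-1} \leq n^{t-1}$ choices for $J$ and exactly $2^{t-1}$ choices for $\vec{a}$, giving
\begin{equation}\label{eq:union-bound-plan}
\Pr\bigl[P(t-1,t-1) \text{ fails}\bigr] \leq n^{t-1} \cdot 2^{t-1} \cdot \Pr[B_{\vec{a}, J}].
\end{equation}
The plan is to show the right-hand side of~\eqref{eq:union-bound-plan} is strictly less than $1$ for all $t \geq 29$. With $n = (t-3)(t-1)2^{t-1}$, one computes $p(n - t + 1) = (t-3)(t-1) \cdot 2 \cdot 2^{-1}(1 + o(1)) \approx 2(t-1)(t-3)$, which dominates the $\log$ of the combinatorial prefactors since those contribute on the order of $(t-1)(\log n + \log 2) = O(t^2)$ with a smaller constant. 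The explicit choice of the constant $c$ is precisely tuned so that the exponential decay from the $e^{-p(n-t+1)}$ term overcomes the polynomial-in-$n$ and exponential-in-$t$ blow-up from the union bound, with the threshold $t \geq 29$ being where the numerical inequality first holds.

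The main obstacle I anticipate is the bookkeeping in the tail estimate: getting a clean enough upper bound on the lower-tail of the binomial so that~\eqref{eq:union-bound-plan} provably drops below $1$ for the given $c$ and the stated range $t \geq 29$. The delicate point is that both $n$ and $p$ depend on $t$, so the dominant term $e^{-p(n-t+1)}$ must be carefully balanced against the $\binom{n}{t-1}2^{t-1}$ prefactor and the polynomial factor $\sum_{i \leq t-2}\binom{n}{i}p^i$ coming from the tail sum; verifying that the constant $(t-3)(t-1)$ in $c$ is large enough (but not wastefully so) to force the threshold down to exactly $29$ will require the most care, and is where I would spend effort choosing slightly lossy but manageable inequalities rather than sharp ones.
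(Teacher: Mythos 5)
Your proposal follows essentially the same route as the paper's proof: a random tournament on $c=(t-3)(t-1)2^{t-1}$ vertices, the observation that each vertex outside $J$ lands in $N^{\vec{a}}(J)$ independently with probability $2\cdot 2^{-(t-1)}=2^{-(t-2)}$, a binomial lower-tail bound on each bad event via $(1-p)^{m}\leq e^{-pm}$, and a union bound over all $(J,\vec{a})$ pairs, with the exponential term $e^{-c2^{-(t-2)}}\approx e^{-2(t-1)(t-3)}$ beating the $O(t^2)$-in-the-exponent prefactors precisely when $t\geq 29$. The only cosmetic difference is that the paper halves the union bound to $2^{t-2}$ vectors using the symmetry $N^{\vec{a}}(J)=N^{\vec{a}^c}(J)$, which is immaterial to the conclusion.
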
 

\begin{proof}
Let $\overrightarrow{C}$ be a random tournament in which every arc is oriented in one way or the other with equal probability $\frac{1}{2}$.
We show below that the probability that $\overrightarrow{C}$ does not have Property $P(t-1, t-1)$ is strictly less than~$1$ 
when  
$|\overrightarrow{C}| = c = (t-3) \cdot (t-1) \cdot 2^{t-1}$. 
Let $\mathbb{P}(J,\vec{a})$ denote the probability that the \textit{bad event}
$|N^{\vec{a}}(J)| < f_t(t-1) = (t-2)+1$ occurs, where $J$ is a $(t-1)$-set of $\overrightarrow{C}$ and $\vec{a}$ is a $(t-1)$-vector. Then

\begin{equation*} \label{Push_chrom_Signed_eq1}
\begin{split}
\mathbb{P}(J, \vec{a})&\leq \sum_{\underset{S\cap J=\emptyset}{|S|\leq t-2}} \mathbb{P}(S=N^{\vec{a}}(J)) = \sum_{\underset{S\cap J=\emptyset}{|S|\leq t-2}} \prod_{x\in S} \mathbb{P}(x\in N^{\vec{a}}(J)) \cdot\prod_{x\notin J\cup S} \mathbb{P}(x\notin N^{\vec{a}}(J))\\
&=\sum_{\underset{S\cap J=\emptyset}{|S|\leq t-2}} 2\cdot 2^{-|S||J|}\cdot (1-2\cdot 2^{-|J|})^{c-|J|-|S|}\\
&=\sum\limits_{i=0}^{t-2} {c-(t-1) \choose i} \cdot 2^{-i(t-2)} \cdot \left(1 - 2^{-(t-2)}\right)^{c - i - (t-1)} \\
&= \left(1 - 2^{-(t-2)}\right)^c \cdot \sum\limits_{i=0}^{t-2} \frac{c^i}{i!} \cdot 2^{-i(t-2)}\cdot\left(\frac{2^{t-2}}{2^{t-2}-1}\right)^{i+t-1}  \\
&< e^{-c2^{-(t-2)}} \cdot \sum\limits_{i=0}^{t-2} c^i \frac{2^{(t-2)(t-1)}}{(2^{t-2}-1)^{i+t-1}}\leq e^{-c2^{-(t-2)}} \cdot \frac{2^{(t-2)(t-1)}}{(2^{t-2}-1)^{t-1}}\cdot \sum\limits_{i=0}^{t-2} c^i \\
 &< 2 e^{-c2^{-(t-2)}} \cdot \frac{c^{t-1}-1}{c-1}  \leq e^{-c2^{-(t-2)}} \cdot c^{t-1}.
\end{split}
\end{equation*}

Let $\mathbb{P}(\mathcal{B})$ denote the probability that at least one bad event occurs. 
To prove the statement it is then enough to  show that $\mathbb{P}(\mathcal{B}) < 1$. 
Let $T$ denote the set of all $(t-1)$-sets of vertices of $\overrightarrow{C}$, and $W$ denote the set of all $(t-1)$-vectors having $+$ in the first coordinate. Note that given any $(t-1)$-vector $\vec{a}$, 
exactly one of $\vec{a}$ and $\vec{a}^c$ must belong to $W$.  Then 

\begin{equation*} \label{Push_chrom_Signed_eq2}
\begin{split}
\mathbb{P}(\mathcal{B})  &\leq  \sum_{J \in T} \sum_{\vec{a} \in W} \mathbb{P}(J, \vec{a})  
< {c \choose t-1} \cdot 2^{t-2} \cdot e^{-c2^{-(t-2)}} \cdot c^{t-1} \\
&< \frac{c^{t-1}}{(t-1)!} \cdot 2^{t-2} \cdot e^{-c2^{-(t-2)}} \cdot c^{t-1} 
< \frac{2^{t-2}}{(t-1)!} \cdot  e^{-c2^{-(t-2)}} \cdot c^{2(t-1)} \\
 &<  e^{-c2^{-(t-2)}} \cdot c^{2(t-1)}
  <  \left( \frac{(t-3)^2 \cdot (t-1)^2\cdot 2^{2(t-1)}}{e^{2(t-3)}}\right)^{t-1} 
  < 1.
\end{split}
\end{equation*}

In particular, the last inequality follows because $t \geq 29$.
This completes the proof.
\end{proof}

We now show that if a tournament $\overrightarrow{C}$ has Property
$P(j,f_t(j))$ for all $j \in \{1,\dots,\Delta-1\}$ 
where $t = \Delta$, then any connected oriented graph with 
maximum degree $\Delta$ and degeneracy $\Delta-1$ admits a pushable homomorphism to $\overrightarrow{C}$.

\begin{lemma}\label{Push_chrom_key-lemma2}
Let $\overrightarrow{C}$ be an oriented graph 
 having Property 
$P(j,f_t(j))$ for all $j \in \{1,\dots,\Delta-1\}$ where 
$t = \Delta$, and $\overrightarrow{G}$ be a connected oriented graph with 
maximum degree $\Delta$ and degeneracy $\Delta-1$. 
Then $\overrightarrow{G} \xrightarrow{{\rm push}} \overrightarrow{C}$. 
\end{lemma}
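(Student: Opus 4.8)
The plan is to use the degeneracy hypothesis to order the vertices and extend a pushable homomorphism greedily, one vertex at a time, controlling at each step how many already-mapped neighbors a new vertex can have. Since $\overrightarrow{G}$ has degeneracy $\Delta-1$, there is an ordering $v_1, v_2, \dots, v_n$ of $V(\overrightarrow{G})$ such that each $v_i$ has at most $\Delta-1$ neighbors among $v_1, \dots, v_{i-1}$; I would build a pushable homomorphism $\phi$ by processing the vertices in this order. The key reformulation is that defining a pushable homomorphism is equivalent to choosing, for each vertex $v$, a \emph{push state} (whether or not $v$ is pushed) together with an image $\phi(v)$, in such a way that for every arc the induced orientation (after pushing) is respected by $\phi$. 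Concretely, when I reach $v_i$ with back-neighbors $u_1, \dots, u_j$ (where $j \leq \Delta - 1$) already assigned images and push states, each $u_\ell$ imposes a sign $a_\ell \in \{+,-\}$ describing whether, in the pushed graph, the arc between $u_\ell$ and $v_i$ points into or out of $v_i$; but since $v_i$ itself may be pushed, I only know this sign vector up to global complementation. This is exactly the ambiguity built into the definition of $N^{\vec{a}}(J)$, which takes the union over $\vec a$ and $\vec a^c$.

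First I would set $J = \{\phi(u_1), \dots, \phi(u_j)\} \subseteq V(\overrightarrow{C})$ and form the sign vector $\vec{a} = (a_1, \dots, a_j)$ recording the required relationship between $\phi(v_i)$ and each $\phi(u_\ell)$. A valid image for $v_i$ is then precisely a vertex of $\overrightarrow{C}$ lying in $N^{\vec a}(J)$: if we end up using $\vec a$ we leave $v_i$ unpushed, and if we use $\vec a^c$ we push $v_i$, which is legitimate because at this stage $v_i$'s push decision has not yet been constrained by any later vertex. By Property $P(j, f_t(j))$ applied in $\overrightarrow{C}$, we have $|N^{\vec a}(J)| \geq f_t(j) = (t-j)(t-2)+1 \geq 1$ whenever $j \leq \Delta - 1 = t-1$, so at least one admissible image always exists and the greedy extension never gets stuck. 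One subtlety to handle carefully is that the $j$ back-neighbors must map to $j$ \emph{distinct} vertices of $\overrightarrow{C}$ for $J$ to be a genuine $j$-set, and that arcs among the back-neighbors themselves are already correctly realized by the push states chosen earlier; both follow from the inductive invariant that $\phi$ restricted to $\{v_1, \dots, v_{i-1}\}$ is a valid pushable homomorphism onto $\overrightarrow{C}$, together with $\overrightarrow{C}$ being a tournament so distinct images are forced by the existence of an arc between any two mapped back-neighbors.

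The main obstacle I anticipate is bookkeeping the push operation cleanly: one must verify that fixing $v_i$'s push state at the moment it is colored does not retroactively invalidate the arcs to its back-neighbors, and that future vertices' push choices remain free. The cleanest way to make this rigorous is to maintain the invariant that, after step $i$, there exists $\overrightarrow{G}_i' \in [\overrightarrow{G}[\{v_1,\dots,v_i\}]]$ for which $\phi$ is an honest homomorphism into $\overrightarrow{C}$, and to note that pushing $v_i$ only reverses arcs incident to $v_i$, all of which go to already-placed vertices whose sign constraints are captured symmetrically by $N^{\vec a}(J) = N^{\vec a^c}(J)$. I would then invoke Lemma~\ref{Push_chrom_key-lemma-recursive} to guarantee that Property $P(j, f_t(j))$ holds for every $j \in \{1, \dots, \Delta-1\}$ as soon as it holds for $j = \Delta - 1$, so the hypotheses line up, and conclude that $\phi$ extends to all of $\overrightarrow{G}$, giving $\overrightarrow{G} \xrightarrow{\mathrm{push}} \overrightarrow{C}$. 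Connectivity of $\overrightarrow{G}$ is not essential to the argument itself but matches the target family $\mathcal{G}^{\mathrm c}_\Delta$; I would remark that the same proof works component by component.
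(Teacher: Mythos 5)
Your overall strategy (greedy extension along a $(\Delta-1)$-degeneracy order, invoking Property $P(j,f_t(j))$ at each step) is the same as the paper's, but your argument has a genuine gap at precisely the point you flag as a ``subtlety.'' You claim that the already-colored back-neighbors of the current vertex automatically receive \emph{distinct} images because ``$\overrightarrow{C}$ is a tournament, so distinct images are forced by the existence of an arc between any two mapped back-neighbors.'' This is false: two back-neighbors of $v_i$ need not be adjacent in $G$ (and typically are not), so nothing in the homomorphism condition forces their images apart --- and $\overrightarrow{C}$ being a tournament is not even an assumption of the lemma, which only posits an oriented graph with Property $P(j,f_t(j))$. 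The failure is not cosmetic. If two back-neighbors $u_1,u_2$ of some vertex $v$ share the image $c$ while imposing conflicting signs (say the arcs are $u_1v$ and $vu_2$ in the pushed graph fixed so far), then extension at $v$ is impossible outright: whether or not $v$ is pushed, one needs $\phi(v)\in N^+(c)\cap N^-(c)=\emptyset$, since an oriented graph has no opposite arcs. Your appeal to $|N^{\vec a}(J)|\geq f_t(j)\geq 1$ silently assumes $J$ is a genuine $j$-set with one well-defined sign per element, which is exactly what can break.

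The paper closes this hole with a look-ahead invariant plus a counting argument, and this is where the full strength of $f_t(j)=(\Delta-j)(\Delta-2)+1$ (rather than merely $f_t(j)\geq 1$) is used. Besides requiring the partial map to be a homomorphism on the colored prefix, the paper maintains a second invariant: for every still-uncolored vertex, its already-colored neighbors have pairwise distinct images. To preserve this when coloring $v_{l+1}$, one must choose its image outside the images of the set $B$ of colored vertices that share an uncolored neighbor with $v_{l+1}$. Writing $A$ for the uncolored neighbors of $v_{l+1}$ and $A'$ for the colored ones, each $a\in A$ has at most $\Delta-1$ neighbors of smaller index, one of which is $v_{l+1}$, so $|B|\leq(\Delta-2)|A|$; meanwhile Property $P(|A'|,f_t(|A'|))$ yields at least $(\Delta-|A'|)(\Delta-2)+1$ admissible images, and since $|A|\leq\Delta-|A'|$ this strictly exceeds $|B|$, so a legal choice avoiding the forbidden images always exists. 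Without this invariant and this counting, the greedy extension you describe can get irrecoverably stuck, so the missing step is essential to the proof.
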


\begin{proof}
Let us assume the vertices of $\overrightarrow{G}$ are labeled $v_1, \dots, v_k$ so that each vertex has at most $\Delta-1$ neighbors with smaller index.
For every $l \in \{1,\dots,k\}$, we denote by $\overrightarrow{G}_l$ the  oriented graph induced by the vertices in $\{v_1,  \dots, v_l\}$.
We now inductively construct a  homomorphism 
$g: \overrightarrow{G} \rightarrow \overrightarrow{C}$ with the following properties:

\begin{itemize}
\item For every $l \in \{1,\dots,k\}$, the partial mapping $g(v_1), \dots, g(v_l)$ is a homomorphism from $\overrightarrow{G}_l$ to $\overrightarrow{C}$.

\item For every $i > l$, all neighbors of $v_i$ with index 
at most $l$ have different images by the mapping $g$.  
\end{itemize} 

For $l=1$, consider any  partial mapping $g(v_1)$. 
Suppose now that the function $g$ satisfies the above two properties for all $i \leq l$ for some  fixed $l \in \{1,\dots,k-1\}$. 
Let $A$ be the set of neighbors of $v_{l+1}$   with  index greater than $l+1$,
and $B$ be the set of vertices   with index at most $l$ and with at least one neighbor 
in $A$. 
 Note that  $|B| \leq (\Delta-2)|A|$.

 Let $D$ be the set of possible  options 
 for $g(v_{l+1})$ leading to the partial mapping 
 being a homomorphism from $\overrightarrow{G}_{l+1}$ to $\overrightarrow{C}$. Let $A'$ be the set of neighbors of $v_{l+1}$   with  index less than $l+1$. 
 Therefore, due to Property 
$P(|A'|,f_t(|A'|))$ of $\overrightarrow{C}$, we have 
\[|D| \geq f_t(|A'|)=(\Delta - |A'|)(\Delta - 2) +1 > (\Delta-2)|A| \geq |B|,\] which implies
  $|D|  > |B|$. 
 Thus choose any vertex from $D \setminus B$ as the image $g(v_{l+1})$. 
 Note that the resulting partial mapping satisfies the  two required conditions as well.
 This concludes the proof.
\end{proof}

We are  now ready to prove Theorem~\ref{Push_chrom_th main}. 

\begin{proof}[Proof of Theorem~\ref{Push_chrom_th main}]
The lower bound follows from Theorem~\ref{theorem:relation-chio-chip} and the bound  $2^{\Delta/2} \leq \chi_o(\mathcal{G}_{\Delta})$ established by Kostochka, Sopena and Zhu in~\cite{Kostochka97acyclicand}.
We now focus on proving the upper bound.
Let $\overrightarrow{G}$ be a connected oriented graph with maximum degree $\Delta \geq 29$.
Note that if $\overrightarrow{G}$ is not $\Delta$-regular then $\overrightarrow{G}$ is $(\Delta-1)$-degenerate.
In that case we are done by Lemmas~\ref{Push_chrom_key-lemma} and~\ref{Push_chrom_key-lemma2}. 
So assume $\overrightarrow{G}$ is $\Delta$-regular.
Delete one arc $uv$ from $\overrightarrow{G}$ to obtain a connected oriented graph with maximum degree $\Delta$ and degeneracy $\Delta-1$. 
This new oriented graph admits a pushable homomorphism $g$ to an oriented graph $\overrightarrow{C}$   with Property 
$P(j,f_t(j))$ for all $j \in \{1,\dots,\Delta-1\}$. 
Now add two new vertices $x$ and $y$ to $\overrightarrow{C}$ to obtain a new oriented graph 
$\overrightarrow{\hat{C}}$. Modify the pushable homomorphism $g$ to $\hat{g}$ by setting $\hat{g}(u) = x$, 
$\hat{g}(v) = y$ and $\hat{g}(w) = g(w)$ for all $w \neq u,v$. Moreover,  choose the 
direction of the arcs incident with  $x$ and $y$ 
 in such a way that $\hat{g}$ is a pushable homomorphism from $\overrightarrow{G}$ to $\overrightarrow{\hat{C}}$.
\end{proof}

The proof of Theorem~\ref{Push_chrom_th main} above can also be employed to prove Theorem~\ref{Push_chrom_th main2}.

\begin{proof}[Proof of Theorem~\ref{Push_chrom_th main2}]
The lower bound is due to a result of Kostochka, Sopena and Zhu in~\cite{Kostochka97acyclicand}.
Let us now focus on the upper bound.
From Lemmas~\ref{Push_chrom_key-lemma} and~\ref{Push_chrom_key-lemma2}, we know that 
if $\overrightarrow{G}$ has
maximum degree $\Delta$ and is $(\Delta-1)$-degenerate, then 
$\chi_o(\overrightarrow{G}) \leq 2 \chi_p(\overrightarrow{G}) \leq 2(\Delta-3) \cdot (\Delta-1) \cdot 2^{\Delta}$. 
Thus we are done for all oriented graphs of $\mathcal{G}_{\Delta}$  but the ones that are $\Delta$-regular.
For these oriented graphs, the upper bound can be proved similarly as in the proof of Theorem~\ref{Push_chrom_th main}. 
\end{proof}

\section{Proof of Theorem~\ref{th subcubic}}\label{sec 2}


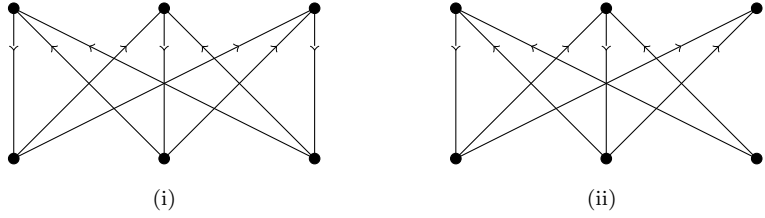
\begin{figure}
\centering
\subfloat[]{
\scalebox{1}{
\begin{tikzpicture}
\filldraw [black] (0,0) circle (2pt) {node[below]{}};
\filldraw [black] (2,0) circle (2pt) {node[above]{}};
\filldraw [black] (4,0) circle (2pt) {node[above]{}};
\filldraw [black] (0,2) circle (2pt) {node[below]{}};
\filldraw [black] (2,2) circle (2pt) {node[above]{}};
\filldraw [black] (4,2) circle (2pt) {node[above]{}};
\draw[-<] (0,0) -- (0,1.5);
\draw[-] (0,1.5) -- (0,2);
\draw[->] (0,0) -- (1.5,1.5);
\draw[-] (1.5,1.5) -- (2,2);
\draw[->] (0,0) -- (3,1.5);
\draw[-] (3,1.5) -- (4,2);
\draw[->] (2,0) -- (.5,1.5);
\draw[-] (.5,1.5) -- (0,2);
\draw[-<] (2,0) -- (2,1.5);
\draw[-] (2,1.5) -- (2,2);
\draw[->] (2,0) -- (3.5,1.5);
\draw[-] (3.5,1.5) -- (4,2);
\draw[->] (4,0) -- (1,1.5);
\draw[-] (1,1.5) -- (0,2);
\draw[->] (4,0) -- (2.5,1.5);
\draw[-] (2.5,1.5) -- (2,2);
\draw[-<] (4,0) -- (4,1.5);
\draw[-] (4,1.5) -- (4,2);
\end{tikzpicture}
}
}
\hspace{30pt}
\subfloat[]{
\scalebox{1}{
\begin{tikzpicture}
\filldraw [black] (6,0) circle (2pt) {node[below]{}};
\filldraw [black] (8,0) circle (2pt) {node[above]{}};
\filldraw [black] (10,0) circle (2pt) {node[above]{}};
\filldraw [black] (6,2) circle (2pt) {node[below]{}};
\filldraw [black] (8,2) circle (2pt) {node[above]{}};
\filldraw [black] (10,2) circle (2pt) {node[above]{}};
\draw[-<] (6,0) -- (6,1.5);
\draw[-] (6,1.5) -- (6,2);
\draw[->] (6,0) -- (7.5,1.5);
\draw[-] (7.5,1.5) -- (8,2);
\draw[->] (6,0) -- (9,1.5);
\draw[-] (9,1.5) -- (10,2);
\draw[->] (8,0) -- (6.5,1.5);
\draw[-] (6.5,1.5) -- (6,2);
\draw[-<] (8,0) -- (8,1.5);
\draw[-] (8,1.5) -- (8,2);
\draw[->] (8,0) -- (9.5,1.5);
\draw[-] (9.5,1.5) -- (10,2);
\draw[->] (10,0) -- (7,1.5);
\draw[-] (7,1.5) -- (6,2);
\draw[->] (10,0) -- (8.5,1.5);
\draw[-] (8.5,1.5) -- (8,2);
\end{tikzpicture}
}}

\caption{A cubic oriented graph with pushable chromatic number~$6$ (i),
and an  oriented graph with maximum average degree strictly less than $3$ and pushable chromatic number~$5$ (ii).}~\label{orientable coloring girth 4}
\end{figure}

The lower bound follows from the existence of subcubic oriented graphs with pushable chromatic number~$6$,
such as the one depicted in Figure~\ref{orientable coloring girth 4}(i).
To prove the upper bound of Theorem~\ref{th subcubic} we show that any subcubic oriented graph 
$\overrightarrow{G}$ admits a pushable homomorphism to the Paley tournament $\overrightarrow{{\rm Pal}_7}$ on seven vertices.
We prove this is the rest of this section.

Assume that this does not hold for all subcubic oriented graphs,
and consider $\overrightarrow{H}$ a minimum (with respect to its number of vertices) subcubic oriented graph that does not admit a pushable homomorphism to $\overrightarrow{{\rm Pal}_7}$.   
We prove that  
 $\overrightarrow{H}$ cannot contain 
 certain  configurations until we finally reach a contradiction to its existence.
 Note that $\overrightarrow{H}$ must be connected due to the minimality condition. 
 
We first show that  $\overrightarrow{H}$ must be cubic.  

 \begin{lemma}
$\overrightarrow{H}$ is cubic. 
 \end{lemma}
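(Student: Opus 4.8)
The plan is to argue by contradiction using the minimality of $\overrightarrow{H}$. Suppose $\overrightarrow{H}$ is not cubic; then, since its maximum degree is at most~$3$, it contains a vertex $v$ of degree $d \leq 2$. The strategy is to delete $v$, apply minimality to the smaller graph, and then show that the image of $v$ can always be chosen to extend the pushable homomorphism, contradicting the assumption that $\overrightarrow{H}$ admits no such homomorphism. Concretely, I would let $\overrightarrow{H}' = \overrightarrow{H} - v$, which is again a subcubic oriented graph but with strictly fewer vertices, hence by minimality $\overrightarrow{H}' \xrightarrow{{\rm push}} \overrightarrow{{\rm Pal}_7}$ via some pushable homomorphism $\phi$.

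The key point is to show that $\phi$ extends to $v$. Since $v$ has at most two neighbors $u_1, \dots, u_d$ with $d \le 2$, and a pushable homomorphism allows us to freely push $v$ itself (which flips all arcs at $v$ simultaneously), the set of valid images for $v$ is exactly $N^{\vec{a}}(\{u_1,\dots,u_d\})$ for the appropriate $d$-vector $\vec{a}$ recording the arc directions at $v$. By Lemma~\ref{P7 P22}, $\overrightarrow{{\rm Pal}_7}$ has Property $P(1,6)$ and Property $P(2,2)$, so for $d=1$ there are at least~$6$ valid images and for $d=2$ there are at least~$2$ valid images; in the degenerate case $d=0$ all seven vertices work. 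In every case this set is nonempty, so a valid image for $v$ exists, and extending $\phi$ by it yields a pushable homomorphism $\overrightarrow{H} \xrightarrow{{\rm push}} \overrightarrow{{\rm Pal}_7}$, the desired contradiction.

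The main subtlety to get right is the interaction between pushing at $v$ and the existing pushable homomorphism $\phi$. Since $\phi$ is a pushable homomorphism of $\overrightarrow{H}'$, it is an ordinary homomorphism from some $\overrightarrow{H}'' \in [\overrightarrow{H}']$ to $\overrightarrow{{\rm Pal}_7}$; I would carry out the argument on $\overrightarrow{H}''$ together with a choice of orientation of the arcs at $v$, and observe that the freedom to push $v$ is exactly what collapses the two ``polarity'' requirements into the single symmetric set $N^{\vec{a}}(J) = N^{\vec{a}^c}(J)$ appearing in the definition of Property $P(j,k)$. This is precisely why Property $P(d, k)$ with $k \geq 1$, rather than an in-/out-neighborhood condition, is the right tool here, and it is the step I would write most carefully. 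Given that, the lemma follows immediately, since a vertex of degree at most~$2$ in a subcubic graph can always be accommodated.
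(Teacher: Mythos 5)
Your overall strategy --- delete the low-degree vertex, invoke minimality, and extend using Properties $P(1,6)$ and $P(2,2)$ --- is the same as the paper's, and your degree-$1$ case is correct. But your degree-$2$ case has a genuine gap. Property $P(2,2)$ is a statement about $2$-sets of \emph{distinct} vertices of $\overrightarrow{{\rm Pal}_7}$: the set $J$ in the extension argument is the set of \emph{images} $\{\phi(u_1),\phi(u_2)\}$ of the two neighbors of $v$, and nothing prevents $\phi(u_1)=\phi(u_2)$ once $v$ has been deleted ($u_1$ and $u_2$ need not be adjacent in $\overrightarrow{H}-v$). When the images coincide, say both equal $p$, and the arcs at $v$ form a directed $2$-path $u_1 \to v \to u_2$, a valid image $x=\phi(v)$ would have to satisfy both $px \in A(\overrightarrow{{\rm Pal}_7})$ and $xp \in A(\overrightarrow{{\rm Pal}_7})$, which is impossible in a tournament; pushing $v$ reverses both arcs and leaves the same impossible requirement, and pushing other vertices is not available to you since $\phi$ is already fixed on $\overrightarrow{H}-v$. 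So your claim that ``for $d=2$ there are at least $2$ valid images'' fails exactly in this situation, and the contradiction does not go through. (The subtlety you flagged --- how pushing $v$ collapses the two polarities into $N^{\vec{a}}(J)=N^{\vec{a}^c}(J)$ --- is handled correctly; the one you missed is the possible coincidence of images.)

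The paper avoids this with a small but essential trick: for a degree-$2$ vertex $u$ with neighbors $v$ and $w$, it applies minimality not to $\overrightarrow{H}-u$ but to the graph $\overrightarrow{H_1}$ obtained by deleting $u$ and \emph{adding the arc $wv$} when $v$ and $w$ are not already adjacent. This graph is still subcubic (for $v$ and $w$, the lost neighbor $u$ is compensated by the new mutual adjacency), it has fewer vertices, so minimality applies; and because $v$ and $w$ are adjacent in $\overrightarrow{H_1}$, any pushable homomorphism to $\overrightarrow{{\rm Pal}_7}$ necessarily gives them distinct images. Restricting that homomorphism to $\overrightarrow{H}-u$ and then extending to $u$ via Property $P(2,2)$ --- now legitimately applied to a genuine $2$-set --- completes the argument. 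Your proof needs this arc-addition device (or some other mechanism forcing $\phi(u_1)\neq\phi(u_2)$) to be correct.
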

 
 \begin{proof}
 Assume $\overrightarrow{H}$ has a degree-$1$ vertex $u$.
By minimality, there exists a pushable homomorphism $f$ from $\overrightarrow{H} - \{u\}$ to  $\overrightarrow{{\rm Pal}_7}$. It is possible to extend $f$ to a pushable homomorphism from 
  $\overrightarrow{H}$
to $\overrightarrow{{\rm Pal}_7}$ due to Property $P(1,6)$, a contradiction.
Now assume $\overrightarrow{H}$ has a degree-$2$ vertex $u$ with neighbors $v$ and $w$.
Consider the oriented graph $\overrightarrow{H_1}$ obtained from $\overrightarrow{H}$ 
by deleting $u$  and adding 
the arc $wv$  if $v$ and $w$ are not already adjacent. 
 By minimality, there exists a pushable homomorphism $f$ from $\overrightarrow{H_1}$ to $\overrightarrow{{\rm Pal}_7}$.
 It is possible to extend $f$ to a pushable homomorphism from 
  $\overrightarrow{H}$
to $\overrightarrow{{\rm Pal}_7}$ due to Property $P(2,2)$, a contradiction.
 \end{proof}
 
 Also, we note that $\overrightarrow{H}$ cannot be a tournament. 
 
 \begin{lemma}
 $\overrightarrow{H}$ is not a tournament. 
 \end{lemma}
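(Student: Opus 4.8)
The plan is to reach a contradiction by proving that \emph{every} cubic tournament admits a pushable homomorphism to $\overrightarrow{{\rm Pal}_7}$; as $\overrightarrow{H}$ is cubic by the previous lemma, this is incompatible with its being a counterexample. The first observation is that a tournament on $n$ vertices has all degrees equal to $n-1$, so a cubic tournament satisfies $n-1 = 3$: it is an orientation of $K_4$, and in particular has exactly four vertices. Hence it suffices to show that each of the (finitely many) orientations of $K_4$ maps pushably into $\overrightarrow{{\rm Pal}_7}$.

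Since we are free to push, the crucial simplification is that it is enough to check that every isomorphism type of tournament on four vertices is push-equivalent to an induced subtournament of $\overrightarrow{{\rm Pal}_7}$: if $\overrightarrow{T} \in [\overrightarrow{H}]$ and $\overrightarrow{T}$ embeds in $\overrightarrow{{\rm Pal}_7}$, then the inclusion is a homomorphism witnessing $\overrightarrow{H} \xrightarrow{{\rm push}} \overrightarrow{{\rm Pal}_7}$. There are exactly four tournaments on four vertices up to isomorphism, identified by their score sequences: the transitive tournament with scores $(0,1,2,3)$; the two tournaments $(1,1,1,3)$ and $(0,2,2,2)$, in which a single vertex dominates, respectively is dominated by, a directed $3$-cycle; and the strongly connected one with scores $(1,1,2,2)$. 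Using vertex-transitivity (Lemma~\ref{P7 is vt-at}) to keep the verification short, I would exhibit explicit four-element subsets realizing three of these types directly inside $\overrightarrow{{\rm Pal}_7}$: the set $\{0,1,2,4\}$ gives type $(1,1,1,3)$ (vertex $0$ dominating the directed $3$-cycle $1 \to 2 \to 4 \to 1$), the set $\{0,3,5,6\}$ gives the reverse type $(0,2,2,2)$, and $\{0,1,2,3\}$ gives the strongly connected type $(1,1,2,2)$.

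The main obstacle is the transitive tournament $(0,1,2,3)$, which, unlike the other three, is \emph{not} an induced subtournament of $\overrightarrow{{\rm Pal}_7}$: every out-neighborhood of $\overrightarrow{{\rm Pal}_7}$ has size $3$ and itself induces a directed $3$-cycle (e.g.\ $N^+(0)=\{1,2,4\}$), so no vertex can dominate a transitive triple and $\overrightarrow{{\rm Pal}_7}$ has no transitive subtournament of order $4$. Concretely, the triangle $\{0,1,2\}$ has neither a common out-neighbor nor a common in-neighbor, so $\overrightarrow{{\rm Pal}_7}$ fails Property $P(3,1)$ and the naive greedy extension available through Property $P(2,2)$ (Lemma~\ref{P7 P22}) breaks down exactly at the fourth vertex. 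This is where the push operation is indispensable: pushing any single vertex of the transitive tournament that is neither its source nor its sink produces a tournament with score sequence $(1,1,2,2)$, so the transitive type is push-equivalent to the strongly connected type, which does embed (via $\{0,1,2,3\}$). Together with the three direct embeddings, this shows that all four orientations of $K_4$ admit a pushable homomorphism to $\overrightarrow{{\rm Pal}_7}$, contradicting the existence of $\overrightarrow{H}$ and establishing that $\overrightarrow{H}$ is not a tournament.
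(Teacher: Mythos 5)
Your proof is correct and takes essentially the same route as the paper: reduce to orientations of $K_4$ via the cubicity lemma, then show each such tournament is push-equivalent to an induced subtournament of $\overrightarrow{{\rm Pal}_7}$. In fact you supply more detail than the paper, which merely asserts without verification that every four-vertex tournament is in a push relationship with $\overrightarrow{{\rm Pal}_7}[\{0,1,2,4\}]$ or $\overrightarrow{{\rm Pal}_7}[\{1,2,3,4\}]$; your classification by score sequences, the three explicit embeddings, and the push turning the transitive tournament into the strong one constitute exactly the omitted check.
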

 
 \begin{proof}
 Any tournament on four vertices is in a push relationship with one of the following 
 two tournaments (both contained in $\overrightarrow{{\rm Pal}_7}$): $(i)$ the induced tournament $\overrightarrow{{\rm Pal}_7}[\{0,1,2,4\}]$  and  
 $(ii)$ the induced tournament $\overrightarrow{{\rm Pal}_7}[\{1,2,3,4\}]$.
 Thus, if $\overrightarrow{H}$ is an orientation of $K_4$ then it must admit a pushable homomorphism to $\overrightarrow{{\rm Pal}_7}$, a contradiction.
 \end{proof}
 
  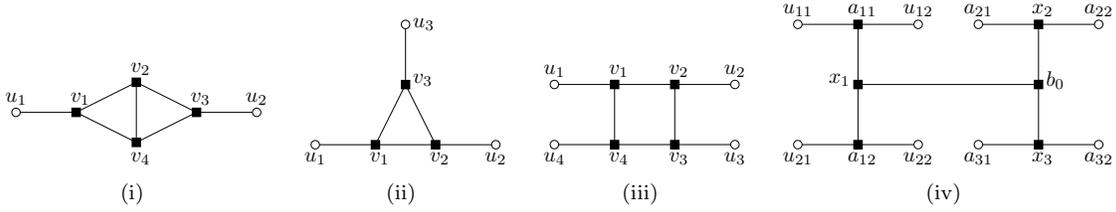
\begin{figure} 
\centering
\subfloat[]{
\scalebox{0.8}{
\begin{tikzpicture}
\draw[-] (0,0) -- (1,0);
\draw[-] (2,.5) -- (1,0);
\draw[-] (2,-.5) -- (1,0);
\draw[-] (2,.5) -- (3,0);
\draw[-] (2,-.5) -- (3,0);
\draw[-] (3,0) -- (4,0);
\draw[-] (2,-.5) -- (2,.5);
\filldraw [black,draw,fill=white] (0,0) circle (2pt) {node[above]{$u_1$}};
\filldraw [black] ([xshift=-2pt,yshift=-2pt]1,0) rectangle ++(4pt,4pt) {node[above,yshift=-2pt]{$v_1$}};
\filldraw [black] ([xshift=-2pt,yshift=-2pt]2,.5) rectangle ++(4pt,4pt) {node[above,yshift=-2pt]{$v_2$}};
\filldraw [black] ([xshift=-2pt,yshift=-2pt]2,-.5) rectangle ++(4pt,4pt) {node[below,yshift=-3pt]{$v_4$}};	
\filldraw [black] ([xshift=-2pt,yshift=-2pt]3,0) rectangle ++(4pt,4pt) {node[above,yshift=-2pt]{$v_3$}};
\filldraw [black,draw,fill=white] (4,0) circle (2pt) {node[above]{$u_2$}};
\end{tikzpicture}
}}
\subfloat[]{
\scalebox{0.8}{
\begin{tikzpicture}
\draw[-] (5,-.5) -- (8,-.5);
\draw[-] (6,-.5) -- (6.5,.5);
\draw[-] (7,-.5) -- (6.5,.5);
\draw[-] (6.5,1.5) -- (6.5,.5);
\filldraw [black,draw,fill=white] (5,-.5) circle (2pt) {node[below]{$u_1$}};
\filldraw [black] ([xshift=-2pt,yshift=-2pt]6,-.5) rectangle ++(4pt,4pt) {node[below,yshift=-2pt]{$v_1$}};
\filldraw [black] ([xshift=-2pt,yshift=-2pt]7,-.5) rectangle ++(4pt,4pt) {node[below,yshift=-2pt]{$v_2$}};
\filldraw [black,draw,fill=white] (8,-.5) circle (2pt) {node[below]{$u_2$}};
\filldraw [black] ([xshift=-2pt,yshift=-2pt]6.5,.5) rectangle ++(4pt,4pt) {node[right,xshift=-2]{$v_3$}};
\filldraw [black,draw,fill=white] (6.5,1.5) circle (2pt) {node[right]{$u_3$}};
\end{tikzpicture}
}}
\subfloat[]{
\scalebox{0.8}{
\begin{tikzpicture}
\draw[-] (0.5,-2) -- (3.5,-2);
\draw[-] (0.5,-3) -- (3.5,-3);
\draw[-] (1.5,-2) -- (1.5,-3);
\draw[-] (2.5,-2) -- (2.5,-3);
\filldraw [black,draw,fill=white] (0.5,-2) circle (2pt) {node[above]{$u_1$}};
\filldraw [black] ([xshift=-2pt,yshift=-2pt]1.5,-2) rectangle ++(4pt,4pt) {node[above,yshift=-2pt]{$v_1$}};
\filldraw [black] ([xshift=-2pt,yshift=-2pt]2.5,-2) rectangle ++(4pt,4pt) {node[above,yshift=-2pt]{$v_2$}};
\filldraw [black,draw,fill=white] (3.5,-2) circle (2pt) {node[above]{$u_2$}};
\filldraw [black,draw,fill=white] (0.5,-3) circle (2pt) {node[below]{$u_4$}};
\filldraw [black] ([xshift=-2pt,yshift=-2pt]1.5,-3) rectangle ++(4pt,4pt) {node[below,yshift=-2pt]{$v_4$}};
\filldraw [black] ([xshift=-2pt,yshift=-2pt]2.5,-3) rectangle ++(4pt,4pt) {node[below,yshift=-2pt]{$v_3$}};
\filldraw [black,draw,fill=white] (3.5,-3) circle (2pt) {node[below]{$u_3$}};
\end{tikzpicture}
}}
\subfloat[]{
\scalebox{0.8}{
\begin{tikzpicture}
\draw[-] (4,-1.5) -- (6,-1.5);
\draw[-] (4,-3.5) -- (6,-3.5);
\draw[-] (7,-1.5) -- (9,-1.5);
\draw[-] (7,-3.5) -- (9,-3.5);
\draw[-] (5,-1.5) -- (5,-3.5);
\draw[-] (8,-1.5) -- (8,-3.5);
\draw[-] (5,-2.5) -- (8,-2.5);
\filldraw [black,draw,fill=white] (4,-1.5) circle (2pt) {node[above]{$u_{11}$}};
\filldraw [black] ([xshift=-2pt,yshift=-2pt]5,-1.5) rectangle ++(4pt,4pt) {node[above,yshift=-2pt]{$a_{11}$}};
\filldraw [black,draw,fill=white] (6,-1.5) circle (2pt) {node[above]{$u_{12}$}};
\filldraw [black] ([xshift=-2pt,yshift=-2pt]5,-2.5) rectangle ++(4pt,4pt) {node[left,xshift=-2pt]{$x_1$}};
\filldraw [black,draw,fill=white] (4,-3.5) circle (2pt) {node[below]{$u_{21}$}};
\filldraw [black] ([xshift=-2pt,yshift=-2pt]5,-3.5) rectangle ++(4pt,4pt) {node[below,yshift=-2pt]{$a_{12}$}};
\filldraw [black,draw,fill=white] (6,-3.5) circle (2pt) {node[below]{$u_{22}$}};
\filldraw [black,draw,fill=white] (7,-1.5) circle (2pt) {node[above]{$a_{21}$}};
\filldraw [black] ([xshift=-2pt,yshift=-2pt]8,-1.5) rectangle ++(4pt,4pt) {node[above,yshift=-2pt]{$x_2$}};
\filldraw [black,draw,fill=white] (9,-1.5) circle (2pt) {node[above]{$a_{22}$}};
\filldraw [black] ([xshift=-2pt,yshift=-2pt]8,-2.5) rectangle ++(4pt,4pt) {node[right,xshift=-2pt]{$b_0$}};
\filldraw [black,draw,fill=white] (7,-3.5) circle (2pt) {node[below]{$a_{31}$}};
\filldraw [black] ([xshift=-2pt,yshift=-2pt]8,-3.5) rectangle ++(4pt,4pt) {node[below,yshift=-2pt]{$x_3$}};
\filldraw [black,draw,fill=white] (9,-3.5) circle (2pt) {node[below]{$a_{32}$}};
\end{tikzpicture}
}}
\caption{Configurations needed for proving Theorem~\ref{th subcubic}. Black square vertices are vertices whose full neighborhood is part of the configuration. 
White circle vertices are vertices that might have other neighbors outside the configuration.}~\label{fig subcubic}
\end{figure}	
 
 Observe that a connected cubic oriented graph that is not a tournament (i.e., not an orientation of~$K_4$),  must have one of the configurations depicted in Figure~\ref{fig subcubic}. 
In what follows, we prove that  none of these configurations 
 can be present in $\overrightarrow{H}$, and thus that it cannot exist.
 
 Before going on to show that $\overrightarrow{H}$ does not contain the said configurations, we first introduce some notation and raise some remarks. We below deal with \emph{partial matrices}, i.e. matrices whose entries are either empty or contain an element of $V(\overrightarrow{{\rm Pal}_7})$. The $ij^{th}$ entry of a matrix $X$ is denoted by $X(i,j)$. Given two matrices $X_1$ and $X_2$ of the same dimension, the matrices $X_1 \pm X_2$ are well defined by setting $(X_1 \pm X_2)(i,j) = X_1(i,j) \pm X_2(i,j)$, with the convention that $\emptyset\pm x=x\pm\emptyset=\emptyset$ for every entry $x$.

In the upcoming lemmas, we will often use implicitely the following observation to check the correctness of some extensions of pushable homomorphisms : if, for some $(i,j)$, we have $(X_2 - X_1)(i,j) \in \{1,2,4\}$, then taking $f(u) = X_1(i,j)$ and $f(v) = X_2(i,j)$ defines a homomorphism of the arc $uv$ to 
$\overrightarrow{{\rm Pal}_7}$. Similarly, if $(X_2 - X_1)(i,j) \in \{3,5,6\}$ for some $(i,j)$, then taking $f(u) = X_1(i,j)$ and $f(v) = X_2(i,j)$ defines a homomorphism of the arc $vu$ to $\overrightarrow{{\rm Pal}_7}$. 

 \begin{lemma}\label{lem config i}
 The configuration depicted in Figure~\ref{fig subcubic}(i) cannot be contained  in $\overrightarrow{H}$.
 \end{lemma}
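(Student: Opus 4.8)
The plan is to argue by minimality, removing the four ``internal'' vertices of the configuration and then re-inserting them. Write the configuration as in Figure~\ref{fig subcubic}(i): the vertices $v_1,v_2,v_3,v_4$ are exactly the black square vertices, so all of their neighbors already appear in the configuration, the only vertices possibly having neighbors outside being $u_1$ and $u_2$. I would set $\overrightarrow{H}' = \overrightarrow{H} - \{v_1,v_2,v_3,v_4\}$; this is a strictly smaller subcubic oriented graph (only $u_1$ and $u_2$ lose a neighbor), so by the minimality of $\overrightarrow{H}$ there is a pushable homomorphism $f \colon \overrightarrow{H}' \xrightarrow{{\rm push}} \overrightarrow{{\rm Pal}_7}$. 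It then suffices to extend $f$ to $v_1,v_2,v_3,v_4$, that is, to color the ``diamond'' $\{v_1,v_2,v_3,v_4\}$ (which is $K_4$ minus the edge $v_1v_3$) consistently with the already-fixed images $f(u_1)$ and $f(u_2)$.

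To organize the extension, I would phrase it through push-bits. Choosing images $f(v_i)$ together with a decision to push each $v_i$ or not, the difference criterion from the remark preceding this lemma says that each arc $xy$ of the configuration is correctly mapped exactly when the parity $s_x \oplus s_y$ of the push-bits matches whether $f(y)-f(x)$ is a nonzero square ($\{1,2,4\}$) or a nonsquare ($\{3,5,6\}$); in particular every arc forces its two endpoints to receive distinct images. Since pushing a vertex flips all of its incident differences and $-1$ is a nonsquare modulo $7$, the push-bits are governed by a linear system $s_x \oplus s_y = \tau_{xy}$ over $\mathbb{F}_2$, where $\tau_{xy}$ depends only on the chosen images. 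The underlying graph of the configuration has $6$ vertices and $7$ edges, hence exactly two independent cycles, namely the triangles $v_1v_2v_4$ and $v_2v_3v_4$, while the two pendant edges $u_1v_1$ and $u_2v_3$ lie in a spanning tree and impose no cycle condition. Consequently the system is solvable --- and the desired extension exists --- precisely when the images are chosen so that the $\tau$-sum around each of these two triangles vanishes, subject only to $f(v_1) \neq f(u_1)$ and $f(v_3) \neq f(u_2)$.

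Carrying this out is then straightforward, and this is where I would use Property $P(2,2)$ of $\overrightarrow{{\rm Pal}_7}$. The key point is a decoupling: $v_1$ and $v_3$ are non-adjacent, $v_1$ occurs only in the first triangle, and $v_3$ only in the second. I would first pick any two distinct images for the inner pair $v_2, v_4$. For the first triangle, the requirement on $f(v_1)$ is that it be a common neighbor of $f(v_2)$ and $f(v_4)$ with a prescribed in/out pattern; as pushing $v_1$ flips both incident arcs simultaneously, this requirement is exactly that $f(v_1)$ lie in one set $N^{\vec{a}}(\{f(v_2),f(v_4)\})$, which has size at least $2$ by Property $P(2,2)$ (and which never contains $f(v_2)$ or $f(v_4)$, as $\overrightarrow{{\rm Pal}_7}$ is loopless). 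Discarding the single forbidden value $f(u_1)$ still leaves a valid choice, and symmetrically $P(2,2)$ yields a valid $f(v_3)$ avoiding $f(u_2)$. The main obstacle to anticipate is precisely that the internal vertices have degree $3$ while $\overrightarrow{{\rm Pal}_7}$ does not satisfy $P(3,1)$, so a naive one-vertex-at-a-time greedy extension fails on the last vertex; the decoupling above, made possible by the non-adjacency of $v_1$ and $v_3$ and by the fact that the pendant edges carry no parity constraint, is what lets Property $P(2,2)$ alone finish the argument.
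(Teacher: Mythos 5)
Your reduction (delete $\{v_1,v_2,v_3,v_4\}$, invoke minimality, then extend over the diamond) matches the paper's, but the extension step has a genuine gap, and it is exactly the obstacle you flag in your last paragraph. The flaw is the claim that the pendant edges $u_1v_1$ and $u_2v_3$ ``impose no cycle condition'' and hence only force $f(v_1)\neq f(u_1)$ and $f(v_3)\neq f(u_2)$. That would be true if the push-bits $s_{u_1}$ and $s_{u_2}$ were free variables, but they are not: $u_1$ and $u_2$ may have neighbors outside the configuration, so pushing either of them flips arcs on which $f$ is already committed to being a homomorphism. With $s_{u_1},s_{u_2}$ fixed, the correct solvability criterion for your $\mathbb{F}_2$-system is obtained after identifying the two fixed vertices (equivalently, adding a virtual edge carrying the known bit-difference), and that graph has \emph{three} independent cycles, not two: the two triangles plus the condition along the path $u_1v_1v_2v_3u_2$. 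Concretely, once you pick $f(v_1)$ inside $N^{\vec{a}}(\{f(v_2),f(v_4)\})$, the half of that set it lies in forces the value of $s_{v_1}$, and then the edge $u_1v_1$ demands $f(v_1)\in N^{\pm}(f(u_1))$ for one \emph{specific} sign, a constraint your argument never checks. So the true requirement on $f(v_1)$ is a three-vertex condition (involving $f(v_2)$, $f(v_4)$ and $f(u_1)$) of exactly the $P(3,1)$ type that you yourself note $\overrightarrow{{\rm Pal}_7}$ fails to have.

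A concrete failure: take $f(u_1)=0$ with the arc $u_1v_1$ and arcs $v_2v_1$, $v_4v_1$ (push-bits of $v_2,v_4$ normalized to $0$), and suppose your ``arbitrary'' choice for the inner pair is $f(v_2)=1$, $f(v_4)=2$. Then the admissible set for $v_1$ is $\left(N^+(1)\cap N^+(2)\right)\cup\left(N^-(1)\cap N^-(2)\right)=\{3\}\cup\{0\}=\{0,3\}$, consistent with $P(2,2)$; discarding $f(u_1)=0$ leaves $f(v_1)=3$. But $3\in N^+(1)\cap N^+(2)$ forces $s_{v_1}=0$, and then $u_1v_1$ requires $3\in N^+(0)=\{1,2,4\}$, which fails, while $s_{v_1}=1$ requires $3\in N^-(1)\cap N^-(2)=\{0\}$, which also fails. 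So no extension exists for this choice, and your method gives no way to recover. This is precisely why the paper does not choose $f(v_2),f(v_4)$ freely: it exhibits matrices of \emph{coordinated} assignments of all of $(f(v_1),f(v_2),f(v_4),f(v_3))$ (normalized by vertex-transitivity so that $f(u_1)=0$ and by pushes so that the arcs are $u_1v_1$, $v_2v_1$, $v_4v_1$, $v_2v_4$, $v_3v_2$), showing that $f(v_3)$ can be prescribed to be \emph{any} value in $\{1,\dots,6\}$; only this full range of freedom lets it hit the genuine sign constraint at $u_2$, which is membership in the $3$-element set $N^{\beta}(f(u_2))$, not mere avoidance of $f(u_2)$. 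Your decoupling idea is attractive, but repairing it would require exactly the kind of coordinated global construction the paper carries out; $P(2,2)$ plus distinctness is not enough.
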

 
 \begin{proof}
 Assume that $\overrightarrow{H}$ contains the configuration depicted in Figure~\ref{fig subcubic}(i). 
 Let $\overrightarrow{H_1}$ be the oriented graph obtained from $\overrightarrow{H}$ by deleting the vertices in
 $\{v_1, v_2, v_3, v_4\}$, and let $\overrightarrow{H_2}$ be the oriented graph obtained from 
 $\overrightarrow{H}$ by deleting the arc between $u_2$ and $v_3$. 
By minimality, $\overrightarrow{H_1}$ admits a pushable homomorphism $f$ to 
$\overrightarrow{{\rm Pal}_7}$. Up to pushing vertices in $\overrightarrow{H_1}$, we may assume that $f$ is actually an oriented homomorphism. As $\overrightarrow{{\rm Pal}_7}$ is vertex-transitive, without loss of generality we may assume that $f(u_1) = 0$. Moreover, up to pushing $v_1, v_2$ and $v_4$ in that order, we may assume that $\overrightarrow{H}$ has the arcs $u_1v_1, v_2v_1, v_4v_1$. Furthermore, up to exchanging $v_2$ and $v_4$ and then pushing $v_3$, we may also assume that $\overrightarrow{H_2}$ has the arcs $v_2v_4$ and $v_3v_2$. 

We show that for every $\ell\in\{1,\ldots,6\}$ we can extend $f$ to an oriented homomorphism from $\overrightarrow{H_2}$ to $\overrightarrow{{\rm Pal}_7}$ satisfying $f(v_3)=\ell$. This allows to conclude: let $\beta\in\{+,-\}$ such that $v_3$ is a $\beta$-neighbor of $u_2$ in $\overrightarrow{H}$ after having possibly pushed $v_3$ to obtain the arc $v_3v_2$. Since $|N^{\beta}(f(u_2))| = 3$, there exists $\ell \in N^{\beta}(f(u_2)) \setminus \{0\}$. We then extend $f$ to an oriented homomorphism from 
$\overrightarrow{H_2}$ to 
$\overrightarrow{{\rm Pal}_7}$ such that $f(v_3)=\ell$. Due to the choice of $\ell$, this is also an oriented homomorphism from $\overrightarrow{H}$ to 
$\overrightarrow{{\rm Pal}_7}$, a contradiction.

Therefore, in order to prove that $\overrightarrow{H}$ cannot contain the configuration depicted in 
Figure~\ref{fig subcubic}(i), it only remains to extend $f$ to an oriented homomorphism from $\overrightarrow{H_2}$ to $\overrightarrow{{\rm Pal}_7}$ satisfying $f(v_3)=\ell$ for every $\ell\in\{1,\ldots,6\}$. To this end, we consider the following matrices:  
$$
X_{v_1} = 
\begin{bmatrix}
1 & 2 & 4\\
1 & 2 & 4\\
1 & 2 & 4
\end{bmatrix},~ 
X_{v_2} =  
\begin{bmatrix}
0 & 1 & 3\\
6 & 0 & 2\\
4 & 5 & 0
\end{bmatrix},~ 
X_{v_4} =  
\begin{bmatrix}
4 & 5 & 0\\
0 & 1 & 3\\
6 & 0 & 2
\end{bmatrix},
$$
and
$$
X^+_{v_3} =  
\begin{bmatrix}
6 & 0 & 2\\
4 & 5 & 0\\
0 & 1 & 3
\end{bmatrix},~
X^-_{v_3} =  
\begin{bmatrix}
3 & 4 & 6\\
5 & 6 & 1\\
2 & 3 & 5
\end{bmatrix}.
$$

Let $\alpha\in \{+,-\}$ such that $v_3 \in N^{\alpha}(v_4)$, and $\ell\in\{1,\ldots,6\}$. Observe that all values $\{1,\ldots,6\}$ are present in the matrix $X^{\alpha}_{v_3}$, hence we can take $(i,j)\in\{1,2,3\}^2$ such that $\ell= X^{\alpha}_{v_3}(i,j)$. We can then extend $f$ to an oriented homomorphism from $\overrightarrow{H_2}$ to $\overrightarrow{{\rm Pal}_7}$ by
choosing $f(v_k) = X_{v_k}(i,j)$ for all $k \in \{1,2,4\}$ and 
$f(v_3) =X^{\alpha}_{v_3}(i,j)=\ell $, which concludes the proof.
\end{proof}

 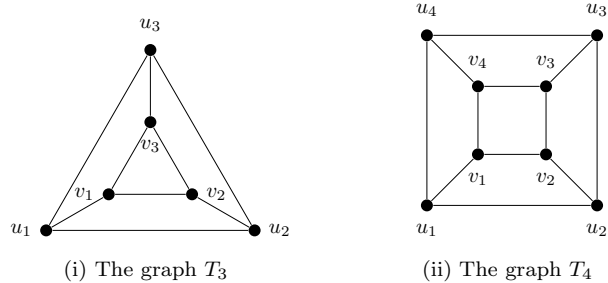
\begin{figure} 
\centering
\subfloat[The graph $T_3$]{
\scalebox{0.8}{
\begin{tikzpicture}[every node/.style = {circle, fill=black,inner sep = 2pt}]
\node[label = below:{$v_3$}] (v3) at (90:0.8){};
\node[label = above:{$u_3$}] (u3) at (90:2){};
\node[label = right:{$v_2$}] (v2) at (-30:0.8){};
\node[label = right:{$u_2$}] (u2) at (-30:2){};
\node[label = left:{$v_1$}] (v1) at (210:0.8){};
\node[label = left:{$u_1$}] (u1) at (210:2){};
\draw (u1) -- (u3) -- (u2) -- (u1) -- (v1) -- (v2) -- (v3) -- (v1) ;
\draw (u3) -- (v3);
\draw (u2) -- (v2);
\end{tikzpicture}
}}
\hspace{30pt}
\subfloat[The graph $T_4$]{
\scalebox{0.8}{
\begin{tikzpicture}[every node/.style = {circle, fill=black,inner sep = 2pt}]
\node[label = above:{$u_4$}] (u4) at (135:2) {};
\node[label = above:{$v_4$}] (v4) at (135:0.8) {};
\node[label = above:{$u_3$}] (u3) at (45:2) {};
\node[label = above:{$v_3$}] (v3) at (45:0.8) {};
\node[label = below:{$u_2$}] (u2) at (-45:2) {};
\node[label = below:{$v_2$}] (v2) at (-45:0.8) {};
\node[label = below:{$u_1$}] (u1) at (-135:2) {};
\node[label = below:{$v_1$}] (v1) at (-135:0.8) {};
\draw (u1) -- (u2) -- (u3) -- (u4) -- (u1) -- (v1) -- (v2) -- (v3) -- (v4) -- (v1);
\draw (u2) -- (v2);
\draw (u3) -- (v3);
\draw (u4) -- (v4);
\end{tikzpicture}
}}
\caption{Two cubic graphs mentioned in the proof of Theorem~\ref{th subcubic}.}~\label{fig support}
\end{figure}

Before moving on to proving Lemma~\ref{lem config ii}, we need to show the following.

\begin{lemma}\label{lem support i}
The graph $T_3$ depicted in Figure~\ref{fig support}(i) cannot be contained in $H$.  
 \end{lemma}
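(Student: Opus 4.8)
The plan is to first notice that containing $T_3$ is an extremely rigid condition. The graph $T_3$ is the triangular prism, and each of its six vertices already has degree $3$ inside $T_3$. Hence, if $T_3 \subseteq \overrightarrow{H}$, then since $\overrightarrow{H}$ is cubic none of these six vertices can have a neighbour outside $T_3$; as $\overrightarrow{H}$ is connected, this forces $\overrightarrow{H}$ to be an orientation of the prism itself. It therefore suffices to prove that \emph{every} orientation of the prism admits a pushable homomorphism to $\overrightarrow{{\rm Pal}_7}$, since this contradicts the choice of $\overrightarrow{H}$ as a counterexample.

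Next I would normalise the orientation by pushing. Write the two triangles as $u_1u_2u_3$ and $v_1v_2v_3$ and the spokes as $u_iv_i$. Pushing $u_1$ and $u_2$ flips only edges of the $u$-triangle, and these two pushes realise every orientation of the $u$-triangle with the same forward parity; since that parity is a push invariant and each parity class of a triangle contains a directed $3$-cycle, I may push so that the $u$-triangle becomes a directed $3$-cycle. I then push the appropriate $v_i$'s to orient all spokes $u_i \to v_i$; pushing a $v_i$ affects only its spoke and the two $v$-triangle edges at $v_i$, so it leaves the $u$-triangle untouched. Using the reflection automorphism of the prism I may assume the $u$-triangle is $u_1 \to u_2 \to u_3 \to u_1$, and the rotation automorphism (which preserves this normal form and the spoke directions) reduces the orientations of the $v$-triangle to be considered to a handful of orbit representatives. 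Finally, since $\overrightarrow{{\rm Pal}_7}$ is vertex- and arc-transitive (Lemma~\ref{P7 is vt-at}), I map the directed triangle $u_1u_2u_3$ onto the concrete directed triangle on $\{0,1,3\}$, say $f(u_1)=0,\ f(u_2)=1,\ f(u_3)=3$.

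It then remains to extend $f$ to $v_1,v_2,v_3$. Because all spokes point up, a homomorphism forces $f(v_i)\in N^+(f(u_i))$, and each such set has exactly three elements, namely $N^+(0)=\{1,2,4\}$, $N^+(1)=\{2,3,5\}$, $N^+(3)=\{4,5,0\}$. The task is thus to select one element from each of these three explicit triples so that the chosen images are pairwise distinct and induce on $v_1v_2v_3$ precisely the (forced) orientation of the $v$-triangle. This simultaneous constraint at the three spoke-ends is the heart of the argument, and I expect it to be the main obstacle: unlike the earlier configurations, the prism is $3$-regular, so no low-degree vertex is left to finish greedily, and at the last vertex $v_3$ one has all of $u_3,v_1,v_2$ already coloured with no blanket $P(3,\cdot)$ guarantee available. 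I would resolve it by a short finite inspection, organised exactly as in the proof of Lemma~\ref{lem config i}: record the candidate out-neighbours in $3\times 3$ matrices and exhibit, for each orbit representative of the $v$-triangle orientation, a consistent choice. For instance $f(v_1,v_2,v_3)=(1,2,4)$ realises the directed cycle $v_1\to v_2\to v_3\to v_1$, $(2,5,4)$ realises the reverse directed cycle, and $(1,3,5)$ realises the transitive triangle with source $v_1$ and sink $v_3$; the remaining transitive type is handled symmetrically.

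The abundance of translates of $\{1,2,4\}$ inside $\mathbb{Z}/7\mathbb{Z}$, reflected in the neighbourhood identities \eqref{eq first neighborhood} and \eqref{eq second neighborhood}, is what guarantees that such a consistent choice always exists; keeping the list of cases short by exploiting the prism's rotation and reflection symmetries together with the transitivity of $\overrightarrow{{\rm Pal}_7}$ is what makes the verification manageable. Each choice extends $f$ to a homomorphism of a pushed copy of the prism, so the obtained map is a pushable homomorphism to $\overrightarrow{{\rm Pal}_7}$, contradicting the existence of $\overrightarrow{H}$ and completing the proof.
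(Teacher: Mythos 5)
Your overall strategy is the same as the paper's: since $T_3$ is cubic and $\overrightarrow{H}$ is connected and cubic, containment forces $H=T_3$; one then push-normalises so that the outer triangle is directed and all spokes point from the $u_i$'s to the $v_i$'s, maps the directed outer triangle onto a directed triangle of $\overrightarrow{{\rm Pal}_7}$ (the paper uses $\{1,2,4\}$, you use $\{0,1,3\}$ --- both are directed triangles, so both work), and finishes with a finite check over the orientations of the inner triangle. The paper checks all eight inner orientations explicitly (its Figure~\ref{fig:T3}); your use of the rotation automorphism to cut these eight down to four orbit representatives (two directed, two transitive) is a legitimate economy, and your three explicit assignments $(1,2,4)$, $(2,5,4)$, $(1,3,5)$ are all correct homomorphisms. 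One small slip along the way: pushing $u_1$ or $u_2$ does \emph{not} flip ``only edges of the $u$-triangle'' --- it also flips the corresponding spoke --- but this is harmless, since you re-normalise the spokes afterwards by pushing $v_i$'s, which indeed leaves the $u$-triangle untouched.

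The genuine gap is your last sentence: ``the remaining transitive type is handled symmetrically.'' It is not. The rotations have already been quotiented out, every reflection of the prism reverses the direction of the now-fixed directed $u$-triangle, and in fact no symmetry compatible with your normalisation exchanges the two transitive orbits: the parity of the number of arcs agreeing with a fixed traversal of a cycle is a push invariant, and the sum of these parities over the two triangles takes different values on the two transitive configurations, while it is preserved by all prism automorphisms, by pushes, and even by global arc reversal (which one could otherwise hope to exploit via the self-converseness of $\overrightarrow{{\rm Pal}_7}$). So the fourth representative is a genuinely distinct case and must be verified on its own. Fortunately it goes through: for the representative with arcs $v_1v_2$, $v_1v_3$, $v_3v_2$ (source $v_1$, sink $v_2$), take $\bigl(f(v_1),f(v_2),f(v_3)\bigr)=(1,2,5)$; then $1\in N^+(0)$, $2\in N^+(1)$, $5\in N^+(3)$, and $2-1=1$, $5-1=4$, $2-5=4$ all lie in $\{1,2,4\}$, so $f$ is a homomorphism. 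With this one verification added, your proof is complete and matches the paper's, with a shorter case analysis.
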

 
 \begin{proof}
Since $T_3$ is cubic and $H$ is connected, if $H$ contains $T_3$ then $H=T_3$. Therefore, it is enough to show that for any orientation $\overrightarrow{T_3}$ of $T_3$, there exists a pushable homomorphism $f$ from $\overrightarrow{T_3}$ to $\overrightarrow{{\rm Pal}_7}$.
Note that regardless of the orientation $\overrightarrow{T_3}$, it is always possible to push some vertices among $\{u_1,u_2,u_3\}$ so that $\overrightarrow{T_3}$ has the arcs $u_1u_2, u_2u_3, u_3u_1$. Moreover, we can also push some of the vertices among $\{v_1, v_2, v_3\}$ to obtain the arcs $u_1v_1, u_2v_2, u_3v_3$ as well. 

We now define a homomorphism from $\overrightarrow{T_3}$ to $\overrightarrow{{\rm Pal}_7}$. We first set $f(u_1) = 1$, $f(u_2) = 2$ and $f(u_3) = 4$. As shown in Figure~\ref{fig:T3}, whatever the orientation of the triangle induced by  $\{v_1, v_2, v_3\}$ in $\overrightarrow{T_3}$ is, we are always able to choose $f(v_1) \in \{2,3,5\}$, $f(v_2) \in \{3,4,6\}$ and  $f(v_3) \in \{1,5,6\}$ to extend $f$ to an oriented homomorphism from $\overrightarrow{T_3}$ to $\overrightarrow{{\rm Pal}_7}$.\qedhere

\begin{figure}[!t]
\centering
 \begin{tikzpicture}[every node/.style={draw,circle,inner sep =1pt},thick,scale=0.8]
     \node[label=above:{$v_1$}] (1) at (210:1) {3};
     \node[label=above:{$v_2$}] (2) at (-30:1) {6};
     \node[label=right:{$v_3$}] (3) at (90:1) {5};
     \draw[->] (1) to (3);
     \draw[->] (3) to (2);
     \draw[->] (2) to (1);
     \tikzset{xshift=3cm}
     \node[label=above:{$v_1$}] (1) at (210:1) {2};
     \node[label=above:{$v_2$}] (2) at (-30:1) {3};
     \node[label=right:{$v_3$}] (3) at (90:1) {6};
     \draw[->] (1) to (2);
     \draw[->] (2) to (3);
     \draw[->] (1) to (3);
     \tikzset{xshift=3cm}
     \node[label=above:{$v_1$}] (1) at (210:1) {5};
     \node[label=above:{$v_2$}] (2) at (-30:1) {4};
     \node[label=right:{$v_3$}] (3) at (90:1) {6};
     \draw[->] (2) to (1);
     \draw[->] (1) to (3);
     \draw[->] (2) to (3);
     \tikzset{xshift=3cm}
     \node[label=above:{$v_1$}] (1) at (210:1) {3};
     \node[label=above:{$v_2$}] (2) at (-30:1) {4};
     \node[label=right:{$v_3$}] (3) at (90:1) {5};
     \draw[->] (1) to (2);
     \draw[->] (2) to (3);
     \draw[->] (1) to (3);
     \tikzset{yshift=-2.5cm,xshift=-9cm}
     \node[label=above:{$v_1$}] (1) at (210:1) {5};
     \node[label=above:{$v_2$}] (2) at (-30:1) {3};
     \node[label=right:{$v_3$}] (3) at (90:1) {1};
     \draw[->] (3) to (2);
     \draw[->] (2) to (1);
     \draw[->] (3) to (1);
     \tikzset{xshift=3cm}
     \node[label=above:{$v_1$}] (1) at (210:1) {2};
     \node[label=above:{$v_2$}] (2) at (-30:1) {3};
     \node[label=right:{$v_3$}] (3) at (90:1) {1};
     \draw[->] (3) to (1);
     \draw[->] (1) to (2);
     \draw[->] (3) to (2);
     \tikzset{xshift=3cm}
     \node[label=above:{$v_1$}] (1) at (210:1) {3};
     \node[label=above:{$v_2$}] (2) at (-30:1) {6};
     \node[label=right:{$v_3$}] (3) at (90:1) {1};
     \draw[->] (2) to (3);
     \draw[->] (3) to (1);
     \draw[->] (2) to (1);
     \tikzset{xshift=3cm}
     \node[label=above:{$v_1$}] (1) at (210:1) {2};
     \node[label=above:{$v_2$}] (2) at (-30:1) {3};
     \node[label=right:{$v_3$}] (3) at (90:1) {5};
     \draw[->] (3) to (1);
     \draw[->] (1) to (2);
     \draw[->] (2) to (3);
 \end{tikzpicture}
 \caption{The 8 orientations of the inner cycle induced by $\{v_1,v_2,v_3\}$.}
 \label{fig:T3}
 \end{figure}
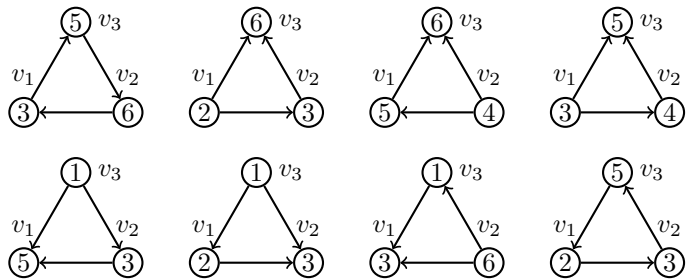
\end{proof}

 \begin{lemma}\label{lem config ii}
 The configuration depicted in Figure~\ref{fig subcubic}(ii) cannot be contained in $\overrightarrow{H}$.
 \end{lemma}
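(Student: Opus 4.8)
The plan is to argue by minimality, exactly in the spirit of Lemmas~\ref{lem config i} and~\ref{lem support i}. Suppose $\overrightarrow{H}$ contains the configuration of Figure~\ref{fig subcubic}(ii), and let $\overrightarrow{H_1}$ be obtained from $\overrightarrow{H}$ by deleting the inner triangle $\{v_1,v_2,v_3\}$. Since $\overrightarrow{H_1}$ is subcubic with strictly fewer vertices, minimality provides a pushable homomorphism $f$ to $\overrightarrow{{\rm Pal}_7}$, and after pushing vertices of $\overrightarrow{H_1}$ I may assume $f$ is an oriented homomorphism; this fixes the three images $f(u_1),f(u_2),f(u_3)$. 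It then suffices to choose images and pushes for $v_1,v_2,v_3$ extending $f$ to all of $\overrightarrow{H}$. The crucial simplification is that, because $\overrightarrow{{\rm Pal}_7}$ is a tournament, the only effect of a pendant edge $u_iv_i$ is to force $f(v_i)$ into one of the two three-element halves $N^+(f(u_i))$ or $N^-(f(u_i))$; flipping the push at $v_i$ merely swaps the admissible half, and any value $f(v_i)\neq f(u_i)$ lies in exactly one of them.

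First I would normalise the inner triangle: by pushing among $v_1,v_2,v_3$ I may assume it is the directed $3$-cycle $v_1\to v_2\to v_3\to v_1$, which simultaneously fixes the orientations of the pendant edges and hence prescribes a target half $N^{\epsilon_i}(f(u_i))$ for each $f(v_i)$. The remaining push of all three inner vertices keeps the cycle intact while swapping every target to its complementary half, and single pushes convert the cycle into the transitive orientations, changing the coupling between the three pendant signs. Thus the extension reduces to the following local statement on $\overrightarrow{{\rm Pal}_7}$: for all $w_1,w_2,w_3$ there is a push-realisable oriented triangle $(f(v_1),f(v_2),f(v_3))$ with $f(v_i)$ lying in the prescribed half-neighbourhood of $w_i$. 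I would settle this by the explicit device already used above: tabulate candidate images in matrices $X_{v_1},X_{v_2},X_{v_3}$ whose entries, read off at a common index, always yield a valid triangle, each triangle arc being checked by the criterion that the difference lies in $\{1,2,4\}$ and each pendant being guaranteed by \eqref{eq first neighborhood} together with Property $P(2,2)$ of Lemma~\ref{P7 P22}.

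The hard part is that the three pendant constraints are mutually independent and must be met by a single globally consistent triangle, for \emph{every} placement of $f(u_1),f(u_2),f(u_3)$, including coincidences among these images; in particular a purely cyclic inner triangle with only the global push is not flexible enough when the three required halves are ``mixed,'' which is precisely why the transitive orientations (single pushes) must also be brought into the matrix search. The genuinely tight case is the fully degenerate one, where $u_1,u_2,u_3$ are pairwise adjacent and therefore $f(u_1),f(u_2),f(u_3)$ are forced to span a triangle of $\overrightarrow{{\rm Pal}_7}$; structurally this is exactly the graph $T_3$ of Figure~\ref{fig support}(i), since a connected cubic $\overrightarrow{H}$ containing $T_3$ must equal $T_3$. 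That case is disposed of in advance by Lemma~\ref{lem support i}, while in every remaining configuration the neighbourhood identities leave enough slack for the matrix search to produce a valid extension. Either way we obtain a pushable homomorphism of $\overrightarrow{H}$ to $\overrightarrow{{\rm Pal}_7}$, contradicting the choice of $\overrightarrow{H}$.
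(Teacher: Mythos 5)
There is a genuine gap, and it sits exactly where the paper's key idea lives. Your reduction deletes $\{v_1,v_2,v_3\}$, invokes minimality, and then must extend a homomorphism in which the three images $f(u_1),f(u_2),f(u_3)$ are \emph{completely arbitrary} elements of $V(\overrightarrow{{\rm Pal}_7})$ --- they may coincide, or realize either orientation of an arc. Everything then rests on the ``local statement'' that every such image triple, for every push class of the configuration, admits an extension; but you never verify this, you only assert that ``the neighbourhood identities leave enough slack'' outside one tight case. Worse, the case you single out as tight is misidentified: when $u_1,u_2,u_3$ are pairwise adjacent, $\overrightarrow{H}$ is the prism $T_3$ and the images are maximally \emph{constrained} (they must span a triangle); the delicate situation is the opposite one, where the $u_i$ are pairwise non-adjacent or coincident, so that nothing prevents, say, $f(u_1)=f(u_2)=f(u_3)$. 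The paper's proof is built precisely to avoid this: it first uses Lemmas~\ref{lem config i} and~\ref{lem support i} to ensure $u_1,u_2$ are distinct and non-adjacent, then \emph{adds the arc $u_1u_2$} to the reduced graph before invoking minimality, so that arc-transitivity (Lemma~\ref{P7 is vt-at}) pins $(f(u_1),f(u_2))=(0,1)$; the explicit matrices then only have to cover that one normalized case, and the third pendant constraint is settled by the counting argument $|S_\alpha|+|N^{\beta}(f(u_3))|=5+3>7$. Your route drops this arc-addition device, so the finite verification you would owe is over all image patterns $(w_1,w_2,w_3)$ including coincidences --- a much larger table that is the entire content of the lemma and is simply absent from your argument. (The local statement may in fact be true --- e.g.\ the pattern $w_1=w_2=w_3$ can be checked to be extendable --- but nothing in your proposal establishes it, and its truth cannot be inferred from the cases the paper tabulates.)

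A secondary, fixable error: you claim that by pushing among $v_1,v_2,v_3$ the inner triangle can always be normalized to the directed cycle $v_1\to v_2\to v_3\to v_1$. The eight orientations of a triangle split into \emph{two} push classes (pushing preserves the parity of the number of forward arcs), so one class normalizes instead to $v_1\to v_3\to v_2\to v_1$; you need an additional relabeling symmetry of the configuration, which then permutes the pendant constraints along with it. This interacts with the case analysis you would have to carry out and cannot be waved away.
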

 
 \begin{proof}
  Assume that $\overrightarrow{H}$ contains the configuration depicted in Figure~\ref{fig subcubic}(ii). Due to Lemmas~\ref{lem config i} and~\ref{lem support i}, we may assume that $u_1$ and $u_2$ are distinct non-adjacent vertices. Moreover, it is possible to push some of the vertices among $\{v_1,v_2\}$ to make sure that $\overrightarrow{H}$ has the arcs $u_1v_1, u_2v_2$. Furthermore, by symmetry, we may assume the arc $v_1v_2$ is present in $\overrightarrow{H}$. 
  
  Let $\overrightarrow{H_1}$ be the oriented graph obtained by adding the arc $u_1u_2$ in $\overrightarrow{H}$. We also denote by $\overrightarrow{H_2}$ the oriented graph obtained from $\overrightarrow{H_1}$ by deleting the vertices $v_1, v_2$ and $v_3$. By minimality, $\overrightarrow{H_2}$ admits a pushable homomorphism $f$ to $\overrightarrow{{\rm Pal}_7}$. Up to replacing $\overrightarrow{H_2}$ (together with $\overrightarrow{H_1}$ and $\overrightarrow{H}$) by a push-equivalent oriented graph, we may assume that $f$ is an oriented homomorphism. However, note that this may cause the arc $u_1u_2$ to be reversed in $\overrightarrow{H_2}$. This occurs if we needed to push $u_1$ or $u_2$ in $\overrightarrow{H_2}$ in order to make $f$ an oriented homomorphism. We again push (if needed) $v_1$ and $v_2$ to obtain the arcs $u_1v_1$ and $u_2v_2$ in $\overrightarrow{H}$. Observe that $u_1u_2$ and $v_1v_2$ are both present in $\overrightarrow{H}$ or both reversed. By symmetry, we may only consider the first case. Finally, up to pushing $v_3$, we assume that the arc $v_3v_1$ is in $\overrightarrow{H}$.
  
Let $\overrightarrow{H_3}$ be the oriented graph  obtained from $\overrightarrow{H}$ by deleting the arc between $u_3$ and $v_3$. Similarly to the proof of Lemma~\ref{lem config i}, we first extend $f$ to an oriented homomorphism from $\overrightarrow{H_3}$ to $\overrightarrow{{\rm Pal}_7}$, with some additional constraint on $f(v_3)$, and then extend $f$ to $\overrightarrow{H}$.
  
As $\overrightarrow{{\rm Pal}_7}$ is arc-transitive, without loss of generality we may assume that $f(u_1) = 0$ and $f(u_2)=1$. Now consider the following matrices:  
$$
X_{v_1} = 
\begin{bmatrix}
1 \\
1 \\
1 \\
2 \\
4 
\end{bmatrix},~
X_{v_2} =  
\begin{bmatrix}
2 \\
3 \\
5 \\
3 \\
5 
\end{bmatrix},~
X^+_{v_3} =  
\begin{bmatrix}
4\\0\\6\\5\\2
\end{bmatrix},~
X^-_{v_3} =  
\begin{bmatrix}
0 \\
6 \\
4 \\
1 \\
3 
\end{bmatrix}.
$$

Let $\alpha,\beta\in\{+,-\}$ such that $v_3$ is an $\alpha$-neighbor of $v_2$ and a $\beta$-neighbor of $u_3$. Let $S_\alpha$ be the set of all integers appearing in at least one entry in $X_{v_3}^\alpha$. Observe that for every $\ell \in S_\alpha$, there exists $j\in\{1,\ldots,5\}$ such that $\ell=X^\alpha_{v_3}(1,j)$. By choosing $f(v_k) = X_{v_k}(1,j)$ for $k \in \{1,2\}$ and $f(v_3)=\ell$, we can extend $f$ to an oriented homomorphism from $\overrightarrow{H_3}$ to $\overrightarrow{{\rm Pal}_7}$ satisfying $f(v_3)=\ell$.

Observe that $|S_\alpha|=5$. Hence, since $|N^{\beta}(f(u_2))| = 3$, we may choose an $\ell \in N^{\beta}(f(u_2)) \cap S_\alpha$. The corresponding extension of $f$ is now an oriented homomorphism from 
$\overrightarrow{H_1}$, to $\overrightarrow{{\rm Pal}_7}$. Since $\overrightarrow{H}$ is a subgraph of $\overrightarrow{H_1}$, we obtain a contradiction. Therefore, $\overrightarrow{H}$ cannot  contain the configuration depicted in Figure~\ref{fig subcubic}(ii). 
\end{proof}

The proof of Lemma~\ref{lem config iii} is similar to Lemma~\ref{lem config ii}. In particular, we first prove an auxiliary lemma in the spirit of Lemma~\ref{lem support i}.

\begin{lemma}\label{lem support ii}
The graph $H$ cannot be the graph $T_4$ depicted in Figure~\ref{fig support}(ii).  
 \end{lemma}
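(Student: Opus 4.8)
The plan is to argue exactly as in Lemma~\ref{lem support i}: since $T_4$ is cubic and $\overrightarrow{H}$ is connected, if $\overrightarrow{H}$ contained $T_4$ it would equal $T_4$, so it suffices to exhibit, for \emph{every} orientation $\overrightarrow{T_4}$ of $T_4$, a pushable homomorphism from $\overrightarrow{T_4}$ to $\overrightarrow{{\rm Pal}_7}$. First I would exploit the push operation to normalize the orientation. The graph $T_4$ consists of an outer $4$-cycle $u_1u_2u_3u_4$, an inner $4$-cycle $v_1v_2v_3v_4$, and the four ``spoke'' edges $u_iv_i$. Up to pushing the $u_i$'s, I would fix the orientation of the outer cycle to be the directed cycle $u_1u_2,u_2u_3,u_3u_4,u_4u_1$; then, since pushing $v_i$ flips only the spoke $u_iv_i$ together with the two inner-cycle edges at $v_i$, I would use these pushes to force all four spokes to point outward, i.e. to have the arcs $u_iv_i$ for each $i$. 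This reduces the problem to controlling only the orientation of the inner cycle.

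Having normalized the outer cycle and the spokes, the only remaining freedom is the orientation of the inner $4$-cycle on $\{v_1,v_2,v_3,v_4\}$; there are $2^4=16$ such orientations. I would fix the images of the outer vertices, say $f(u_1)=0$, $f(u_2)=1$, $f(u_3)=2$, $f(u_4)=4$, which is a valid homomorphic image of the directed outer $4$-cycle since $1-0,2-1,4-2,0-4 \in \{1,2,4\}$ are all squares modulo $7$. Each spoke arc $u_iv_i$ then constrains $f(v_i)$ to lie in $N^+(f(u_i))$, a set of three candidates. So for each of the $16$ inner-cycle orientations I must pick $f(v_i)\in N^+(f(u_i))$ so that every inner arc $v_iv_j$ maps to an arc of $\overrightarrow{{\rm Pal}_7}$. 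The cleanest way to present this is a figure analogous to Figure~\ref{fig:T3}, displaying an explicit valid colouring for each of the inner-cycle orientations; the verification for each is a routine check that every displayed arc-difference lies in $\{1,2,4\}$.

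The main obstacle, and the only place where something could genuinely fail, is that for some inner-cycle orientation the local constraints might be unsatisfiable: each $f(v_i)$ is confined to the three-element set $N^+(f(u_i))$, and simultaneously the four inner arcs must all be respected, so one must confirm a consistent choice exists in \emph{all} $16$ cases rather than just most of them. Here I would lean on the strong neighbourhood properties of $\overrightarrow{{\rm Pal}_7}$ recorded in the Observation, in particular equations~\eqref{eq first neighborhood} and~\eqref{eq second neighborhood}, which guarantee that second and third iterated neighbourhoods cover essentially all of $V(\overrightarrow{{\rm Pal}_7})$; this gives plenty of slack when propagating choices around the $4$-cycle. Since the inner cycle is even, I would also note that the dichromatic structure is friendly: the two ``antipodal'' pairs $\{v_1,v_3\}$ and $\{v_2,v_4\}$ interact only through the cycle edges, so one can choose images for $v_1,v_3$ first and then complete $v_2,v_4$ using $P(2,2)$ (Lemma~\ref{P7 P22}). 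Enumerating the $16$ cases in an explicit table, exactly as was done for $T_3$, then finishes the proof and shows $\overrightarrow{H}$ cannot be $T_4$.
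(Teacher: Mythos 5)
There is a genuine gap, and it lies in your very first normalization step. You claim that ``up to pushing the $u_i$'s'' you can fix the outer cycle to be the directed cycle $u_1u_2,u_2u_3,u_3u_4,u_4u_1$. This is false in general: pushing a vertex of a cycle reverses exactly two edges of that cycle (or none, if the vertex is not on it), so the \emph{parity} of the number of backward arcs along any fixed traversal of a cycle is invariant under the push operation. A directed $4$-cycle has zero backward arcs (even parity), so no orientation of $T_4$ whose outer cycle has an odd number of backward arcs can ever be pushed to your normalized form. This parity obstruction is precisely why the paper's proof splits into two cases: Case~1 treats orientations in which some $4$-cycle of $T_4$ has an odd number of backward arcs (normalized, using the symmetry of $T_4$, to the arcs $u_1u_2, u_1u_4, u_4u_3, u_3u_2$, which is \emph{not} a directed cycle), and Case~2 treats the situation where every $4$-cycle has even parity, in which case the invariance forces the inner cycle into a specific orientation and an explicit small homomorphism finishes the job. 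Your argument, even if completed, covers at most the even-parity orientations, i.e.\ roughly half of the cases.

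Two further problems, more minor but still real. First, your proposed images $f(u_1)=0$, $f(u_2)=1$, $f(u_3)=2$, $f(u_4)=4$ do not give a homomorphism of the directed outer cycle: the arc $u_4u_1$ maps to $(4,0)$ and $0-4\equiv 3 \pmod 7$ is a non-square, so $40$ is not an arc of $\overrightarrow{{\rm Pal}_7}$ (a correct choice would be, e.g., $0,1,2,3$). Second, your completion strategy for the inner cycle is not sound as stated: after choosing $f(v_1)$ and $f(v_3)$, Property $P(2,2)$ guarantees at least two vertices adjacent to both with the prescribed orientations, but it gives no guarantee that any of them lies in the three-element set $N^{+}(f(u_2))$ imposed by the spoke $u_2v_2$ (and similarly for $v_4$). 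So the ``plenty of slack'' claim does not follow from $P(2,2)$ or from the iterated-neighborhood observations; the paper instead verifies the sixteen inner-cycle orientations by an explicit table of colorings compatible with the fixed spoke constraints.
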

 
 \begin{proof}
It is enough to show that for any orientation $\overrightarrow{T_4}$ of the cubic graph $T_4$ depicted in 
Figure~\ref{fig support}(ii) there exists a pushable homomorphism $f$ from $\overrightarrow{T_4}$ to 
$\overrightarrow{{\rm Pal}_7}$. We consider two cases, depending on parity of the number of backward arcs of the 4-cycles of $\overrightarrow{T_4}$. 

\medskip
\noindent \textit{Case 1:} Suppose that $\overrightarrow{T_4}$ contains a 4-cycle with an odd number of backward arcs. By symmetry, assume that this cycle is the outer one. We can now push some vertices among $\{u_1,u_2,u_3,u_4\}$ to make sure that $\overrightarrow{T_4}$ has the arcs $u_1u_2, u_1u_4, u_4u_3$ and $u_3u_2$. Up to pushing some vertices among $\{v_1,v_2,v_3,v_4\}$, we may also assume that $\overrightarrow{T_4}$ contains all the arcs $u_iv_i$ for $i\in\{1,2,3,4\}$. Let $f(u_1) = 0, f(u_2) = 4, f(u_3) = 2$ and $f(u_4) = 1$. As shown in Figure~\ref{fig:T4}, whatever the orientation of the $4$-cycle induced by  $\{v_1, v_2, v_3, v_4\}$ is, we are  always able to choose $f(v_1) \in \{1,2,4\}$, $f(v_2) \in \{1,5,6\}$,  $f(v_3) \in \{3,4,6\}$ and  $f(v_4) \in \{2,3,5\}$ to extend $f$ to a homomorphism from $\overrightarrow{T_4}$ to  $\overrightarrow{{\rm Pal}_7}$.

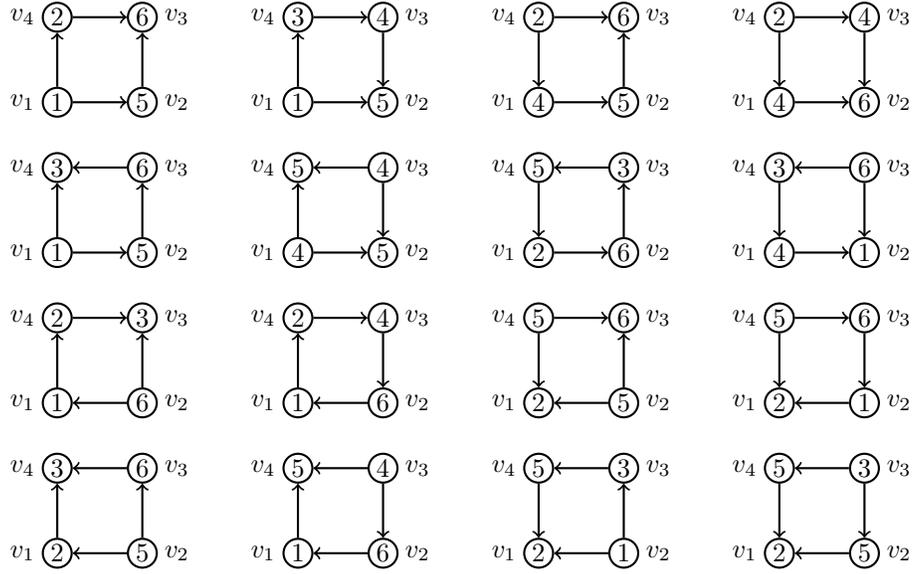
\begin{figure}[!t]
\centering
\begin{tikzpicture}[every node/.style={draw, circle, inner sep=1pt}, thick, scale=0.8]
\node[label=left:{$v_1$}] (1) at (-135:1) {1};
\node[label=right:{$v_2$}] (2) at (-45:1) {5};
\node[label=right:{$v_3$}] (3) at (45:1) {6};
\node[label=left:{$v_4$}] (4) at (135:1) {2};
\draw[->] (1) to (2);
\draw[->] (2) to (3);
\draw[->] (1) to (4);
\draw[->] (4) to (3);
\tikzset{xshift=4cm}
\node[label=left:{$v_1$}] (1) at (-135:1) {1};
\node[label=right:{$v_2$}] (2) at (-45:1) {5};
\node[label=right:{$v_3$}] (3) at (45:1) {4};
\node[label=left:{$v_4$}] (4) at (135:1) {3};
\draw[->] (1) to (2);
\draw[->] (3) to (2);
\draw[->] (1) to (4);
\draw[->] (4) to (3);
\tikzset{xshift=4cm}
\node[label=left:{$v_1$}] (1) at (-135:1) {4};
\node[label=right:{$v_2$}] (2) at (-45:1) {5};
\node[label=right:{$v_3$}] (3) at (45:1) {6};
\node[label=left:{$v_4$}] (4) at (135:1) {2};
\draw[->] (1) to (2);
\draw[->] (2) to (3);
\draw[->] (4) to (1);
\draw[->] (4) to (3);
\tikzset{xshift=4cm}
\node[label=left:{$v_1$}] (1) at (-135:1) {4};
\node[label=right:{$v_2$}] (2) at (-45:1) {6};
\node[label=right:{$v_3$}] (3) at (45:1) {4};
\node[label=left:{$v_4$}] (4) at (135:1) {2};
\draw[->] (1) to (2);
\draw[->] (3) to (2);
\draw[->] (4) to (1);
\draw[->] (4) to (3);
\tikzset{xshift=-12cm,yshift=-2.5cm}
\node[label=left:{$v_1$}] (1) at (-135:1) {1};
\node[label=right:{$v_2$}] (2) at (-45:1) {5};
\node[label=right:{$v_3$}] (3) at (45:1) {6};
\node[label=left:{$v_4$}] (4) at (135:1) {3};
\draw[->] (1) to (2);
\draw[->] (2) to (3);
\draw[->] (1) to (4);
\draw[<-] (4) to (3);
\tikzset{xshift=4cm}
\node[label=left:{$v_1$}] (1) at (-135:1) {4};
\node[label=right:{$v_2$}] (2) at (-45:1) {5};
\node[label=right:{$v_3$}] (3) at (45:1) {4};
\node[label=left:{$v_4$}] (4) at (135:1) {5};
\draw[->] (1) to (2);
\draw[->] (3) to (2);
\draw[->] (1) to (4);
\draw[<-] (4) to (3);
\tikzset{xshift=4cm}
\node[label=left:{$v_1$}] (1) at (-135:1) {2};
\node[label=right:{$v_2$}] (2) at (-45:1) {6};
\node[label=right:{$v_3$}] (3) at (45:1) {3};
\node[label=left:{$v_4$}] (4) at (135:1) {5};
\draw[->] (1) to (2);
\draw[->] (2) to (3);
\draw[->] (4) to (1);
\draw[<-] (4) to (3);
\tikzset{xshift=4cm}
\node[label=left:{$v_1$}] (1) at (-135:1) {4};
\node[label=right:{$v_2$}] (2) at (-45:1) {1};
\node[label=right:{$v_3$}] (3) at (45:1) {6};
\node[label=left:{$v_4$}] (4) at (135:1) {3};
\draw[->] (1) to (2);
\draw[->] (3) to (2);
\draw[->] (4) to (1);
\draw[<-] (4) to (3);
\tikzset{xshift=-12cm,yshift=-2.5cm}
\node[label=left:{$v_1$}] (1) at (-135:1) {1};
\node[label=right:{$v_2$}] (2) at (-45:1) {6};
\node[label=right:{$v_3$}] (3) at (45:1) {3};
\node[label=left:{$v_4$}] (4) at (135:1) {2};
\draw[<-] (1) to (2);
\draw[->] (2) to (3);
\draw[->] (1) to (4);
\draw[->] (4) to (3);
\tikzset{xshift=4cm}
\node[label=left:{$v_1$}] (1) at (-135:1) {1};
\node[label=right:{$v_2$}] (2) at (-45:1) {6};
\node[label=right:{$v_3$}] (3) at (45:1) {4};
\node[label=left:{$v_4$}] (4) at (135:1) {2};
\draw[<-] (1) to (2);
\draw[->] (3) to (2);
\draw[->] (1) to (4);
\draw[->] (4) to (3);
\tikzset{xshift=4cm}
\node[label=left:{$v_1$}] (1) at (-135:1) {2};
\node[label=right:{$v_2$}] (2) at (-45:1) {5};
\node[label=right:{$v_3$}] (3) at (45:1) {6};
\node[label=left:{$v_4$}] (4) at (135:1) {5};
\draw[<-] (1) to (2);
\draw[->] (2) to (3);
\draw[->] (4) to (1);
\draw[->] (4) to (3);
\tikzset{xshift=4cm}
\node[label=left:{$v_1$}] (1) at (-135:1) {2};
\node[label=right:{$v_2$}] (2) at (-45:1) {1};
\node[label=right:{$v_3$}] (3) at (45:1) {6};
\node[label=left:{$v_4$}] (4) at (135:1) {5};
\draw[<-] (1) to (2);
\draw[->] (3) to (2);
\draw[->] (4) to (1);
\draw[->] (4) to (3);
\tikzset{xshift=-12cm,yshift=-2.5cm}
\node[label=left:{$v_1$}] (1) at (-135:1) {2};
\node[label=right:{$v_2$}] (2) at (-45:1) {5};
\node[label=right:{$v_3$}] (3) at (45:1) {6};
\node[label=left:{$v_4$}] (4) at (135:1) {3};
\draw[<-] (1) to (2);
\draw[->] (2) to (3);
\draw[->] (1) to (4);
\draw[<-] (4) to (3);
\tikzset{xshift=4cm}
\node[label=left:{$v_1$}] (1) at (-135:1) {1};
\node[label=right:{$v_2$}] (2) at (-45:1) {6};
\node[label=right:{$v_3$}] (3) at (45:1) {4};
\node[label=left:{$v_4$}] (4) at (135:1) {5};
\draw[<-] (1) to (2);
\draw[->] (3) to (2);
\draw[->] (1) to (4);
\draw[<-] (4) to (3);
\tikzset{xshift=4cm}
\node[label=left:{$v_1$}] (1) at (-135:1) {2};
\node[label=right:{$v_2$}] (2) at (-45:1) {1};
\node[label=right:{$v_3$}] (3) at (45:1) {3};
\node[label=left:{$v_4$}] (4) at (135:1) {5};
\draw[<-] (1) to (2);
\draw[->] (2) to (3);
\draw[->] (4) to (1);
\draw[<-] (4) to (3);
\tikzset{xshift=4cm}
\node[label=left:{$v_1$}] (1) at (-135:1) {2};
\node[label=right:{$v_2$}] (2) at (-45:1) {5};
\node[label=right:{$v_3$}] (3) at (45:1) {3};
\node[label=left:{$v_4$}] (4) at (135:1) {5};
\draw[<-] (1) to (2);
\draw[->] (3) to (2);
\draw[->] (4) to (1);
\draw[<-] (4) to (3);
\end{tikzpicture}
\caption{The 16 possible orientations of the inner cycle induced by $\{v_1,v_2,v_3,v_4\}$.}
\label{fig:T4}
\end{figure}

\medskip
\noindent\textit{Case 2:} Suppose that every 4-cycle of $\overrightarrow{T_4}$ has an even number of backward arcs. Up to pushing some vertices among $\{u_1,u_2,u_3,u_4\}$, we may assume that $\overrightarrow{T_4}$ has the arcs $u_1u_2, u_1u_4, u_3u_2$ and $u_3u_4$. We also push some vertices among $\{v_1,v_2,v_3,v_4\}$ such that $\overrightarrow{T_4}$ contains all the arcs $u_iv_i$ for $i\in\{1,2,3,4\}$. By hypothesis, observe that $\overrightarrow{T_4}$ must contain the arcs $v_1v_2,v_1v_4, v_3v_2$ and $v_3v_4$. We can then define an oriented homomorphism from $\overrightarrow{T_4}$ to  $\overrightarrow{{\rm Pal}_7}$ by setting $f(u_1)=f(u_3)=0$, $f(v_1)=f(v_3)=f(u_2)=f(u_4)=1$ and $f(v_2)=f(v_4)=2$.
 \end{proof} 
 
 Using this auxiliary lemma, we may prove that $\overrightarrow{H}$ does not contain the next configuration.
 \begin{lemma}\label{lem config iii}
 The configuration depicted in Figure~\ref{fig subcubic}(iii) cannot be contained in $\overrightarrow{H}$.
 \end{lemma}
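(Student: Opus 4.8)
The plan is to mimic closely the proof of Lemma~\ref{lem config ii}, replacing the inner triangle by the inner $4$-cycle $v_1v_2v_3v_4$ and working with four pendant vertices $u_1,u_2,u_3,u_4$ instead of three. Suppose for contradiction that $\overrightarrow{H}$ contains the configuration of Figure~\ref{fig subcubic}(iii). Since each $v_i$ is a black-square vertex, its whole neighbourhood lies inside the configuration, so the $v_i$ induce a $4$-cycle with $v_i$ pendant-attached to the (possibly external) vertex $u_i$. As a first step I would dispose of the degenerate situations: if $H$ coincides with $T_4$ then we conclude by Lemma~\ref{lem support ii}, and, exactly as in the opening of Lemma~\ref{lem config ii}, Lemmas~\ref{lem config i} and~\ref{lem support i} let us assume that $u_1,u_2,u_3,u_4$ are pairwise distinct and pairwise non-adjacent.

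Next I would produce a strictly smaller subcubic oriented graph on which to invoke minimality. After deleting $v_1,v_2,v_3,v_4$, each $u_i$ has degree at most~$2$, so, as in the passage to $\overrightarrow{H_1}$ in Lemma~\ref{lem config ii}, we may add a few arcs among the $u_i$ (keeping the graph subcubic) to force their images into general position; call the resulting graph $\overrightarrow{H'}$. By minimality $\overrightarrow{H'}$ admits a pushable homomorphism $f$ to $\overrightarrow{{\rm Pal}_7}$, and up to replacing $\overrightarrow{H}$ by a push-equivalent oriented graph we may assume $f$ is an oriented homomorphism. Using the push operation on the $v_i$ to normalise the pendant arcs $u_iv_i$, and exploiting the symmetry of the $4$-cycle together with arc-transitivity of $\overrightarrow{{\rm Pal}_7}$ (Lemma~\ref{P7 is vt-at}), I would fix convenient images of two adjacent pendants, say $f(u_1)=0$ and $f(u_2)=1$, cutting down the number of cycle orientations to be treated.

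The heart of the argument, as in Lemma~\ref{lem config ii}, is then to extend $f$ across the inner cycle while keeping one vertex---say $v_3$, which carries the pendant $u_3$---free until the very end. Deleting the arc between $u_3$ and $v_3$ yields $\overrightarrow{H_3}$; I would exhibit column matrices $X_{v_1},X_{v_2},X_{v_4}$ together with $X_{v_3}^{+}$ and $X_{v_3}^{-}$ whose rows list mutually consistent images of $v_1,v_2,v_4$ (respecting the pendant arcs at $u_1,u_2,u_4$ and the three cycle arcs already present) and, for each such row, the admissible value of $v_3$ according to the sign $\alpha$ with $v_3\in N^{\alpha}(v_4)$. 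Writing $S_\alpha$ for the set of entries of $X_{v_3}^{\alpha}$, the matrices are designed so that $|S_\alpha|\geq 5$; then, letting $\beta$ be such that $v_3\in N^{\beta}(u_3)$, the bound $|N^{\beta}(f(u_3))|=3$ forces $S_\alpha\cap N^{\beta}(f(u_3))\neq\emptyset$ since $5+3>7$. Choosing $\ell$ in this intersection and reading off the corresponding row extends $f$ to an oriented homomorphism of all of $\overrightarrow{H}$ to $\overrightarrow{{\rm Pal}_7}$, the desired contradiction.

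The main obstacle I anticipate is bookkeeping of a genuinely delicate kind: the inner $4$-cycle admits $16$ orientations (cf.\ Figure~\ref{fig:T4}) and each pendant arc can point either way, so one must check that a single family of matrices---after the normalisation above---covers every remaining case while always yielding $|S_\alpha|\geq 5$. This is precisely where Property $P(2,2)$ of $\overrightarrow{{\rm Pal}_7}$ (Lemma~\ref{P7 P22}) and equation~\eqref{eq first neighborhood} are used repeatedly, to guarantee that for each fixed assignment of $v_1,v_2,v_4$ the common neighbourhood constraining $v_3$ is non-empty, and that as the assignment ranges over the rows the achievable values of $v_3$ exhaust at least five residues of $\mathbb{Z}/7\mathbb{Z}$. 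Verifying these entries is the analogue of the explicit matrices displayed in Lemma~\ref{lem config ii}, only larger, and constitutes the bulk of the remaining work.
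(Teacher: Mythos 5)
Your proposal founders on precisely the two points where this configuration is genuinely harder than configuration (ii). First, you cannot assume that $u_1,u_2,u_3,u_4$ are pairwise non-adjacent. The lemmas you cite (Lemmas~\ref{lem config i} and~\ref{lem support i}) concern a diamond and the prism $T_3$; if, say, $u_3$ and $u_4$ are adjacent, the graph contains neither a triangle nor a diamond --- what it contains is \emph{another} copy of configuration (iii) (the $4$-cycle $u_3v_3v_4u_4$ with four pendant edges), so excluding this case by reducibility is circular. The paper only assumes $u_1,u_2$ distinct and non-adjacent (via Lemmas~\ref{lem config ii} and~\ref{lem support ii}), explicitly keeps open the possibility that $u_3$ and $u_4$ are adjacent, and records a renaming hypothesis about which other pairs are then adjacent. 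Moreover, besides $u_1u_2$ it adds an arc between $u_3$ and $u_4$ whose direction is chosen so that the cycle $u_3v_3v_4u_4$ has even push-parity; this is exactly the device that later certifies, in the problematic cases, that $u_3u_4$ was an edge of $\overrightarrow{H}$ itself, so that cubicity pins down the whole neighborhood of $u_3$ or $u_4$ and $f$ can be locally modified there.

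Second, your extension scheme leaves only $v_3$ free and relies on the count $5+3>7$. But there are \emph{two} unconstrained pendant images: arc-transitivity normalizes only $f(u_1)=0$ and $f(u_2)=1$, while $f(u_4)$ and the sign $\beta_4$ of the arc $u_4v_4$ range over all $14$ possibilities, so no fixed family of matrices for $v_1,v_2,v_4$ can ``respect the pendant arc at $u_4$''. The correct object is the paper's set $S^{\alpha}$, the values achievable at $v_3$ among those extensions whose $v_4$-value already lies in $N^{\beta_4}(f(u_4))$, and the paper also deletes \emph{both} arcs $u_3v_3$ and $u_4v_4$ and allows pushing $v_1,v_2$ during the extension (hence its $6\times 6$ block matrices) to make $S^{\alpha}$ as large as possible. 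Even so, your hope that $|S_\alpha|\geq 5$ always holds is provably false: the paper computes $S^-=\{0,3,5,6\}$ when $(f(u_4),\beta_4)=(1,-)$, and five choices of $(f(u_4),\beta_4)$ with $|S^+|\leq 4$, including one where $N^{\beta_3}(f(u_3))\cap S^{\alpha}$ can be empty. These bad cases are not an artifact of poorly chosen matrices (the authors mention a computer search); they are resolved only by the structural argument above, in which the forced values of $f(u_3),f(u_4)$ imply the arc $u_3u_4$ with odd cycle parity, hence adjacency of $u_3,u_4$ in $\overrightarrow{H}$, and then the renaming hypothesis and cubicity allow $f(u_3)$ or $f(u_4)$ to be changed. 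Without these two ingredients your plan cannot be completed.
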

 
  \begin{proof}
  Assume that $\overrightarrow{H}$ contains the configuration depicted in Figure~\ref{fig subcubic}(iii). We follow the same approach as in Lemma~\ref{lem config ii}: we first use Lemma~\ref{lem support ii} and symmetry to show we can add the are $u_1u_2$. Then we remove some vertices and use minimality to obtain a pushable homomorphism, that we extend in two steps to a pushable homomorphism from $\overrightarrow{H}$ to $\overrightarrow{{\rm Pal}_7}$.

  We first assume that $u_1$ and $u_2$ are distinct non-adjacent vertices due to Lemmas~\ref{lem config ii} and~\ref{lem support ii}. Up to renaming the vertices of the graph, we assume that if $u_3, u_4$ are adjacent, then either $u_1, u_4$ are adjacent or $u_2, u_3$ are adjacent. Moreover, up to pushing $v_1$ or $v_2$, we may assume that $\overrightarrow{H}$ has the arcs $v_1u_1, v_2u_2$. Furthermore, by symmetry, we may assume the arc $v_2v_1$ is present in $\overrightarrow{H}$. 

  Let $\overrightarrow{H_1}$ be the oriented graph obtained from $\overrightarrow{H}$ by adding the arc $u_1u_2$ and an arc between $u_3$ and $u_4$ (if such is not already present) in such a way that the oriented cycle induced by the vertices $\{u_3, v_3, v_4, u_4\}$ has an even number of forward and backward arcs.  

  After that, let $\overrightarrow{H_2}$ be the oriented graph obtained from $\overrightarrow{H_1}$ by deleting the set of vertices $\{v_1, v_2, v_3, v_4\}$, and $\overrightarrow{H_3}$ be the oriented graph obtained from $\overrightarrow{H}$ by deleting the arc between $u_3$ and $v_3$, and the arc between $u_4$ and $v_4$.
  
  By minimality, $\overrightarrow{H_2}$ admits a pushable homomorphism $f$ to $\overrightarrow{{\rm Pal}_7}$. Again, up to replacing $\overrightarrow{H_2}$ (together with $\overrightarrow{H_1}$ and $\overrightarrow{H}$) by a push-equivalent oriented graph, we may assume that $f$ is an oriented homomorphism.  Observe that the arc $u_1u_2$ may now be reversed. We push $v_1$ or $v_2$ if needed to make sure that $\overrightarrow{H_2}$ contains $v_1u_1$ and $v_2u_2$. Now either $\overrightarrow{H_2}$ contains both $u_1u_2$ and $v_2v_1$ or both $u_2u_1$ and $v_1v_2$. By symmetry, we consider only the first case. Moreover, since $\overrightarrow{{\rm Pal}_7}$ is arc-transitive, we may assume that $f(u_1)=0$ and $f(u_2)=1$.
  
  Up to pushing $v_3$ or $v_4$ if needed, we may assume that $\overrightarrow{H_1}$ contains the arcs $v_1v_4$ and $v_2v_3$. Now consider the following matrices:  

 $$
 X_{v_1} = 
\left[
\begin{array}{c c c|c c c}
3 & 5 & 6 & 3 & 5 & 6 \\
3 & 5 & 6 & 3 & 5 & 6 \\
3 & 5 & 6 & 3 & 5 & 6 \\
\hline
2 & 2 & 4 & 2 & 4 & 4 \\
2 & 2 & 4 & 2 & 4 & 4 \\
2 & 2 & 4 & 2 & 4 & 4 
\end{array}
\right],~
 X_{v_2} = 
\left[
\begin{array}{c c c |c c c}
6 & 4 & 4 & 5 & 2 & 3 \\
6 & 4 & 4 & 5 & 2 & 3 \\
6 & 4 & 4 & 5 & 2 & 3 \\
\hline
4 & 6 & 6 & 5 & 2 & 3 \\
4 & 6 & 6 & 5 & 2 & 3 \\
4 & 6 & 6 & 5 & 2 & 3 
\end{array}
\right],~
X_{v_4} =  
\left[
\begin{array}{c c c |c c c}
4 & 6 & 0 & 4 & 6 & 0 \\
5 & 0 & 1 & 5 & 0 & 1 \\
0 & 2 & 3 & 0 & 2 & 3 \\
\hline
5 & 5 & 0 & 5 & 0 & 0 \\
0 & 0 & 2 & 0 & 2 & 2 \\
1 & 1 & 3 & 1 & 3 & 3 \\
\end{array}
\right],
$$

  $$
 X^+_{v_3} = 
\left[
\begin{array}{c c c |c c c}
1  & 1 & 1 & 0 & 0 & 2 \\
0 & 1 & 5 & \emptyset & 1 & 2 \\
1 & 6 & 5 & 4 & \emptyset & \emptyset \\
\hline
6 & 0 & 1 & \emptyset & 1 & 2 \\
1 & 1 & 3 & 4 & \emptyset & 6 \\
5 & 3 & 0 & 3 & 5 & \emptyset 
\end{array}
\right],~
 X^-_{v_3} = 
\left[
\begin{array}{c c c |c c c}
0  & 5 & 6 & 3 & 5 & 6 \\
1 & 5 & 6 & 1 & 5 & 6 \\
3 & 5 & 6 & 3 & 5 & 6 \\
\hline
1 & 3 & \emptyset & 4 & 5 & 6 \\
5 & 3 & 1 & 3 & 0 & 1 \\
6 & 0 & 1 & 4 & 1 & 2
\end{array}
\right].
$$



 Let $\overrightarrow{H_3}$ be the oriented graph obtained by deleting the arc between $u_3$ and $v_3$ and the arc between $u_4$ and $v_4$ in $\overrightarrow{H_1}$. For every $\ell\in\{0,\ldots,6\}$, we first extend $f$ to a pushable homomorphism from  $\overrightarrow{H_3}$ to $\overrightarrow{{\rm Pal}_7}$ such that $f(v_3)=\ell$. We then apply this construction with the right value of $\ell$ to obtain a pushable homomorphism from $\overrightarrow{H_1}$ to $\overrightarrow{{\rm Pal}_7}$.

Let $\alpha\in\{+,-\}$ such that $v_3$ is an $\alpha$-neighbor of $v_4$ and let $\ell\in\{0,\ldots,6\}$.
Observe that there exists $i,j\in\{1,\ldots,6\}$ such that $\ell=X_{v_3}^\alpha(i,j)$. We push the vertex $v_1$ (resp. $v_2$) in $\overrightarrow{H_3}$ if $i>3$ (resp. $j>3$). Now, if we choose $f(v_k) = X_{v_k}(i,j)$ for all $k \in \{1,2,4\}$ and any $f(v_3)=\ell$, then $f$ is a pushable homomorphism from $\overrightarrow{H_3}$ to $\overrightarrow{{\rm Pal}_7}$.  

It now remains to choose the value of $\ell$ to apply this construction and extend it to $\overrightarrow{H_1}$. Let $\beta_3,\beta_4\in\{+,-\}$ such that $v_3$ is a $\beta_3$-neighbor of $u_3$, and $v_4$ is a $\beta_4$-neighbor of $u_4$ in $\overrightarrow{H_1}$. Consider the set of integers
$$S^{\alpha}=\left(\bigcup_{\{(i,j) : X_{v_4}(i,j) \in N^{\beta_4}(f(u_4))\}}  \{X^{\alpha}_{v_3}(i,j)\}\right)\setminus\{\emptyset\}.$$
Observe that $S^{\alpha}$ depends on the value of $f(u_4)$ and $\beta_4$. 

Note that for each $\ell \in S^{\alpha}$ there exists an extension of $f$ to a pushable homomorphism from 
$\overrightarrow{H_3}$ to $\overrightarrow{{\rm Pal}_7}$ such that $f(v_4) \in N^{\beta_4}(f(u_4))$ and $f(v_3) = \ell $. In particular, if $\ell$ also lies in $N^{\beta_3}(f(u_3))$, then $f$ is also a pushable homomorphism from $\overrightarrow{H_1}$ to $\overrightarrow{{\rm Pal}_7}$. Therefore, we can conclude as soon as $ N^{\beta_3}(f(u_3))\cap S^\alpha\neq\emptyset$.

This is already the case if $|S^{\alpha}| \geq 5$. Therefore, we may assume that $|S^{\alpha}| < 5$, which may happen for some values of $f(u_4)$ and $\beta_4$. To handle those instances, we split the rest of the proof into two cases. In each of them, we modify $f$ so that $ N^{\beta_3}(f(u_3))\cap S^\alpha\neq\emptyset$, afterwards, which allows to extend $f$ to a pushable homomorphism from $\overrightarrow{H_1}$ to $\overrightarrow{{\rm Pal}_7}$. Since $\overrightarrow{H}$ is a subgraph of $\overrightarrow{H_1}$, this leads to a contradiction. Therefore, $\overrightarrow{H}$ cannot  contain the configuration depicted in 
Figure~\ref{fig subcubic}(iii).

\medskip

\noindent \textit{Case 1:} Suppose that $v_3$ is a $-$-neighbor of $v_4$, that is, 
$\alpha=-$ and the arc $v_3v_4$ is present. In this case, $|S^-| < 5$ only if $f(u_4) = 1$ and $\beta_4 = -$, and we have $S^- = \{0,3,5,6\}$. The only way of having $N^{\beta_3}(f(u_3)) \cap S^- = \emptyset$. 
happens when $N^{\beta_3}(f(u_3)) = \{1,2,4\}$, which implies $f(u_3) = 0$ and $\beta_3 = +$. Since $f(u_3)=0$ and $f(u_4)=1$, $\overrightarrow{H_1}$ contains the arc $u_3u_4$. This means that the oriented cycle induced by $\{u_3, v_3, v_4, u_4\}$ has an odd number of backward and forward arcs. Since this property is invariant with respect to the push operation, this was also valid when $\overrightarrow{H_1}$ was constructed. Due to this construction, we obtain that $u_3$ and $u_4$ were adjacent in $\overrightarrow{H}$ itself. Moreover, by hypothesis, this implies that either $u_1$ and $u_4$ are adjacent, or $u_2$ and $u_3$ are adjacent. 

\begin{itemize}
	\item If $u_1$ and $u_4$ are adjacent, then we must have the arc $u_1u_4$, since $f(u_1)=0$ and $f(u_4)=1$. In this case, since $\overrightarrow{H}$ is cubic, the only neighbors of $u_4$ are $u_1,u_3$ and $v_4$, so we can modify $f$ by setting $f(u_4)= 2$, so that we now have $N^{\beta_3}(f(u_3)) \cap S^-\neq \emptyset$.

	\item If $u_2$ and $u_3$ are adjacent, then we must have the arc $u_3u_2$, since $f(u_3)=0$ and $f(u_2)=1$. Similarly, we modify $f$ by setting $f(u_3)= 4$, so that we now have $N^{\beta_3}(f(u_3)) \cap S^- \neq \emptyset$.
\end{itemize}

\medskip
\noindent \textit{Case 2:} Suppose that  $v_3$ is a $+$-neighbor of $v_4$, that is, $\alpha=+$ and the arc $v_4v_3$ is present. We have $|S^+| < 5$ only when $(f(u_4),\beta_4) = (1,+), (2,+), (0,-), (1, -)$ or $(6,-)$. In these cases, we respectively have $S^+ = \{0,3,5,6\}, \{0,1,5\}, \{0,1,5,6\}, \{0,1,2,4\}$ or $\{0,1,3,6\}$.

For $S^+ = \{0,1,5,6\}$, there exist no $f(u_3)$ and $\beta_3$ satisfying 
$N^{\beta_3}(f(u_3)) \cap S^+ = \emptyset$. Thus it is not possible to have $(f(u_4),\beta_4) = (0,-)$. 
Also note that in order to satisfy $N^{\beta_3}(f(u_3)) \cap S^+ = \emptyset$, for each value of $(f(u_4),\beta_4) = (1,+), (2,+), (1,-)$ or $(6,-)$, we must have $(f(u_3), \beta_3) = (0,+), (2,+), (0,-)$ or $(6,-)$. Since $u_3$ and $u_4$ are adjacent in $\overrightarrow{H_1}$, it is not possible to have $f(u_3)=f(u_4)$. Therefore, there exists $\gamma\in\{+,-\}$ such that $(f(u_4),\beta_4) = (1,\gamma)$, and thus $(f(u_3), \beta_3) = (0,\gamma)$. In particular, $\overrightarrow{H_1}$ contains the arc $u_3u_4$. This means that the oriented cycle induced by $\{u_3, v_3, v_4, u_4\}$ has an odd number of backward and forward arcs. Since this property is invariant with respect to the push operation, this was again also the case when $\overrightarrow{H_1}$ was constructed. This implies that $u_3$ and $u_4$ are adjacent in $\overrightarrow{H}$ itself. By hypothesis, either $u_1$ and $u_4$ are adjacent, or $u_2$ and $u_3$ are adjacent. 

\begin{itemize}
\item If $u_1$ and $u_4$ are adjacent, then we must have the arc $u_1u_4$, since $f(u_1)=0$ and $f(u_4)=1$. In this case, we modify $f$ by setting $f(u_4)= 2$, so that we now have $N^{\beta_4}(f(u_4)) \cap S^+ \neq \emptyset$.

	\item If $u_2$ and $u_3$ are adjacent, then we must have the arc $u_3u_2$, since $f(u_3)=0$ and $f(u_2)=1$. In this case, we modify $f$ by setting $f(u_3)= 4$, so that now we have $N^{\beta_3}(f(u_3)) \cap S^+ \neq \emptyset$.\qedhere
\end{itemize}
 \end{proof}

 \begin{lemma}\label{lem config iv}
 The configuration depicted in Figure~\ref{fig subcubic}(iv) cannot be contained in $\overrightarrow{H}$.
 \end{lemma}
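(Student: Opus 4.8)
The plan is to follow the scheme of Lemmas~\ref{lem config ii} and~\ref{lem config iii}. First I would record the internal structure of the configuration in Figure~\ref{fig subcubic}(iv): the six black-square vertices $a_{11}, a_{12}, x_1, x_2, x_3, b_0$ induce a tree consisting of two adjacent centers $b_0$ and $x_1$, where $b_0$ carries the two extra leaves $x_2, x_3$ and $x_1$ carries the two extra leaves $a_{11}, a_{12}$; each of the four leaves $x_2, x_3, a_{11}, a_{12}$ has exactly two further neighbors, all of which are white vertices and hence will already be assigned images, while $b_0$ and $x_1$ have no neighbor outside the tree. As in the earlier lemmas, I would first invoke Lemmas~\ref{lem config i}--\ref{lem config iii} together with the support Lemmas~\ref{lem support i} and~\ref{lem support ii} and symmetry to discard the degenerate cases in which some of the white vertices coincide or are adjacent in a way that would create a smaller reducible configuration or let us add a stabilising arc.

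Next I would perform the reduction: delete the six black-square vertices from $\overrightarrow{H}$ (adding arcs among the white vertices when needed, exactly as in Lemmas~\ref{lem config ii} and~\ref{lem config iii}, to keep the minimality hypothesis applicable), apply minimality to the resulting smaller oriented graph to obtain a pushable homomorphism, and push vertices so that it becomes an oriented homomorphism $f$ to $\overrightarrow{{\rm Pal}_7}$ whose values on the white vertices are now fixed. Property $P(2,2)$ of $\overrightarrow{{\rm Pal}_7}$ (Lemma~\ref{P7 P22}) then yields four sets of admissible images, each of size at least~$2$: a set $S_2$ (resp.\ $S_3$) of candidate colours for $x_2$ (resp.\ $x_3$) compatible, after a forced push, with its two white neighbours, and similarly sets $T_1, T_2$ for $a_{11}, a_{12}$.

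The extension step is where I expect the real work. By the symmetry exchanging $(b_0; x_2, x_3; S_2, S_3)$ with $(x_1; a_{11}, a_{12}; T_1, T_2)$, it suffices to colour the single arc $b_0 x_1$ so that $f(b_0)$ can be completed along both of its arms and $f(x_1)$ along both of its. For a leaf constraint $S$ of size at least~$2$, the set of central colours from which that arm can be completed is a union of two out/in-neighbourhoods of elements of $S$; using the structure of $\overrightarrow{{\rm Pal}_7}$ (in particular that distinct vertices have distinct $+$-neighbourhoods, together with the identities~\eqref{eq first neighborhood}) one checks this ``good set'' has size at least~$4$. Hence the two good sets attached to $b_0$ intersect, as do the two attached to $x_1$ (two subsets of size $\ge 4$ of a $7$-set always meet), giving candidate colours $r$ for $b_0$ and $s$ for $x_1$. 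I would then verify, via explicit image matrices and a short case analysis on the finitely many configurations in which a good set is as small as~$4$, that $r$ and $s$ can be chosen so that the arc $b_0 x_1$ is realisable, thereby extending $f$ to all of $\overrightarrow{H}$ and contradicting its existence.

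The main obstacle is precisely this last coordination. The pushes of the four leaves $x_2, x_3, a_{11}, a_{12}$ are forced by which part of $S_2, S_3, T_1, T_2$ their colours lie in, whereas the pushes of the two centers $b_0$ and $x_1$ are free but \emph{shared}: a single push of $b_0$ must serve both arms $x_2$ and $x_3$, and once it (and the push of $x_1$) is fixed to satisfy the arms, the orientation of the central arc $b_0 x_1$ is determined rather than free. Thus one cannot independently satisfy the two arm-constraints and the central arc, and the worst cases---where the good sets meet in a single vertex and the forced pushes align adversarially---must be dispatched by hand, which is why, as in Lemma~\ref{lem config iii}, the argument culminates in explicit matrices together with a case distinction on the values of $f(b_0), f(x_1)$ and the relevant signs.
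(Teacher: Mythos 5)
Your decomposition is genuinely different from the paper's, and its first half is sound: the double-star description of the configuration, the deletion of all six black vertices, the fact that each leaf's set of admissible colours forces its push, and the observation that each ``good set'' for a centre is a union of two \emph{distinct} $3$-subsets of $V(\overrightarrow{{\rm Pal}_7})$, hence has size at least $4$, so that the two good sets at $b_0$ (and likewise at $x_1$) intersect. But there is a genuine gap exactly where you flag one, and it is not merely an omitted computation: the bounds you establish are provably insufficient to force the conclusion. In the worst case each of the two intersections is a singleton in each push state; writing $r_u,r_p$ (resp.\ $s_u,s_p$) for the forced colours of $b_0$ (resp.\ $x_1$) in its two push states, with the arc $b_0x_1$ present when neither is pushed, failure of \emph{all four} push states amounts to the four differences $r_u-s_u$, $s_u-r_p$, $s_p-r_u$, $r_p-s_p$ all lying in $\{0,1,2,4\}$; since these sum to $0 \pmod 7$ and $0+1+2+4\equiv 0$, failing patterns exist --- for instance $r_u=1$, $s_u=0$, $r_p=s_p=5$ defeats every combination of pushes. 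So nothing you have proved rules out total failure; whether such adversarial singletons can actually arise from $\overrightarrow{{\rm Pal}_7}$-colourings of the white vertices is precisely the question, and deferring it to ``explicit image matrices and a short case analysis'' leaves the decisive step undone. There is also a smaller flaw: Property $P(2,2)$ gives a leaf at least $2$ admissible colours only when its two white neighbours receive \emph{distinct} colours; if they receive equal colours and the leaf's two arcs to them are anti-parallel, the admissible set is empty. This must be prevented (e.g.\ by adding an arc inside each white pair before invoking minimality); you mention adding arcs, but for a different stated purpose, and without this your ``size at least $2$'' claim fails.

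For contrast, the paper avoids the two-centre coordination problem altogether by deleting \emph{less}: it removes only $b_0$, so minimality colours everything else, including $x_1,x_2,x_3$ and $a_{11},a_{12}$, and only one vertex ever needs a fresh colour. Each arm $x_i$ is classified by how many values it forbids for $f(b_0)$ (at most $3$, $2$ or $1$ according to its ``type''), $b_0$ is pushed so that at most one arm is of the worst type, and the single bad pattern $3+2+2=7$ is repaired by re-pushing and re-colouring the interior vertices $a_{11},a_{12}$ --- which is possible precisely because, unlike $a_{21},a_{22},a_{31},a_{32}$, they are black. If you want to keep your decomposition, you must actually carry out the central-arc analysis (tracking the four push states and all shapes of the four admissible sets, so it will not be short), or find a structural property of $\overrightarrow{{\rm Pal}_7}$ stronger than ``distinct neighbourhoods are distinct sets'' that forces an arc-compatible pair of centre colours.
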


 \begin{proof}
 Let $X$ be a graph on four vertices having a degree-$3$ vertex $x$ 
 with neighbors $a_1, a_2$ and  $b$. 
 Let $h: V(X) \rightarrow V(\overrightarrow{{\rm Pal}_7})$ be a function such that 
 $h(a_1) = 0$ and $h(a_2) = 1$. 
  Furthermore let $h^{*}: V(X) \rightarrow V(\overrightarrow{{\rm Pal}_7})$ be a function such that 
 $h^{*}(a_1) = h^{*}(a_2) = 0$. 
  Now let $\overrightarrow{X}$ be an orientation of $X$. Moreover, let $\overrightarrow{X'}$ be the orientation of $X$ obtained from $\overrightarrow{X}$ by pushing the vertex $x$. 
  Without loss of generality assume that the arc $a_1x$ is present in $\overrightarrow{X}$.
 Therefore $\overrightarrow{X}$ is of one of the following four types below. 
 For each type we list a number of observations following, notably, from the arc-transitivity of $\overrightarrow{{\rm Pal}_7}$.

 \medskip

\noindent \textit{Type 1:} $\overrightarrow{X}$ has the arcs $a_1x, a_2x$ and $bx$. 
Observe that for any $l \in V(\overrightarrow{{\rm Pal}_7}) \setminus \{2,4,6\}$  
it is possible to extend $h$ to  a homomorphism from $\overrightarrow{X}$ or $\overrightarrow{X'}$ 
to $\overrightarrow{{\rm Pal}_7}$ such that $h(b) = l$. 
Moreover, for any $l \in V(\overrightarrow{{\rm Pal}_7})$  
it is possible to extend $h^*$ to  a homomorphism from $\overrightarrow{X}$ or $\overrightarrow{X'}$ 
to $\overrightarrow{{\rm Pal}_7}$ such that $h^{*}(b) = l$. 

 \medskip

\noindent \textit{Type 2:} $\overrightarrow{X}$ has the arcs $a_1x, a_2x$ and $xb$. 
Observe that for any $l \in V(\overrightarrow{{\rm Pal}_7}) \setminus \{0,1\}$  
it is possible to extend $h$ to  a homomorphism from $\overrightarrow{X}$ or $\overrightarrow{X'}$ 
to $\overrightarrow{{\rm Pal}_7}$ such that $h(b) = l$. 
Moreover, for any $l \in V(\overrightarrow{{\rm Pal}_7}) \setminus \{0\}$  
it is possible to extend $h^*$ to  a homomorphism from $\overrightarrow{X}$ or $\overrightarrow{X'}$ 
to $\overrightarrow{{\rm Pal}_7}$ such that $h^{*}(b) = l$. 

 \medskip

\noindent \textit{Type 3:} $\overrightarrow{X}$ has the arcs $a_1x, xa_2$ and $bx$. 
Observe that for any $l \in V(\overrightarrow{{\rm Pal}_7}) \setminus \{1\}$  
it is possible to extend $h$ to a homomorphism from $\overrightarrow{X}$ or $\overrightarrow{X'}$ 
to $\overrightarrow{{\rm Pal}_7}$ such that $h(b) = l$.

 \medskip

\noindent \textit{Type 4:} $\overrightarrow{X}$ has the arcs $a_1x, xa_2$ and $xb$. 
Observe that for any $l \in V(\overrightarrow{{\rm Pal}_7}) \setminus \{0\}$  
it is possible to extend $h$ to  a homomorphism from $\overrightarrow{X}$ or $\overrightarrow{X'}$ 
to $\overrightarrow{{\rm Pal}_7}$ such that $h(b) = l$. 
 
 \medskip
 
  Assume now that $\overrightarrow{H}$ contains the configuration depicted in Figure~\ref{fig subcubic}(iv).
 Let $\overrightarrow{H_1}$ be the oriented graph obtained from 
 $\overrightarrow{H}$ by  deleting the vertex $b_0$,  
   $\overrightarrow{H_2}$ be the oriented graph obtained from $\overrightarrow{H_1}$ by 
    deleting the set of vertices $\{x_1, x_2, x_3\}$, and $\overrightarrow{H_3}$ be the oriented graph 
    obtained from 
 $\overrightarrow{H_2}$ by deleting the set of vertices $\{a_{11}, a_{12}\}$.
 By minimality, $\overrightarrow{H_1}$ admits a pushable homomorphism $f_1$ to 
$\overrightarrow{{\rm Pal}_7}$. Up to replacing $\overrightarrow{H}$ by a push equivalent oriented graph, we may assume that $f_1$ is an oriented homomorphism. Let $f_2$ and $f_3$ be the restriction of $f_1$ to $\overrightarrow{H_2}$ and $\overrightarrow{H_3}$, respectively. 

In what follows, we say that the vertex $x_i$  \textit{is Type-$j$},
for some $j\in\{1,2,3,4\}$, if the oriented graph induced by $\{a_{i1}, a_{i2}, x_i, b_0\}$ is the same as  the Type-$j$ orientation of $\overrightarrow{X}$ or $\overrightarrow{X'}$ where $x_i$ plays the role of $x$ and $b_0$ plays the role of $b$   for $i \in \{1,2,3\}$ and $j \in \{1,2,3,4\}$. 
Note that if $x_i$ is Type-1 (Type-3, respectively), then pushing the vertex $b_0$ will turn it to 
Type-2 (Type-4, respectively), and vice versa. Now we want to extend $f_2$ or $f_3$ to 
a pushable homomorphism from $\overrightarrow{H}$ to $\overrightarrow{{\rm Pal}_7}$.
Note that from the above we know that 
if $x_i$ is Type-1 then it may forbid at most three values for $f(b_0)$,
if $x_i$ is Type-2 then it may forbid at most two values for $f(b_0)$, and
if $x_i$ is Type-3 or Type-4 then it may forbid at most one value for $f(b_0)$.
Moreover, we may push $b_0$ to ensure that at most one vertex among $\{x_1, x_2, x_3\}$ is Type-1. 
Therefore, if any of them is Type-3 or Type-4, then we will be able to extend 
$f_2$  to 
a pushable homomorphism from $\overrightarrow{H}$ to $\overrightarrow{{\rm Pal}_7}$.

The only bad situation is to have one Type-1 vertex and two Type-2 vertices among $\{x_1, x_2, x_3\}$. 
We here consider two cases:

\medskip

\noindent \textit{Case 1:}  If $x_1$ is Type-1, then we simply push $a_{11}$ and $a_{12}$ to make it 
Type-2. Then we are able to extend 
$f_3$  to 
a pushable homomorphism from $\overrightarrow{H}$ to $\overrightarrow{{\rm Pal}_7}$.

\medskip

\noindent \textit{Case 2:}  Assume that $x_2$ is Type-1 and $x_1,x_3$ are Type-2. Without loss of generality assume that $f_1(a_{21}) = 0$ and $f_1(a_{22}) = 1$ (since if $f_1(a_{21})=f_1(a_{22})$, then $x_2$ forbids only two values for $f(b_0)$). Therefore, $x_2$ forbids the values $2, 4, 6$ for $f(b_0)$.

By definition of type 2, $x_1$ (resp. $x_3$) forbids the values $f_1(a_{11}),f_1(a_{12})$ (resp. $f_1(a_{31}),f_1(a_{32})$). We can then extend $f_2$ to a pushable homomorphism from $\overrightarrow{H}$ to $\overrightarrow{{\rm Pal}_7}$ unless $\{f_1(a_{11}), f_1(a_{12}), f_1(a_{31}), f_1(a_{32})\} = \{0,1,3,5\}$. 

For $S \subseteq \{a_{11},a_{12}\}$, we denote by $\overrightarrow{H_2}(S)$ the graph obtained from $\overrightarrow{H_2}$ by pushing the vertices of $S$. Note that for each choice of $S$, we can extend $f_3$ to an oriented homomorphism $f_S$ from $\overrightarrow{H_2}(S)$ to $\overrightarrow{{\rm Pal}_7}$. We may even ensure that for every vertex $v\notin S$, we have $f_S(v)=f_2(v)$ and, for $v\in\{a_{11},a_{12}\}$,  $f_{\{a_{11},a_{12}\}}(v)=f_{\{v\}}(v)$. We show that there exists a choice of $S$ such that $f_S$ can be extended to $\overrightarrow{H}(S)$. 

Let $S=\{a_{11},a_{12}\}$, and observe that $x_1$ has Type-1 in $\overrightarrow{H}(S)$.
\begin{itemize}
    \item If $(f_1(a_{11}), f_1(a_{12})) = (f_S(a_{12}), f_S(a_{11}))$, then due to the property of Type-1 vertices, $x_1$ does not forbid the values $f_S(a_{11})=f_1(a_{12})$ and $f_S(a_{12})=f_1(a_{11})$ for $b_0$ anymore. In particular, we can extend $f_S$ to a pushable homomorphism from $\overrightarrow{H}$ to $\overrightarrow{{\rm Pal}_7}$.
    \item If $f_1(a_{11})\neq f_S(a_{12})$, then we claim that $f_{\{a_{12}\}}$ can be extended to $\overrightarrow{H}(\{a_{12}\})$. Indeed, in $\overrightarrow{H}(\{a_{12}\})$, $x_1$ is Type-3 or Type-4. In particular, we can extend $f_{\{a_{12}\}}$ unless $f_{\{a_{12}\}}(a_{12})=f_{\{a_{12}\}}(a_{11})$. However, note that we have $f_S(a_{12})=f_{\{a_{12}\}}(a_{12})$ and $f_{\{a_{12}\}}(a_{11})=f_2(a_{11})=f_1(a_{11})$ by definition of $f_S$. This is impossible by hypothesis.
    \item Otherwise, we have $f_1(a_{12})\neq f_S(a_{11})$. Similarly to the previous item, we claim that $f_{\{a_{11}\}}$ can be extended to $\overrightarrow{H}(\{a_{11}\})$. Indeed, $x_1$ is again Type-3 or Type-4, so $f_{\{a_{11}\}}$ can be extended unless $f_{\{a_{11}\}}(a_{11})=f_{\{a_{11}\}}(a_{12})$, which is again impossible by hypothesis.
\end{itemize}
In each case, we can thus extend $f_3$ to a pushable homomorphism from $\overrightarrow{H}$ to $\overrightarrow{{\rm Pal}_7}$, which concludes the proof.
 \end{proof}

\section{Proof of Theorem~\ref{th mad3}}\label{sec 3}
 
The lower bound follows from the existence of oriented graphs with maximum average degree less than $3$ and pushable chromatic number~$5$,
such as the one depicted in Figure~\ref{orientable coloring girth 4}(ii).
 To prove the upper bound of  Theorem~\ref{th mad3}, we show below that every oriented graph with maximum average degree less than $3$ admits a pushable homomorphism to the Paley tournament 
 $\overrightarrow{{\rm Pal}_7}$ on seven vertices.
Towards a contradiction, let us assume that there exists $\overrightarrow{H}$, a minimum (with respect to number of vertices) oriented graph having ${\rm mad}(\overrightarrow{H}) < 3$ that does not admit a pushable homomorphism to 
$\overrightarrow{{\rm Pal}_7}$.   Note that $\overrightarrow{H}$ must be connected due to the minimality condition. 
We prove below that, by minimality, none of the configurations depicted in Figure~\ref{fig mad3} can appear in $\overrightarrow{H}$.

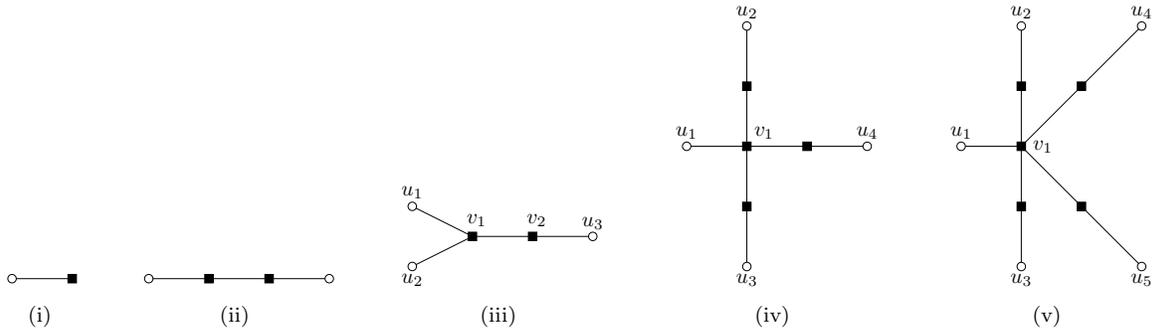
\begin{figure} 
\centering
\subfloat[]{
\scalebox{0.8}{
\begin{tikzpicture}
\path (0,6) edge (1,6);
\filldraw [black,draw,fill=white] (0,6) circle (2pt) {node[below]{}};
\filldraw [black] ([xshift=-2pt,yshift=-2pt]1,6) rectangle ++(4pt,4pt) {node[above]{}};
\end{tikzpicture}
}}
\hspace{10pt}
\subfloat[]{
\scalebox{0.8}{
\begin{tikzpicture}
\draw[-] (2,6) -- (5,6);
\filldraw [black,draw,fill=white] (2,6) circle (2pt) {node[below]{}};
\filldraw [black] ([xshift=-2pt,yshift=-2pt]3,6) rectangle ++(4pt,4pt) {node[above]{}};
\filldraw [black,draw,fill=white] (5,6) circle (2pt) {node[below]{}};
\filldraw [black] ([xshift=-2pt,yshift=-2pt]4,6) rectangle ++(4pt,4pt) {node[above]{}};
\end{tikzpicture}
}}
\hspace{10pt}
\subfloat[]{
\scalebox{0.8}{
\begin{tikzpicture}
\draw[-] (7,6) -- (9,6);
\draw[-] (7,6) -- (6,6.5);
\draw[-] (7,6) -- (6,5.5);
\filldraw [black,draw,fill=white] (6,6.5) circle (2pt) {node[above]{$u_1$}};
\filldraw [black,draw,fill=white] (6,5.5) circle (2pt) {node[below]{$u_2$}};
\filldraw [black,draw,fill=white] (9,6) circle (2pt) {node[above]{$u_3$}};
\filldraw [black] ([xshift=-2pt,yshift=-2pt]7,6) rectangle ++(4pt,4pt) {node[above,yshift=-1pt]{$v_1$}};
\filldraw [black] ([xshift=-2pt,yshift=-2pt]8,6) rectangle ++(4pt,4pt) {node[above,yshift=-1pt]{$v_2$}};
\end{tikzpicture}
}}
\hspace{10pt}
\subfloat[]{
\scalebox{0.8}{
\begin{tikzpicture}
\draw (1,2) -- (4,2);
\draw (2,0) -- (2,4);
\filldraw [black,draw,fill=white] (1,2) circle (2pt) {node[above]{$u_1$}};
\filldraw [black,draw,fill=white] (2,4) circle (2pt) {node[above]{$u_2$}};
\filldraw [black,draw,fill=white] (2,0) circle (2pt) {node[below]{$u_3$}};
\filldraw [black,draw,fill=white] (4,2) circle (2pt) {node[above]{$u_4$}};
\filldraw [black] ([xshift=-2pt,yshift=-2pt]2,1) rectangle ++(4pt,4pt) {node[above]{}};
\filldraw [black] ([xshift=-2pt,yshift=-2pt]2,2) rectangle ++(4pt,4pt) {node[above right,yshift=-2pt,xshift=-2pt]{$v_1$}};
\filldraw [black] ([xshift=-2pt,yshift=-2pt]2,3) rectangle ++(4pt,4pt) {node[above]{}};
\filldraw [black] ([xshift=-2pt,yshift=-2pt]3,2) rectangle ++(4pt,4pt) {node[above]{}};
\end{tikzpicture}
}}
\hspace{10pt}
\subfloat[]{
\scalebox{0.8}{
\begin{tikzpicture}
\draw[-] (5,2) -- (6,2);
\draw[-] (6,0) -- (6,4);
\draw[-] (6,2) -- (8,4);
\draw[-] (6,2) -- (8,0);
\filldraw [black,draw,fill=white] (5,2) circle (2pt) {node[above]{$u_1$}};
\filldraw [black,draw,fill=white] (6,4) circle (2pt) {node[above]{$u_2$}};
\filldraw [black,draw,fill=white] (6,0) circle (2pt) {node[below]{$u_3$}};
\filldraw [black,draw,fill=white] (8,4) circle (2pt) {node[above]{$u_4$}};
\filldraw [black,draw,fill=white] (8,0) circle (2pt) {node[below]{$u_5$}};
\filldraw [black] ([xshift=-2pt,yshift=-2pt]6,1) rectangle ++(4pt,4pt) {node[above]{}};
\filldraw [black] ([xshift=-2pt,yshift=-2pt]6,2) rectangle ++(4pt,4pt) {node[right,yshift=-2pt]{$v_1$}};
\filldraw [black] ([xshift=-2pt,yshift=-2pt]6,3) rectangle ++(4pt,4pt) {node[above]{}};
\filldraw [black] ([xshift=-2pt,yshift=-2pt]7,1) rectangle ++(4pt,4pt) {node[above]{}};
\filldraw [black] ([xshift=-2pt,yshift=-2pt]7,3) rectangle ++(4pt,4pt) {node[above]{}};
\end{tikzpicture}
}}

\caption{Configurations needed for proving Theorem~\ref{th mad3}. Black square vertices are vertices whose full neighborhood is part of the configuration. 
White circle vertices are vertices that might have other neighbors outside the configuration, or be equal.}\label{fig mad3}

\end{figure}	

\begin{lemma}\label{lemma:configs-mad}
None of the configurations (i)-(v) depicted in Figure~\ref{fig mad3} can be 
contained in $\overrightarrow{H}$. 
\end{lemma}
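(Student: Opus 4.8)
The plan is to argue by minimality in the standard ``delete and extend'' fashion. For each configuration I assume $\overrightarrow{H}$ contains it, delete the black square vertices (those whose whole neighbourhood lies inside the configuration, which all have degree~$1$ or~$2$ apart from the central vertex $v_1$ of (iv) and (v)), and observe that the resulting oriented graph $\overrightarrow{H'}$ is a proper subgraph of $\overrightarrow{H}$, hence still satisfies ${\rm mad}(\overrightarrow{H'})<3$. By minimality $\overrightarrow{H'}\xrightarrow{{\rm push}}\overrightarrow{{\rm Pal}_7}$, and after replacing $\overrightarrow{H}$ by a push-equivalent oriented graph I may assume the corresponding map $f$ is an honest homomorphism on $\overrightarrow{H'}$. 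It then remains to extend $f$ to the deleted vertices; since these have small degree I am free to push them individually, so for each deleted degree-$2$ vertex $w$ with neighbours $x,y$ the admissible images are exactly $N^{(\sigma,\tau)}(\{f(x),f(y)\})$, which by Property $P(2,2)$ (Lemma~\ref{P7 P22}) has size at least~$2$ whenever $f(x)\neq f(y)$.

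For configuration (i) the deleted vertex has degree~$1$, and Property $P(1,6)$ gives six admissible images, so the extension is immediate. For configuration (ii), deleting the two adjacent degree-$2$ vertices $v_1,v_2$ leaves the path $u_1-v_1-v_2-u_2$ to be mapped; writing $a=f(u_1)$ and $b=f(u_2)$, I pick $f(v_2)$ in $N^{\alpha}\!\big(N^{\beta}(a)\big)\cap N^{\gamma}(b)$, which by \eqref{eq first neighborhood} contains at least six elements intersected with a $3$-set and is therefore nonempty, and then read off $f(v_1)$; here even the coincidence $a=b$ is harmless. Configurations (iv) and (v) are handled by first choosing the image of the central vertex $v_1$: since $v_1$ is adjacent to the single external vertex $u_1$ and to three (respectively four) degree-$2$ arms ending in $u_2,u_3,u_4$ (respectively also $u_5$), I only need $f(v_1)$ to differ from the at most four (respectively five) fixed values $f(u_1),\dots$, which leaves at least three (respectively two) choices among the seven vertices of $\overrightarrow{{\rm Pal}_7}$; each arm then extends independently by $P(2,2)$, the push of its degree-$2$ vertex absorbing the orientation forced by the push of $v_1$.

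The delicate case is configuration (iii), where the degree-$3$ vertex $v_1$ has \emph{two} external neighbours $u_1,u_2$ (besides its degree-$2$ neighbour $v_2$, which in turn meets $u_3$). Here $f(v_1)$ must lie in $N^{(\sigma_1,\sigma_2)}(\{f(u_1),f(u_2)\})$, and Property $P(2,2)$ only guarantees a valid choice when $f(u_1)\neq f(u_2)$; one then selects $f(v_1)$ among the $\geq 2$ options so as to avoid $f(u_3)$ and finishes with $v_2$ by a second application of $P(2,2)$. I expect the genuine obstacle to be the coincidence $f(u_1)=f(u_2)$, which can occur precisely because the white vertices may be equal or non-adjacent (so that a homomorphism need not separate them): if in addition the two arcs at $v_1$ point oppositely, i.e.\ $\overrightarrow{H}$ contains a directed path $u_1\to v_1\to u_2$, then extending $f$ would require a vertex that is simultaneously an out- and an in-neighbour of $f(u_1)=f(u_2)$, that is, a directed $2$-cycle in $\overrightarrow{{\rm Pal}_7}$, which does not exist.

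To clear this last hurdle I would \emph{not} add the edge $u_1u_2$ (replacing the length-two path $u_1-v_1-u_2$ by a single edge can raise the maximum average degree above~$3$ and so breaks the minimality hypothesis); instead I would exploit the remaining freedom more carefully, pushing $v_1$ and $v_2$ and, if necessary, re-selecting the homomorphism $f$ so that $u_1$ and $u_2$ receive distinct images, or arguing that the resulting closed-up small graph admits a pushable homomorphism directly, in the spirit of the auxiliary Lemmas~\ref{lem support i} and~\ref{lem support ii} used for $T_3$ and $T_4$ in the subcubic case. The slack coming from the ``$\geq 6$'' in \eqref{eq first neighborhood} together with the identity \eqref{eq second neighborhood} for length-three walks is what should make all of these degenerate sub-cases go through.
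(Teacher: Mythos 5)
Your handling of configurations (i), (ii), (iv) and (v) is sound. For (i) and (ii) it essentially matches the paper (which invokes Equation~(\ref{eq second neighborhood}) for (ii), while you use Equation~(\ref{eq first neighborhood}) plus a counting argument; both work). For (iv) and (v) your route differs slightly from the paper's: you force $f(v_1)$ to avoid all of $f(u_1),\dots,f(u_5)$ and then finish each arm with Property $P(2,2)$, whereas the paper pushes $v_1$ so that at most two arms are directed $2$-paths and then applies Equation~(\ref{eq first neighborhood}); your version is valid and arguably cleaner. You are also right that one cannot simply add the arc $u_1u_2$ in configuration (iii), since contracting the $2$-path into an edge need not preserve ${\rm mad}<3$.

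The genuine gap is configuration (iii): you have located the obstruction exactly, but none of your proposed escapes resolves it. If you delete both $v_1$ and $v_2$, minimality may return a homomorphism with $f(u_1)=f(u_2)$ while $u_1v_1$ and $v_1u_2$ form a directed $2$-path; then no image for $v_1$ exists, as $\overrightarrow{{\rm Pal}_7}$ contains no directed $2$-cycle. Pushing $v_1$ or $v_2$ cannot help, because pushing $v_1$ turns $u_1 \rightarrow v_1 \rightarrow u_2$ into $u_2 \rightarrow v_1 \rightarrow u_1$, still a directed path, so the obstruction is push-invariant; ``re-selecting $f$ so that $u_1$ and $u_2$ receive distinct images'' is not something minimality gives you any control over; and an argument in the spirit of Lemmas~\ref{lem support i} and~\ref{lem support ii} is unavailable, since configuration (iii) is a local piece of an arbitrary graph $\overrightarrow{H}$ (possibly large, with high-degree vertices elsewhere), not the whole graph. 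The missing idea, which is what the paper does, is to delete \emph{only} $v_2$ and keep $v_1$: minimality then hands you an image $f(v_1)$ that is automatically compatible with $f(u_1)$ and $f(u_2)$, so the coincidence you feared never needs to be handled. The only remaining difficulty is the possible collision $f(v_1)=f(u_3)$, which by Equation~(\ref{eq first neighborhood}) blocks the extension to $v_2$ only when the path $v_1$--$v_2$--$u_3$ is directed. This is repaired by re-choosing $f(v_1)$: the set of admissible images of $v_1$ (allowing $v_1$ to be pushed) has at least two elements by Property $P(2,2)$ applied to $\{f(u_1),f(u_2)\}$ (or trivially when $f(u_1)=f(u_2)$, where it has six); one of them is the current value $f(u_3)$, so another one, necessarily distinct from $f(u_3)$, can be selected, after which Equation~(\ref{eq first neighborhood}) extends $f$ to $v_2$.
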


\begin{proof}
\noindent $(i)$ Suppose $\overrightarrow{H}$ has a degree-$1$ vertex $u$ with neighbor $v$, where $u$ is an $\alpha$-neighbor of $v$ for $\alpha\in\{+,-\}$.
By minimality, $\overrightarrow{H} - \{u\}$ admits a pushable homomorphism $f$ to 
$\overrightarrow{{\rm Pal}_7}$. 
We can extend $f$ to $\overrightarrow{H}$ by setting 
\begin{equation*}
f(u) =
\begin{cases}
f(v)+1 & \text{ if } \alpha = + \\
f(v)+3 & \text{ if } \alpha = -,
\end{cases}
\end{equation*}
a contradiction.
Therefore, $\overrightarrow{H}$ cannot  contain the configuration depicted in Figure~\ref{fig mad3}(i).

\medskip

\noindent $(ii)$  Assume $\overrightarrow{H}$ has two adjacent degree-$2$ vertices $u$ and $v$.
By minimality, $\overrightarrow{H} - \{u,v\}$ admits a pushable homomorphism $f$ to 
$\overrightarrow{{\rm Pal}_7}$. 
According to Equation (\ref{eq second neighborhood}), $f$ can be extended to a pushable homomorphism from $\overrightarrow{H}$
to $\overrightarrow{{\rm Pal}_7}$, a contradiction.
Therefore, $\overrightarrow{H}$ cannot  contain the configuration depicted in Figure~\ref{fig mad3}(ii).

\medskip

\noindent $(iii)$ Assume $\overrightarrow{H}$ contains the configuration depicted in Figure~\ref{fig mad3}(iii) (where, here and further, we deal with the vertices of a configuration through the notation introduced in the corresponding figure). 
By minimality, $\overrightarrow{H} - \{v_2\}$ admits a pushable homomorphism $f$ to 
$\overrightarrow{{\rm Pal}_7}$. 
Suppose that 
$v_1 \in N^{(\alpha_1, \alpha_2)}(u_1,u_2)$. 
 If $f(v_1) = f(u_3)$, then push $v_1$ and update $f(v_1)$ to some value 
 in $N^{(\bar{\alpha}_1,\bar{\alpha}_2)}(f(u_1),f(u_2))$ (this is possible since $\overrightarrow{{\rm Pal}_7}$ has property $P(2,2)$). Note that the resulting updated $f$ remains a
 pushable homomorphism. Additionally, now we have  $f(v_1) \neq f(u_3)$. It is now possible to extend $f$ to a pushable homomorphism from $\overrightarrow{H}$
to $\overrightarrow{{\rm Pal}_7}$ due to Equation (\ref{eq first neighborhood}) on page~\pageref{eq first neighborhood}, a contradiction.
Therefore, $\overrightarrow{H}$ cannot  contain the configuration depicted in Figure~\ref{fig mad3}(iii).

\medskip

\noindent $(iv)-(v)$  Assume $\overrightarrow{H}$ contains one of the configurations depicted in Figure~\ref{fig mad3}(iv) and (v). Let $S$ be the set of black vertices of the configuration. By minimality, $\overrightarrow{H} - S$ admits a pushable homomorphism 
$f$ to 
$\overrightarrow{{\rm Pal}_7}$. Now, push $v_1$, if needed, so that at most two of 
the $2$-paths connecting $v_1$ to the $u_i$'s are directed $2$-paths. 
Without loss of generality, assume that, in the worst-case scenario, $u_2$ and $u_3$ are the vertices that are connected by a directed $2$-path to $v_1$. 
Assume $v_1$ is an $\alpha$-neighbor of $u_1$ for some $\alpha \in \{+,-\}$.
Choose a vertex $i \in N^{\alpha}(f(u_1)) \setminus \{f(u_2), f(u_3)\}$ and assign $f(v_1) = i$. 
Now we are able to  extend $f$ to a pushable homomorphism from $\overrightarrow{H}$
to $\overrightarrow{{\rm Pal}_7}$ due to Equation~(\ref{eq first neighborhood}) on page~\pageref{eq first neighborhood}, a contradiction.
Therefore, $\overrightarrow{H}$ cannot contain the configurations depicted in Figure~\ref{fig mad3}(iv) and (v).
\end{proof}
 
 We are now ready to prove Theorem~\ref{th mad3}, which we do using the so-called \textit{discharging method}.

\begin{proof}[Proof of Theorem~\ref{th mad3}]
Let us assign the charge ${\rm ch}(v) = d(v)$  to each vertex $v$ of $\overrightarrow{H}$. 
 Since ${\rm mad}(\overrightarrow{H}) < 3$, the total sum of the charges is strictly less than $3|V(\overrightarrow{H})|$, that is, 
 $$\sum_{v \in V(\overrightarrow{H})} {\rm ch}(v) < 3|V(\overrightarrow{H})|.$$ 
 
 Now apply the following discharging procedure: 
Every vertex $v$ of $\overrightarrow{H}$
with degree at least
$4$ sends $1/2$ to each of its neighbors with degree~$2$. 
We show in the following that for every vertex $v$ the resulting charge ${\rm ch}^*(v)$ is at least $3$, which
contradicts the assumption ${\rm mad}(H)< 3$. 
We consider the vertices $v$ accordingly to their degree, which satisfies $d(v)>1$ by Lemma~\ref{lemma:configs-mad}(i).

\begin{itemize}
\item $d(v) = 2$. Since $\overrightarrow{H}$ does not contain the configurations depicted in Figures~\ref{fig mad3}(ii) and~\ref{fig mad3}(iii), the neighbors of $v$ have degree at least 4 and thus $v$ does not send any charge. Furthermore, $v$ receives exactly $2 \times 1/2 = 1$. Thus, ${\rm ch}^*(v) = 2+1=3$.

\item $d(v)=3$. Since $\overrightarrow{H}$ does not contain the configuration depicted in Figure~\ref{fig mad3}(iii), $v$ does not send any charge. Furthermore, $v$ does not receive any charge. 
Therefore, we have 
 ${\rm ch}^*(v) =  3$.

\item $d(v) = 4$. Since $\overrightarrow{H}$ does not contain the configuration depicted in Figure~\ref{fig mad3}(iv), $v$ sends at most $2 \times 1/2 = 1$. Therefore, we have 
 ${\rm ch}^*(v) \geq  4-1 = 3$.

 \item $d(v) = 5$. Since $\overrightarrow{H}$ does not contain the configuration depicted in Figure~\ref{fig mad3}(v), $v$ sends at most $3 \times 1/2 = 3/2$. Therefore, we have 
 ${\rm ch}^*(v) \geq  5-3/2 = 7/2 > 3$.
 
  \item $d(v) = k \geq 6$.  $v$ sends at most $k \times 1/2 = k/2$ charges. Therefore, we have 
 ${\rm ch}^*(v) \geq  k-k/2 = k/2 \geq 6/2 = 3$.
\end{itemize}

 Therefore, every vertex $v$ of $\overrightarrow{H}$ gets final charge ${\rm ch}^*(v)$ at least 3. Hence
  $$3|V(\overrightarrow{H})| > \sum_{v \in V(\overrightarrow{H})} {\rm ch}(v)=\sum_{v \in V(\overrightarrow{H})} {\rm ch}^*(v) \geq 3|V(\overrightarrow{H})|,$$
  since no charge was created after assigning the initial charges, which is a contradiction.
  Thus every oriented graph with maximum average degree less than $3$ admits a pushable 
  homomorphism to $\overrightarrow{{\rm Pal}_7}$.  
\end{proof}

 \section{Conclusions and perspectives} \label{section:conclusion}
 
In this work, we have studied the pushable chromatic number of several classes of graphs with degree constraints.
We have provided bounds for graphs with large maximum degree $\Delta$ (Theorem~\ref{Push_chrom_th main}), graphs with maximum degree $\Delta \leq 3$ (Theorem~\ref{th subcubic}),
and graphs with maximum average degree less than~$3$ (Theorem~\ref{th mad3}). None of our results is tight however, and a natural direction for further work could thus be to tighten our bounds.
In particular, we wonder whether there exist subcubic graphs or graphs with maximum average degree less than~$3$ with pushable chromatic reaching the upper bounds we have established.
Let us mention that we first checked the proof of Theorem~\ref{th subcubic} through computer programs (before coming up with the presented matrices), 
and that we did not find any tournament on six vertices for which all configurations in Figure~\ref{fig subcubic} are reducible. 
Also, although we managed to generate many graphs with maximum average degree less than~$3$ (planar graphs with girth at least~$6$, respectively) and check their pushable chromatic number via computer programs,
we were not able to spot one with pushable chromatic~$6$ ($5$, respectively). These two facts might be good hints regarding the maximum value of the pushable chromatic number of these families of graphs.
 
Another interesting direction for further research on the topic could be to generalize our results to graphs with given maximum degree $\Delta$ more than~$3$, 
graphs with given maximum average degree, and planar graphs with given girth. In other words, we wonder how these graph parameters influence the pushable chromatic number.
We would be quite interested, for instance, in having bounds for graphs with maximum degree $\Delta$ at most~$4$.
 
Finally, several recent works have established that, when it comes to coloring, pushable graphs and signed graphs sometimes have very comparable behaviors.
Let us recall that a \textit{signed graph} is a graph in which each edge is either positive or negative, and that comes with a vertex-resigning operation which consists in switching the sign of all edges incident with a vertex.
It would be interesting to know if, in general, graphs with degree constraints have their pushable chromatic number and signed chromatic number behaving the same.
We will propose a study of this very question, inspired from our results in the current work, in a forthcoming paper.

\bibliographystyle{abbrv}
\bibliography{NSS14}

\end{document}